\newtheoremstyle{myplain}{}{}{\it}
{0pt}{\scshape}{}{ }{\thmname{#1}\thmnumber{ #2}\thmnote{ (#3)}}
\newtheoremstyle{mydefinition}{}{}{}
{0pt}{\scshape}{}{ }{\thmname{#1}\thmnumber{ #2}\thmnote{ (#3)}}
\theoremstyle{myplain}
	\newtheorem{Def}{Definition}[section]
        \newtheorem{Lem}[Def]{Lemma}
        \newtheorem{theo}[Def]{Theorem}
        \newtheorem{prop}[Def]{Proposition}
        \newtheorem{rem}[Def]{Remark}
        \newtheorem{hyp}[Def]{Hypothesis}
        \newtheorem{cor}[Def]{Corollary}
\newcommand{\skp}[2]{\mbox{$\left\langle #1\, , \, #2\right\rangle$}}
\newcommand{\skpd}[2]{\mbox{$\left\langle #1\, ,\,#2\right\rangle_{\ell^2}$}}
\newcommand{\skpT}[2]{\mbox{$\left\langle #1\, ,\,#2\right\rangle_{\mathbb T}$}}
\newcommand{\skpR}[2]{\mbox{$\left\langle #1\, ,\,#2\right\rangle_{L^2}$}}
\newcommand{\natop}[2]{\genfrac{}{}{0pt}{}{#1}{#2}}
\DeclareMathOperator{\Span}{span}
\DeclareMathOperator{\rank}{rank}
\DeclareMathOperator{\supp}{supp}
\DeclareMathOperator{\Op}{Op}
\DeclareMathOperator{\loc}{loc}
\newcommand{\id}{\mathbf{1}}
\numberwithin{equation}{section}
\newcommand{\beqa}{\begin{eqnarray*}}
\newcommand{\eeqa}{\end{eqnarray*}}
\renewcommand{\hat}{\widehat}
\newcommand{\bauf}{\begin{itemize}}
\newcommand{\eauf}{\end{itemize}}
\newcommand{\be}{\begin{equation}}
\newcommand{\ee}{\end{equation}}
\newcommand{\ben}{\begin{enumerate}}
\newcommand{\een}{\end{enumerate}}
\newcommand{\ra}{\rightarrow}
\newcommand{\ep}{\varepsilon}
\newcommand{\R}{{\mathbb R} }
\newcommand{\Z}{{\mathbb Z}}
\newcommand{\C}{{\mathbb C}}
\newcommand{\N}{{\mathbb N}}
\newcommand{\T}{{\mathbb T}}
\newcommand{\disk}{(\varepsilon {\mathbb Z})^d}
\newcommand{\Ce}{\mathscr C}
\newcommand{\De}{\mathscr D}
\title{Harmonic approximation of difference operators}
\author{Markus Klein \and Elke Rosenberger}
\address{Markus Klein\\ Universit\"at Potsdam\\ Institut f\"ur Mathematik \\ Am Neuen Palais 10\\ 14469 Potsdam }
\email{mklein@math.uni-potsdam.de}
\address{
Elke Rosenberger\\ Universit\"at Potsdam\\ Institut f\"ur Mathematik \\ Am Neuen Palais 10\\ 14469 Potsdam}
\email{erosen@rz.uni-potsdam.de}
\date{\today}
\keywords{difference operator, harmonic approximation, micorolocal theory, periodic symbols}
\begin{document}

\begin{abstract}
For a general class of difference operators $H_\ep = T_\ep + V_\ep$ on $\ell^2(\disk)$,
where $V_\ep$ is a multi-well potential and $\ep$ is a small parameter, we
analyze the asymptotic behavior as $\ep\to 0$ of the (low-lying) eigenvalues and
eigenfunctions. We show that the first
$n$ eigenvalues of $H_\ep$ converge to
the first $n$ eigenvalues of the direct sum of harmonic oscillators on $\R^d$ located
at the several wells. Our proof is microlocal.
\end{abstract}

\maketitle

\section{Introduction}

The central topic of this paper is the investigation of a rather general class of families of
difference operators $H_\ep$ on the Hilbert space $\ell^2(\disk)$, as the small parameter  $\ep>0$
tends to zero. The operator $H_\ep$ is given by
\begin{align} \label{Hepein}
H_\ep = (T_\ep + V_\ep ),  \quad&\text{where}\quad
T_\ep  = \sum_{\gamma\in\disk} a_\gamma \tau_\gamma ,\\
(\tau_\gamma u)(x) = u(x+\gamma) \quad &\text{and}\quad (a_\gamma u)(x) := a_\gamma(x,\ep) u(x) \quad \mbox{for} \quad x,\gamma\in\disk
\end{align}
where $V_\ep$ is a multiplication operator, which in leading order is given by
$V_0 \in \Ce^\infty (\R^d)$.

We will show that the
low lying spectrum of $H_\ep$ on $\ell^2(\disk)$ is in the limit $\ep\to 0$ asymptotically given by
the spectrum of an adapted harmonic oscillator
on $\mathscr{L}^2(\R^d)$.
We remark that the limit $\ep \to 0$ is analog to the semiclassical limit $\hbar \to 0$
for the Schr\"odinger operator
$-\hbar^2 \Delta + V$. The central result of this paper (the validity of the harmonic approximation) is the first
basic step in any WKB-theory for the Schr\"odinger operator (see e.g. Simon \cite{Si1}, Helffer-Sj\"ostrand \cite{hesjo}).
In our case, this basic step is considerably more difficult. The discrete kinetic operator $T_\ep$ is not a local
operator (in particular, not a differential operator). Furthermore, $H_\ep$ and the approximating
harmonic oscillator act on different spaces. We remark that this is in fact crucial: Letting $H_\ep$ act on
$L^2(\R^d)$ would lead to infinite multiplicity of the point spectrum. In addition, the proofs for the Schr\"odinger operator in Simon (\cite{Si1},
\cite{simon}) and Helffer-Sj\"ostrand \cite{hesjo} use special identities for differential operators of second order.
To overcome these difficulties, we use a microlocal approach. The basic theorems necessary for our analysis are
proven in Appendix \ref{symbols}.

This paper is based on the thesis Rosenberger \cite{thesis}. It is the second in a series of papers (see Klein-Rosenberger \cite{roklein}); the aim is to develop an analytic approach
to the semiclassical eigenvalue problem and tunneling for $H_\ep$ which is comparable  in detail and
precision
to the well known analysis for the Schr\"odinger operator (see Simon \cite{Si1}, \cite{simon} and
Helffer-Sj\"ostrand \cite{hesjo}). Our motivation comes from
stochastic problems (see Klein-Rosenberger \cite{roklein}, Bovier-Eckhoff-Gayrard-Klein \cite{begk1}, \cite{begk2},
Baake-Baake-Bovier-Klein \cite{baake}). A large class of
discrete Markov chains analyzed in \cite{begk2}
with probabilistic
techniques falls into the framework of difference operators treated in this article.

We assume

\begin{hyp}\label{hypsatz}
\begin{enumerate}
\item The coefficients $a_\gamma(x, \ep)$ in \eqref{Hepein} are
functions
\begin{equation}\label{agammafunk}
a: \disk \times \R^d \times (0,1] \ra \R\, , \qquad (\gamma, x,
\ep) \mapsto a_\gamma(x,\ep)\, ,
\end{equation}
satisfying the following conditions:
\ben
\item[(i)] They have an
expansion
\begin{equation}\label{agammaexp}
a_\gamma(x,\ep) = a_\gamma^{(0)}(x) + \ep \, a_\gamma^{(1)}(x) +
R^{(2)}_\gamma (x, \ep)\, ,
\end{equation}
where $a_\gamma^{(i)}\in\Ce^\infty(\R^d)$ and $|a_\gamma^{(j)}(x)
- a_\gamma^{(j)}(x+ h)| = O(|h|)$ for $j=0,1$ uniformly with respect to $\gamma\in\disk$ and $x\in\R^d$.
Furthermore $R^{(2)}_\gamma \in\Ce^\infty(\R^d\times (0,1])$ for
all $\gamma\in\disk$.
\item[(ii)] $\sum_\gamma a_\gamma^{(0)}  = 0$ and $a_\gamma^{(0)}
\leq 0$ for $\gamma \neq 0$
\item[(iii)] $a_\gamma(x, \ep) =
a_{-\gamma}(x+\gamma, \ep)$ for $x \in \R^d, \gamma \in \disk$
\item[(iv)] For any $n\in \N$ and $\alpha\in\N^d$ there exists a $C>0$ such that for $j=0,1$
uniformly with respect to $x\in\disk$ and $\ep$
\begin{equation}\label{abfallagamma}
\| \, |\tfrac{.}{\ep}|^n\partial^\alpha_x a^{(j)}_.(x)\|_{\ell_\gamma^2(\disk)}\leq C  \qquad\text{and} \qquad
\|\, |\tfrac{.}{\ep}|^n \partial^\alpha_xR^{(2)}_.(x,\ep)\|_{\ell^2_\gamma(\disk)} \leq C\ep^2\; .
\end{equation}
\item[(v)] $\Span \{\gamma\in\disk\,|\, a^{(0)}_\gamma(x) <0\}= \R^d$ for all $x\in\R^d$.
\een
\item
\ben
\item[(i)] The potential energy $V_\ep$ is the restriction to $\disk$ of a
function $\hat{V}_\ep\in\Ce^\infty (\R^d)$, which has an expansion
\begin{equation}\label{Veps}
\hat{V}_{\ep}(x) =  V_0(x) + \ep \,V_1(x) + R_2(x;\ep) \, ,
\end{equation}
where $V_0, V_1\in\Ce^\infty(\R^d)$, $R_{2}\in \Ce^\infty (\R^d\times (0,\ep_0])$
for some $\ep_0>0$ and
for any compact set $K\subset \R^d$
there exists a constant $C_K$ such that $\sup_{x\in K} |R_{2}(x;\ep)|\leq C_K \ep^{2}$.
\item[(ii)] $V_\ep$ is polynomially bounded and there exist constants $R, C > 0$ such that
$V_\ep(x) > C$ for all $|x| \geq R$ and $\ep\in(0,\ep_0]$.
\item[(iii)]
$V_0\geq 0$ and it takes the value 0 only at a finite number of points
$\{x_j\}_{j=1}^m$, where
its Hessian
\begin{equation}\label{Hessische}
({\tilde A}_{\nu\mu}^j) := \frac{1}{2}\left(\frac{\partial^2 V_0}{\partial x_\nu\partial x_\mu}(x_j)\right)
\end{equation}
is positive definite (i.e. the absolute minima are
non-degenerate). We call the minima $\{x_j\}_{j=1}^m$ of $V_0$ potential wells.
\een
\een
\end{hyp}

We set for $\ep\in (0,1]$
\begin{equation}\label{talsexp}
t(x, \xi; \ep) := \sum_{\gamma\in\disk} a_\gamma (x, \ep) \exp \left(-\frac{i}{\ep}\gamma\cdot\xi\right)\, , \quad
x\in\R^d, \xi\in\T^d (=\R^d/(2\pi)\Z^d)\; ,
\end{equation}
and denote the function on $\R^{2d}\times[0,1)$, which is $2\pi$-periodic with respect to $\xi$ by $t$ as well.
The expansion \eqref{agammaexp} of
$a_\gamma(x, \ep)$ leads to the definition
\begin{align}\label{texpand}
t(x, \xi; \ep) &= t_0 (x,\xi) + \ep\,  t_1(x,\xi) + t_2(x, \xi; \ep)\; ,\qquad\text{with}\\
t_j(x,\xi) &:= \sum_{\gamma\in\disk}\, a_\gamma^j(x) e^{-\frac{i}{\ep}\gamma\xi}\, , \quad j=0,1 \nonumber\\
t_2(x,\xi,\ep) &:= \sum_{\gamma\in\disk} R_\gamma^{(2)}(x,\ep) e^{-\frac{i}{\ep}\gamma\xi}\; .\nonumber
\end{align}

\setcounter{Def}{0}

\begin{hyp}
\ben
\item[(c)] At the minima $x_j$ of $V_0$, we assume that $t_0$ defined in \eqref{texpand} fulfills
\[ t_0(x_j, \xi) >0\, , \quad\text{if} \quad |\xi| >0 \, .\]
\een
\end{hyp}

\begin{rem}\label{talskin}
It follows from (the proof of) Klein-Rosenberger \cite{roklein}, Lemma 1.2,
that under the assumptions given in Hypothesis \ref{hypsatz}:
\ben
\item $t\in \Ce^\infty(\R^d\times \T^d\times [0,1))$ and
$\sup_{x,\xi}|\partial_x^\alpha\partial_\xi^\beta
t(x,\xi,\ep)|\leq C_{\alpha,\beta}$ for all $\alpha,\beta\in\N^d$
uniformly with respect to $\ep$. Moreover $t_0$ and $t_1$ are bounded and
$\sup_{x,\xi}|t_2(x,\xi,\ep)| = O(\ep^2)$.
\item At $\xi=0$, for fixed $x\in\R^d$ the function $t_0$ defined in \eqref{texpand} has an expansion
\begin{equation}\label{kinen}
t_0(x,\xi) = \skp{\xi}{B(x)\xi} + O\left(|\xi|^4\right)\qquad\text{as}\;\; |\xi|\to 0\, ,
\end{equation}
where $B:\R^d\rightarrow\mathcal{M}(d\times d,\R)$ is
positive definite and symmetric. By straightforward calculations one gets
\begin{equation}\label{Bnumu}
B_{\nu\mu}(x) = -\frac{1}{2\ep^2} \sum_{\gamma\in\disk} a_\gamma^{(0)}(x) \gamma_\nu\gamma_\mu\; .
\end{equation}
\item By Hypothesis \ref{hypsatz} (a)(iii) and since the $a_\gamma$ are real,  the operator $T_\ep$ defined in \eqref{Hepein} is symmetric.
In the probabilistic context, which is our main motivation, the latter is a standard reversibility condition while the
former ist automatic for a Markov chain.
Moreover, $T_\ep$ is bounded (uniformly in $\ep$) by condition (a)(iv) and
bounded from below by $-C\ep$ for some $C>0$ by condition (a)(iv),(iii) and (ii).
\item A combination of the expansion \eqref{agammaexp} and the reversibility condition (a)(iii) leads to
the fact that the $2\pi$-periodic function $\R^d\ni\xi\mapsto t_0(x,\xi)$ is even.
\item Since $T_\ep$ is bounded, $H_\ep=T_\ep + V_\ep$ defined in
\eqref{Hepein} possesses a self adjoint realization on the maximal domain of $V_\ep$. Abusing notation, we
shall denote this realization also by $H_\ep$ and its domain by $\De(H_\ep)\subset \ell^2\left(\disk\right)$. The associated symbol is denoted by $h(x,\xi;\ep)$. Clearly, $H_\ep$ commutes with complex conjugation.
\een
\end{rem}

We will use the notation
\begin{equation}\label{agammaunep}
\tilde{a}:\Z^d\times\R^d\ni (\eta,x)\mapsto \tilde{a}_\eta(x) := a_{\ep\eta}^{(0)}(x)\in\R
\end{equation}
and set
\begin{equation}\label{tildehnull}
\tilde{h}_0 (x,\xi):= -h_0(x,i\xi) = {\tilde t}_0(x,\xi) - V_0(x)\,:\, \R^{2d} \ra \R\; ,
\end{equation}
where by Remark \ref{talskin} (d)
\begin{equation}\label{tildetdef}
\tilde{t}_0(x,\xi):= -t_0(x,i\xi) = -\sum_{\eta\in\Z^d} \tilde{a}_\eta(x) \cosh \left(\eta\cdot \xi\right) \; .
\end{equation}

The main result of this paper is the following theorem:

\begin{theo}\label{satz1}
Let $H_\ep$ be an operator satisfying Hypothesis \ref{hypsatz} and let
$A^j:= B_{j}^{\frac{1}{2}}{\tilde A}^jB_{j}^{\frac{1}{2}}$, where $\tilde{A}^j$ is given in \eqref{Hessische}
and $B_{j}= B(x_j)$ is defined in \eqref{kinen}.\\
We denote by
\begin{equation}\label{Kj}
K_j := -\Delta + \skp{x}{A^j x} +  V_1(x_j) + t_1(x_j,0)\, , \qquad j=1,\ldots m
\end{equation}
the self adjoint operators on $L^2\left({\mathbb R}^d\right)$ defined by
Friedrich extension
and set $K := \bigoplus_{j=1}^m K_j$ (which is self adjoint on
$\bigoplus_{j=1}^mL^2\left({\mathbb R}^d\right)$).

Then for any fixed $n\in \N^*$ and $\ep$ sufficiently small, $H_{\ep}$ has at least $n$ eigenvalues.
Counting multiplicity, we denote for $k\in\N^*$ the $k$-th eigenvalue of $K$ by $e_k$ and
the $k$-th eigenvalue of $H_{\ep}$ by $E_k(\ep)$ (ordered
by magnitude).
Then, as $\ep\to 0$,
\begin{equation}\label{limen}
 E_k (\ep)= \ep e_k + O\left(\ep^{\frac{6}{5}}\right)\; .
 \end{equation}
\end{theo}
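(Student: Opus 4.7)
The plan is to prove matching upper and lower bounds on each of the first $n$ eigenvalues $E_k(\ep)$ by the min-max principle, linking $H_\ep$ near each well $x_j$ to $\ep K_j$ through the rescaling $x = x_j + \sqrt\ep\,y$, $\xi = \sqrt\ep\,\eta$. Under this scaling and the symbol expansions \eqref{texpand} and \eqref{Veps}, the principal symbol $t_0+V_0$ evaluated at $(x_j+\sqrt\ep\,y,\sqrt\ep\,\eta)$ is $\ep\bigl(\skp{\eta}{B_j\eta}+\skp{y}{\tilde A^j y}\bigr)+O(\ep^{3/2})$, while the subleading symbols $t_1$ and $V_1$ contribute the constant $\ep(V_1(x_j)+t_1(x_j,0))$; a linear symplectic change of variables $\tilde x = B_j^{-1/2}y$, $\tilde\xi = B_j^{1/2}\eta$ absorbs $B_j$ and turns the quadratic part into the Weyl symbol of $K_j$, with $A^j = B_j^{1/2}\tilde A^j B_j^{1/2}$ as stated.

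For the upper bound I take orthonormal eigenfunctions $\psi_1,\ldots,\psi_n$ of $K = \bigoplus_j K_j$, each concentrated on a single summand $L^2(\R^d)$ associated with some well $x_{j(k)}$, and define trial vectors
\begin{equation*}
u_k^\ep(x) := c_{k,\ep}\,\chi\!\bigl(\ep^{-\alpha}(x-x_{j(k)})\bigr)\,\psi_k\!\bigl((x-x_{j(k)})/\sqrt\ep\bigr),\qquad x\in\disk,
\end{equation*}
with $\chi\in\Ce^\infty_c(\R^d)$ equal to $1$ near $0$, $\alpha\in(1/2,1)$ to be optimized, and $c_{k,\ep}$ normalizing in $\ell^2(\disk)$. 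Gaussian decay of $\psi_k$ makes $\{u_k^\ep\}$ almost orthonormal in $\ell^2(\disk)$ and suppresses cutoff and Riemann-sum errors to exponentially small order. Computing $\skpd{H_\ep u_k^\ep}{u_\ell^\ep}$ via the symbol calculus of Appendix \ref{symbols} together with the Taylor expansion of $V_\ep$ and $t$ around $(x_{j(k)},0)$ yields $\ep e_k\delta_{k\ell}+O(\ep^{6/5})$, the polynomial remainder being governed by the cubic term of $V_0$ and the $O(|\xi|^4)$ correction in $t_0$. Min-max then gives $E_k(\ep)\le \ep e_k + O(\ep^{6/5})$.

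For the lower bound I use an IMS-type partition $\sum_{j=1}^m \chi_j^2 + \chi_\infty^2 \equiv 1$ with $\chi_j$ supported in a ball of radius $\ep^\alpha$ around $x_j$, and decompose the quadratic form
\begin{equation*}
\skpd{H_\ep u}{u} = \sum_j \skpd{H_\ep \chi_j u}{\chi_j u} + \skpd{H_\ep \chi_\infty u}{\chi_\infty u} - \tfrac12 \sum_\sigma \skpd{[[T_\ep,\chi_\sigma],\chi_\sigma]\,u}{u}.
\end{equation*}
On $\supp\chi_\infty$, $V_\ep\ge c>0$ dominates the kinetic term; the double commutator is $O(\ep^{2(1-\alpha)})\|u\|^2$ by the $\ep$-pseudodifferential calculus for periodic symbols of Appendix \ref{symbols}; and on $\supp\chi_j$ the rescaling identifies $\skpd{H_\ep\chi_j u}{\chi_j u}$, up to a remainder of the targeted order, with $\ep$ times the $L^2(\R^d)$ quadratic form of $K_j$ applied to the corresponding rescaled function. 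Summing over $j$, applying min-max to $K$, and balancing $\alpha$ against the cubic-Taylor and commutator errors yields $E_k(\ep)\ge \ep e_k - O(\ep^{6/5})$.

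The main obstacle is the microlocal comparison of $H_\ep$ with $\ep K_j$ in the rescaled variables: because $T_\ep$ is genuinely non-local, the differential-operator identities driving the proofs of Simon \cite{Si1} and Helffer--Sj\"ostrand \cite{hesjo} for $-\hbar^2\Delta+V$ are unavailable and must be replaced by the symbol calculus for periodic symbols of Appendix \ref{symbols}, in particular to control $[[T_\ep,\chi_\sigma],\chi_\sigma]$ and the remainder in the rescaled symbol. Equally delicate is a quantitative almost-isometric identification between $\ell^2(\disk)$ and $L^2(\R^d)$, which has to be used consistently in both bounds so that the spectrum of $K_j$ can be read off in the limit. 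The final exponent $\ep^{6/5}$ reflects the optimal choice of the localization scale $\alpha$ balancing cubic-Taylor, cutoff/discretization, and commutator errors.
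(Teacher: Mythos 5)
Your outline shares the paper's high-level architecture (rescaling to $y=(x-x_j)/\sqrt\ep$, IMS localization, min-max from both sides), and your identification of $A^j=B_j^{1/2}\tilde A^jB_j^{1/2}$ via the symplectic change of variables is correct. But there are several genuine gaps.

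First, the localization scale is wrong as written. With $\chi(\ep^{-\alpha}(x-x_j))$, taking $\alpha\in(1/2,1)$ puts the cutoff \emph{inside} the length scale $\sqrt\ep$ on which $\psi_k$ lives, so the trial functions would lose most of their mass and the Gaussian-decay argument for the cutoff error would fail. Moreover the double-commutator error you quote, $O(\ep^{2(1-\alpha)})$, is then $\geq O(\ep)$, i.e.\ the same size as the eigenvalue $\ep e_k$ you are trying to resolve. You need $\alpha\in(0,\tfrac12)$; the balance of the cubic Taylor error $O(\ep^{3\alpha})$ against the commutator error $O(\ep^{2(1-\alpha)})$ fixes the optimal $\alpha=\tfrac25$, which is where the exponent $\tfrac65$ comes from. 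Relatedly, the Riemann-sum errors in passing from $\ell^2(\disk)$ to $L^2(\R^d)$ are \emph{not} exponentially small; they are polynomial (of size $O(\sqrt\ep)$ for inner products of eigenfunctions, as in Lemma \ref{kleinnorm}, and need a careful separate estimate as in Lemma \ref{diskkont} to keep them below $\ep^{6/5}$).

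Second, and more seriously, your lower bound sketch omits the step that is actually hard. Saying "on $\supp\chi_j$ the rescaling identifies $\skpd{H_\ep\chi_ju}{\chi_ju}$ with $\ep$ times the $K_j$-quadratic form, summing over $j$ and applying min-max to $K$" glosses over the fact that the functions $\chi_ju$ are not independent trial functions for a direct sum. One must convert the localized quadratic-form comparisons into a single operator inequality of the form $\skpd{H_\ep\psi}{\psi}\ge \ep e\,\|\psi\|^2+\skpd{R_l\psi}{\psi}-C\ep^{6/5}\|\psi\|^2$ with $R_l$ a symmetric operator of rank $\le l$ (constructed from the spectral projections of $K_j$ onto eigenvalues below $e$, conjugated back to $\ell^2(\disk)$). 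Only this finite-rank correction lets one invoke min-max on $H_\ep$ to obtain $E_n(\ep)\ge\ep e_n - O(\ep^{6/5})$. Without it the argument does not close.

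Third, the spatial cutoff alone is not enough to replace $T_\ep$ by its quadratic approximation $T_{\ep,q,j}$: the symbol $t_0(x,\cdot)$ is only quadratic near $\xi=0$, so you need a simultaneous $\xi$-localization (the paper's $\tilde\phi_{0,\ep},\tilde\phi_{1,\ep}$ at scale $\ep^{2/5}$) to obtain the operator-norm estimate $\|\chi_{j,\ep}\phi_{0,\ep}(T_\ep-T_{\ep,q,j})\phi_{0,\ep}\chi_{j,\ep}\|=O(\ep^{6/5})$. This is indispensable for the lower bound (where you need norm estimates, not just matrix elements) and is also what the upper bound uses via Lemma \ref{chiphi}(c). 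Your reference to "the $\ep$-pseudodifferential calculus" does not by itself supply this phase-space localization. Finally, to conclude that the min-max values $\mu_n(\ep)=O(\ep)$ are genuine discrete eigenvalues you need a Persson-type lower bound on $\inf\sigma_{\mathrm{ess}}(H_\ep)$; this is proved separately (Theorem \ref{perssontheo}) and should be cited.
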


We remark that (under additional assumptions) Theorem \ref{satz1} considerably sharpens Theorem 1 in Baake-Baake-Bovier-Klein \cite{baake}.

The strategy of the proof of Theorem \ref{satz1} is to restrict the Hamilton operator
$H_\ep$ to small $\ep^{\frac{2}{5}}$-scaled neighborhoods of its
critical points in $x$ and $\xi$, i.e. to neighborhoods of
$\left\{(x_j,0)\right\}_{j=1}^m$ in phase space. Then restricted to these regions, the difference operator
can be compared with a corresponding differential operator acting on
$L^2(\R^d)$.

We follow in part the ideas of the proof of Theorem 11.1 in
Cycon-Froese-Kirsch-Simon \cite{simon} on the quasi-classical
eigenvalue limit of a Schr\"odinger operator. But in contrast to
this proof, our difference
operator $T_\ep$ depends on both position and momentum and
acts on a different space than the harmonic oscillator. The first
step of the proof consists in localizing the operator simultaneously with
respect to $x$ and $\xi$, which is done by use of a version of
microlocal calculus adapted to the discrete setting as introduced in
Definition \ref{symbolspaces}. These localized operators still act on $\ell^2(\disk)$. The second step consists in
comparing the localized operators on $\ell^2(\disk)$ with the associated localized operators
on $L^2(\R^d)$, which are standard pseudo-differential operators. With these preparations, the remaining part
of the proof follows closely the arguments in Simon \cite{Si1}.\\

The plan of the paper is as follows.
We introduce in Section \ref{normscalar} some notations, define symbol-spaces on $\R^d\times\T^d$
and associated operators and state some essential results concerning these symbols and operators.
In Section \ref{beweis} we state and prove lemmata, which are essential ingredients for
the proof of Theorem \ref{satz1}. Proposition \ref{micnorm}, Lemma \ref{1zer} and Lemma \ref{chiphi} contain the
main estimates on the error introduced by localizing the relevant operators on $\ell^2(\disk)$ and
$L^2(\R^d)$. Lemma \ref{diskkont} estimates the difference between these operators.
The proof of Theorem \ref{satz1} is finally given in Section \ref{sec223}.
Appendix \ref{symbols} is concerned with pseudo-differential operators in the discrete setting.
In particular, we collect some properties of symbols and
prove the $\ell^2$-continuity of pseudo-differential operators
with symbols in $S_\delta^r(\id)(\R^d\times \T^d)$ (see Definition \ref{symbolspaces}). In
Section \ref{persson} we show an analog of the Theorem of Persson for
some class of difference operators.

\section{Notations and Preliminaries}\label{normscalar}

For $\ep>0$, we consider $\ell^2\left((\ep{\mathbb Z})^d\right)$, the space of square summable functions
on the $\ep$-scaled lattice, with scalar
product
\begin{equation}\label{skpd}
\skpd{u}{v} := \sum_{x\in \disk} {\bar u}(x)v(x) ,\qquad u,v\in \ell^2\left(\disk\right)\: .
\end{equation}
Denoting the $d$-dimensional
torus by $\T^d:= \R^d/(2\pi )\Z^d$, we identify functions in $L^2(\T^d)$ with
periodic functions in $L^2_{\loc}(\R^d)$. Then
\begin{equation} \label{skpT}
\skpT{f}{g} :=
\int_{[-\pi,\pi]^d} {\bar f}(\xi)g(\xi)\, d\xi,
\qquad
\end{equation}
denotes the scalar product in $L^2(\T^d)$.
We denote the associated norms by $ \|\,.\,\|_{\ell^2}$ and $\|\,.\,\|_{\T}$.

The discrete Fourier transform ${\mathscr F}_\ep:L^2\left(\T^d\right)
\to \ell^2\left((\ep{\mathbb Z})^d\right)$ is defined by
\begin{equation}\label{Fou}
({\mathscr F}_\ep f)(x) := \frac{1}{\sqrt{2\pi}^d}
\int_{[-\pi,\pi]^d} e^{-ix\cdot\frac{\xi}{\ep}}f(\xi)\,d\xi \, ,
\qquad f\in L^2(\T^d)
\end{equation}
with inverse ${\mathscr F}_\ep^{-1}:\ell^2\left((\ep{\mathbb
Z})^d\right)\to L^2\left(\T^d\right)$,
\begin{equation} \label{Fou-1}
({\mathscr F}_\ep^{-1}v)(\xi) :=
\frac{1}{\sqrt{2\pi}^d}\sum_{x\in\disk} e^{ix\cdot\frac{\xi}{\ep}}v(x),
\qquad v\in \ell^2\left(\disk\right)\: ,
\end{equation}
where $x\cdot y := \skp{x}{y} := \sum_{j=1}^d x_j y_j$ for $x,y\in \R^d$ and points in $\T^d$ are identified with
points in $[-\pi, \pi]^d$..
Then ${\mathscr F}_\ep$ is an isometry, i.e.,
\begin{equation}
\skpd{v}{u} = \skpT{{\mathscr F}_\ep^{-1}v}{{\mathscr F}_\ep^{-1}u}\:, \qquad   u,v\in
\ell^2\left(\disk\right)\label{dnachT}\; .
\end{equation}

On $L^2(\R^d)$ we denote by
$\skpR{f}{g} := \int_{\R^d}\overline{f}(\xi)g(\xi)\,d\xi$ the
standard scalar product and we introduce the $\ep$-scaled Fourier transform
\begin{eqnarray}\label{F}
(F_\ep^{-1}f)(\xi) &:=&(\ep \sqrt{2\pi})^{-d} \int_{\R^d} e^{\frac{i}{\ep}\xi\cdot x} f(x) \, dx \\
(F_\ep u)(x)  &:=& (\sqrt{ 2\pi})^{-d}\int_{\R^d} e^{-\frac{i}{\ep}\xi\cdot x} u(\xi)\, d\xi \, , \nonumber
\end{eqnarray}
where compared to the usual Fourier transform the roles of $x$ and $\xi$  are interchanged.
We notice that for any $f,g\in L^2(\R^d)$
\begin{equation}\label{parsevalkont}
\left\langle
F_\ep^{-1} f \,|\, F_\ep^{-1}g \right\rangle_{L^2(\R_\xi^d)} = \ep^{-d}
  \left\langle f \,|\, g \right\rangle_{L^2(\R_x^d)}\; .
\end{equation}
We write $\langle x \rangle := \sqrt{ 1 + |x|^2}$ for $x\in\R^d$.\\

We introduce the symbol-spaces $S(m)(\R^d\times \T^d)$ and $S_\delta^r(m)(\R^d\times\T^d)$ depending on
the small parameter $\ep\in (0,1]$ following Dimassi-Sj\"ostrand \cite{dima}.
A corresponding symbolic calculus is introduced in Appendix \ref{symbols}.

\begin{Def}\label{symbolspaces}
\begin{enumerate}
\item
A function $m: \R^d\times \T^d \rightarrow [0,\infty)$ is called an order function,
if there exist constants
$C_0, N_1>0$ such that
\[ m(x,\xi) \leq C_0\langle x-y \rangle^{N_1} m(y,\eta) \, , \qquad x,y\in\R^d,\, \xi,\eta\in\T^d \, .\]
\item
For $\delta\in[0,1]$, the space
$S^k_\delta(m)\left(\R^d\times \T^d\right)$ consists of
functions $a(x,\xi;\ep)$ on $\R^d\times\T^d\times (0,1]$, such that there exists constants $C_{\alpha,\beta} >0$ such that for all $\ep\in(0,1], (x,\xi)\in \R^d\times\T^d$
\begin{equation}\label{symbolbed}
|\partial^{\alpha}_x\partial^\beta_\xi a(x,\xi;\ep)| \leq C_{\alpha,\beta}
m(x,\xi)\ep^{k - \delta (|\alpha|+|\beta|)}\, .
\end{equation}
The best constants $C_{\alpha,\beta}$ in \eqref{symbolbed} are denoted by $ \|a\|_{\alpha,\beta}$ and
endow $S_\delta^k(m)(\R^d\times\T^d)$ with a Fr\'echet-topology.
\item
Let $a_j\in S_\delta^{k_j}(m), k_j\nearrow \infty$, then we write $a\sim
\sum_{j=0}^\infty a_j$ if
$a - \sum_{j=0}^Na_j \in S_\delta^{k_{N+1}}(m)$ for every $N\in \N$.
\end{enumerate}
\end{Def}

With $a\in S_\delta^k(m)\left(\R^d\times \T^d\right)$ we associate a
pseudo-differential operator $\Op_\ep^{\T}(a)$ formally given by
\begin{equation}\label{psdo2}
\Op_\ep^{\T}(a)\, v(x) := (2\pi)^{-d} \sum_{y\in\disk}\int_{[-\pi,\pi]^d} e^{\frac{i}{\ep}(y-x)\xi}
a(x,\xi;\ep)v(y) \, d\xi \, .
\end{equation}

We show in  Appendix \ref{symbols}, Lemma \ref{opastetig} that $Op_\ep^{\T^d} (a)$ is continuous
on
\begin{equation}\label{diskschwarz}
s\left(\disk\right):=\left\{ \left.u:\disk \rightarrow \C \; \right| \;
\|u\|_\alpha
:= \sup_{x\in\disk}
 \left| x^{\alpha} u (x)\right| < \infty ,\,
\alpha\in\N^d\right\}\, .
\end{equation}
equipped with the Fr\'echet-topology associated to the family of seminorms $\|\cdot\|_\alpha$. By
standard arguments, $s(\disk)$ is dense in $\ell^2(\disk)$.

For $a\in  S^r_\delta(1)\left(\R^d\times\T^d\right)$ with  $0\leq \delta \leq \frac{1}{2}$,
a version of the Calderon-Vaillancourt Theorem holds (Proposition \ref{cald}).
More precisely, there exists a constant $M>0$ such that for any $u\in s\left(\disk\right)$ and any $\ep\in(0,1]$
\[ \| \Op_\ep^{\T}(a) u \|_{\ell^2(\disk)} \leq M \ep^r \|u\|_{\ell^2(\disk)} \; .\]

Defining the $\#$-product
\[ \#:\, {\mathscr C}_0^\infty \left(\R^d\times\T^d\right)\times{\mathscr C}_0^\infty
\left(\R^d\times\T^d\right) \ni
(a,b)\longmapsto a\# b\in{\mathscr C}^\infty \left(\R^d\times\T^d\right) \]
by
\[
(a\# b) (x,\xi;\ep) :=
\left(e^{-i\ep D_y\cdot D_\xi}a(x,\xi;\ep)b(y,\eta;\ep)\right)|_{\natop{y=x}{\eta=\xi}},
\]
Corollary \ref{a1a2} tells us that
this product has a bilinear continuous extension to symbol spaces:
\[\#:\, S_{\delta_1}^{r_1}(m_1)\left(\R^d\times\T^d\right)\times
S_{\delta_2}^{r_2}(m_2)\left(\R^d\times\T^d\right)\rightarrow
S_\delta^{r_1+r_2}(m_1m_2)\left(\R^d\times\T^d\right)\]
for all
$\delta_k\in[0,\frac{1}{2}], k=1, 2$ and all order functions $m_1, m_2$,
where $\delta:=\max\{\delta_1, \delta_2\}$.
Furthermore, for $\delta_j<\frac{1}{2}$, $j=1,2$
it has the expansion
\[
(a\# b) \sim \sum_{j=0}^\infty (a\# b)_j \quad\text{with} \quad (a\# b)_j = \sum_{|\alpha|=j}
\frac{(i\ep)^{|\alpha|}}{|\alpha|!}\left(\partial_\xi^\alpha
a\right)
\left(\partial_x^\alpha b\right)
\]
in $S_\delta^{r_1+r_2}(m_1m_2)\left(\R^d\times\T^d\right)$ for all $a,b\in
S_{\delta_j}^{r_j}(m_j)\left(\R^d\times\T^d\right),\, j=1,2$ (in the sense of Definition \ref{symbolspaces},(e) with
$k_j=r_1+ r_2 + j(1-2\delta)$).\\
We recall that the $\#$-product reflects the composition of operators, i.e.
\[ \left(\Op_\ep^{\T}(a)\right)\circ \left(\Op_\ep^{\T}(b)\right) = \Op_\ep^{\T}(a\# b)\; .
 \]

Let $t$ denote the function defined in \eqref{talsexp}, then $t\in S_0^0(1)$. A straightforward calculation
gives $\Op_\ep^{\T}(e^{-\frac{i}{\ep}\gamma\cdot\xi})= \tau_\gamma$ and thus
\[ \Op_\ep^{\T}(t) = T_\ep\; .\]

\begin{rem}\label{torusfunktion}
Any function $f\in\Ce_0^\infty(\R^d_\xi)$, which is supported in $(-\pi,\pi)^d$,
admits a unique $\Ce^\infty$ periodic continuation to $\R^d$. Thus any such $f$ can be considered as a function on the
torus $\T^d$. We shall denote this function on $\T^d$ by $\tilde{f}$.
\end{rem}

Let $k\in {\mathscr C}_0^\infty\left(\R^d\right)$ be a cut-off function on $\R^d$ such that
$k(\xi)=1$ for $|\xi|\leq 2$ and $\supp k \subset (-\pi,\pi)^d$.
Then the truncated quadratic approximation of $t$ given by
\begin{equation}\label{tpiq}
t_{\pi,q}(x,\xi):= \left(\skp{\xi}{B(x)\xi}+ \ep t_1(x,0)\right)\,k(\xi) \, ,\qquad \xi\in \R^d,\, x\in\R^d,
\end{equation}
defines a function $\tilde{t}_{\pi,q}\in S_0^0(1)(\R^d\times \T^d)$ (with the notation of Remark \ref{torusfunktion}).
The associated bounded operator on the lattice (see \eqref{psdo2})
is denoted by $\Op_\ep^{\T}(\tilde{t}_{\pi,q}) =: T_{\ep,q}$.

Moreover we define for a potential well $x_j$ of $V_0$ in the sense of Hypothesis \ref{hypsatz}
\begin{equation}\label{tpiqj}
\tilde{t}_{\pi,q,j}(\xi):= \tilde{t}_{\pi,q}(x_j,\xi) \quad\text{and} \quad T_{\ep,q,j}:=\Op_\ep^{\T}(\tilde{t}_{\pi,q,j}) \; .
\end{equation}
To compare $H_\ep$ with an harmonic oscillator on
$L^2\left(\R^d\right)$,
we associate to $t$
(considered as an element of $S_0^0(1)(\R^{2d})$ in the sense of Definition \ref{defpsdokont}) the
translation operator on $L^2(\R^d)$
\begin{equation}\label{T}
\hat{T}_\ep:= \Op_\ep (t) = \sum_{\gamma\in\disk} a_\gamma (x, \ep) \tau_\gamma\; ,\qquad x\in\R^d\, ,
\end{equation}
(see \eqref{opkont}),
and we define the
associated Hamilton operator $\hat{H}_\ep$ on
$\De (\hat{H}_\ep) = \left\{\hat{V}_\ep u \in L^2\left(\R^d\right)\right\}$ as
\begin{equation}\label{Hepdach}
\hat{H}_\ep u(x) := \sum_{\gamma\in\disk} a_\gamma(x, \ep) u(x+\gamma) + \hat{V}_\ep (x) u(x)\, , \qquad
u\in \De(\hat{H}_\ep) \, .
\end{equation}
Setting $t_q(x,\xi) := \skp{\xi}{B(x)\xi} + \ep t_1(x,0)$ on $\R^d\times\R^d$,
we have
\begin{equation}\label{tq1}
\hat{T}_{q} := \Op_\ep (t_q) = -\ep^2\sum_{\nu,\mu =1}^dB_{\nu\mu}(x)\partial_\nu\partial_\mu + \ep t_1(x,0)\, .
\end{equation}
For $x_j\in\R^d$ as above we set
\begin{equation}\label{tqj}
t_{q,j}(\xi):= t_q(x_j,\xi)\, , \quad(\xi\in\R^d)\quad\text{and} \quad \Op_\ep (t_{q,j}) =: \hat{T}_{q,j}\; .
\end{equation}

\begin{rem}\label{Gxnull}
We denote by $\mathscr{G}_{x_0}= \disk + x_0$
the $\ep$-scaled lattice, shifted to the point $x_0\in\R^d$.
Then $x+\gamma\in\mathscr{G}_{x_0}$ for any
$x\in {\mathscr G}_{x_0}, x_0\in\R^d$ and $\gamma\in\disk$.
If $\id_{\mathscr{G}_{x_0}}$ is defined as the restriction map to the
lattice $\mathscr{G}_{x_0}$, it follows that $\tau_\gamma$ commutes with $\id_{\mathscr{G}_{x_0}}$.
Then $H_\ep = \hat{H}_\ep \id_{\mathscr{G}_0}$ and $H_{\ep,x_0}:= \hat{H}_\ep \id_{\mathscr{G}_{x_0}}$
defines a natural realization of $H_\ep$ on $\ell^2({\mathscr G}_{x_0})$.
\end{rem}

By Hypothesis \ref{hypsatz}, at a potential well $x_j$,
for $|x-x_j|\to 0$, the potential energy $\hat{V}_{\ep}$ has
 the expansion
\begin{align}
\hat{V}_{\ep}(x) &= V^j_{\ep}(x) + \ep\, O(|x-x_j|) + O(|x-x_j|^3)
+ R_2(x,\ep) \nonumber \\
 \text{where}\quad V^j_{\ep}(x) &:= V_0^j(x) + \ep\, V_1(x_j)\, ,\quad\text{and}\quad
 V_0^j(x) := \skp{(x-x_j)}{{\tilde A}^j(x-x_j)}\; .\label{Vdj}
\end{align}

\begin{rem}\label{remeig}
The operators $K_j$ defined in \eqref{Kj} are harmonic oscillators with the additional additive constant $V_1(x_j) + t_1(x_j,0)$.
Denoting by $(\omega^j_\nu)^2$ for $\omega^j_\nu>0$ the eigenvalues of the matrix $A^j$, the
eigenvalues of the operator $K_j$ are given by
\begin{equation}\label{sigmaKj}
\sigma(K_j) = \left\{\left. e_{\alpha,j} = \sum_{\nu =1}^d\left(\omega_\nu^j(2\alpha_\nu +1)\right) +
 V_1(x_j) + t_1(x_j,0)\; \right|\;
\alpha \in \N^d\right\}\:.
\end{equation}
The spectrum $\sigma (K)$ of $K$ is the union
$\sigma (K) = \bigcup_{j=1}^m \sigma (K_j)$ of the spectra
$\sigma (K_j)$ for all $j$.

The normalized eigenfunctions of the operators $K_j$ associated to an eigenvalue $e_{\alpha,j}$ are given by
\begin{equation}\label{gnkj}
g_{\alpha ,K_j}(x)= h_{\alpha}(x) e^{-\varphi_0^j(x)}\:, \qquad \alpha=(\alpha_1,\ldots ,
\alpha_d)\in\N_0^d\, ,
\end{equation}
where
\begin{equation}\label{prodhalpha}
h_{\alpha}(x)=h_{\alpha_1}(\langle x, y^j_1\rangle)\cdot h_{\alpha_2}(\langle x, y^j_2\rangle)\cdot\ldots\cdot
h_{\alpha_d}(\langle x, y^j_d\rangle)
\end{equation}
( $y_\nu^j\in{\mathbb R}^d$, ($\nu=1,\ldots, d$) denotes an orthonormal basis in $\R^d$
of eigenvectors of $A^j$),
and each $h_{\alpha_\nu}$ is a one-dimensional Hermite polynomial
\begin{equation}\label{hermitepol}
 h_k(t) = \frac{(-1)^k}{\sqrt{2^k k!} \pi^{\frac{1}{4}}}
e^{t^2} \left(\frac{d}{dt}\right)^k
e^{-t^2}
\end{equation}
with $k=\alpha_\nu$.
We assume $h_\alpha$ to be normalized in the sense that $\|g_{\alpha,K_j}\|_{L^2}=1$.
The phase function in \eqref{gnkj} is given by
\begin{equation}\label{varphi0}
\varphi_0^j(x) := \frac{1}{2}\sum_{\nu =1}^d\omega_\nu^j\skp{x}{y_\nu^j}^2 \, .
\end{equation}
\end{rem}

\section{Localization estimates}\label{beweis}

The starting point of the proof lies in
choosing a partition of unity.
This permits us to treat separately the neighborhoods of the minima and the
region outside of these neighborhoods.

By standard arguments, there exists $\chi\in {\mathscr C}_0^{\infty}\left(\R^d\right)$ such that
\begin{enumerate}
\item $0\leq \chi \leq 1$,
\item $\chi (x) = 1 $ if $|x|\leq 1$ and $\chi (x) = 0 $ if $|x| \geq 2$,
\item $\sqrt{\id -\chi^2}\in {\mathscr C}^{\infty}\left(\R^d\right)$.
\end{enumerate}
We define for $s>0$ functions which localize in $\ep^{s}$-scaled neighborhoods of the minima $x_j,\, 1\leq j\leq m$, by
\begin{equation} \label{defchi2}
\chi_{j,\ep,s}(x) := \chi \left(\ep^{-s}(x-x_j)\right) \; ,
\qquad x\in\R^d \, .
\end{equation}
For $\ep$ sufficiently small, $\supp \chi_{j,\ep,s} \cap \supp \chi_{k,\ep,s} = \emptyset$ for $k\neq j$ and thus by (c)
\[ \chi_{0,\ep,s} := \sqrt{\id-\sum_{j=1}^m \chi_{j,\ep,s}^2} \in {\mathscr C}^{\infty}\left(\R^d\right)
\quad\text{and}\quad
 \sum_{j=0}^m \chi_{j,\ep,s}^2 = \id\; .  \]
Furthermore we set for $j=0,1,\ldots, m$
\begin{equation}\label{defchi}
\chi_{j,\ep}:= \chi_{j,\ep,\frac{2}{5}}\, .
\end{equation}
Using this partition of unity, we obtain modulo $O\left(\ep^2\right)$ for $1\leq j \leq m$ with the notation
$V_1^j(x):= V_1(x_j)$
(using \eqref{Veps} and \eqref{Vdj})
\begin{align}\label{chiabV}
\left\| \chi_{j,\ep} \left(\hat{V}_{\ep}- V_{\ep}^j\right) \chi_{j,\ep}\right\|_{\infty} &=
\left\|\chi_{j,\ep} \left(\left(V_0-V_0^j\right)+\ep \left(V_1 - V_1^j\right)\right)\chi_{j,\ep}
\right\|_{\infty}\\
&\leq \sup_{x\in\supp (\chi_{j,\ep})}\left|\left(V_0-V_0^j\right)(x)\right|
 + \ep \left|\left(V_1-V_1^j\right)(x)\right|\nonumber\\
 &= O\left(\ep^{\frac{6}{5}}\right) \, ,\nonumber
 \end{align}
where the last estimate follows from $(V_0(x)- V_0^j(x))=O\left(|x-x_j|^3\right)$ and
$(V_1-V_1^j)(x) = O(|x-x_j|)$ as $x\to 0$ and from
$|x-x_j| = O\left(\ep^{\frac{2}{5}}\right)$ for
$x\in\supp (\chi_{j,\ep})$.
We shall now simultaneously localize $T_\ep$ around $\xi=0$ and $x=x_j$, which gives the main
contribution to the low-lying spectrum.
To this end we define a partition of unity by
\begin{equation}\label{phiaufR2}
\phi_{0,\ep,s}(\xi) := \chi (\ep^{-s}\xi),\qquad \xi\in\R^d
\end{equation}
and $\phi_{1,\ep,s} :=\sqrt{\id - \phi_{0,\ep,s}^2}$.
To $\phi_{0,\ep,s}\in\Ce_0^\infty(\R^d)$ we associate $\tilde{\phi}_{0,\ep,s}\in\Ce^\infty_0(\T^d)$
(see Remark \ref{torusfunktion}).
Then $\tilde{\phi}_{1,\ep,s}(\xi):=\sqrt{\id-\tilde{\phi}_{0,\ep,s}^2}\in{\mathscr C}^{\infty}(\T^d)$
satisfies $\tilde{\phi}_{0,\ep,s}^{2} + \tilde{\phi}_{1,\ep,s}^{2} = \id$.
The functions $\tilde{\phi}_{k,\ep,s}$ can be considered as elements of
$S^0_{\frac{2}{5}}(1)\left(\R^d\times\T^d\right)$ with
associated operator $\Op_\ep^{\T}(\tilde{\phi}_{k,\ep,s})$.
As above we set
\begin{equation}\label{phiaufR}
\phi_{k,\ep}:= \phi_{k,\ep,\frac{2}{5}}\quad\text{and}\quad \tilde{\phi}_{k,\ep}:= \tilde{\phi}_{k,\ep,\frac{2}{5}}\, ,\quad k=0,1\; .
\end{equation}

\begin{prop}\label{micnorm}
Let $T_{\ep}$ be a translation operator on the lattice $\disk$ as described in Hypothesis
\ref{hypsatz} with the symbol $t$ and let
$T_{\ep,q,j}$ denote the quadratic approximation of $T_\ep$, associated to the symbol
$t_{\pi, q,j}$ defined in (\ref{tpiqj}).
Let $\chi_{j,\ep},\, 1\leq j \leq m,$ and $\tilde{\phi}_{0,\ep}$ be the cut-off-functions defined in (\ref{defchi})
and (\ref{phiaufR}) respectively.
Then
\begin{equation}\label{mikroest}
\left\| P  \right\| = O\left(\ep^{\frac{6}{5}}\right) \, , \quad\text{where}\quad
P:=  \chi_{j,\ep}\Op_\ep^{\T}(\tilde{\phi}_{0,\ep})(T_{\ep} - T_{\ep,q,j})
\Op_\ep^{T^d}(\tilde{\phi}_{0,\ep})\chi_{j,\ep} \, .
\end{equation}
\end{prop}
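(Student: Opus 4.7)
The plan is to write $P$ as the quantization of a symbol in $S^{6/5}_{2/5}(1)(\R^d\times\T^d)$ and apply the Calderón--Vaillancourt estimate (Proposition \ref{cald}). The crucial input is a pointwise estimate on the difference $r(x,\xi;\ep) := t(x,\xi;\ep) - \tilde{t}_{\pi,q,j}(\xi)$ on the joint phase-space support $\{|x-x_j|\leq 2\ep^{2/5}\}\times\{|\xi|\leq 2\ep^{2/5}\}$ of the cut-offs. For $\ep$ sufficiently small, $\supp\tilde{\phi}_{0,\ep}\subset\{|\xi|\leq 2\}$, on which $k\equiv 1$, so there $\tilde{t}_{\pi,q,j}(\xi)$ equals the untruncated polynomial $\langle\xi,B(x_j)\xi\rangle+\ep t_1(x_j,0)$. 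Writing
\[
r = \bigl(t_0(x,\xi)-\langle\xi,B(x_j)\xi\rangle\bigr) + \ep\bigl(t_1(x,\xi)-t_1(x_j,0)\bigr) + t_2(x,\xi;\ep)
\]
and invoking Remark \ref{talskin}(b) (which yields $t_0(x,\xi)=\langle\xi,B(x)\xi\rangle+O(|\xi|^4)$), the smoothness of $t_1$, and the bound $|t_2|=O(\ep^2)$ with controlled derivatives from Hypothesis \ref{hypsatz}(a)(iv), I obtain on this joint support
\[
|r| = O\bigl(|x-x_j|\,|\xi|^2 + |\xi|^4 + \ep(|x-x_j|+|\xi|) + \ep^2\bigr) = O(\ep^{6/5}),
\]
and analogously, by differentiating each summand, $|\partial_x^\alpha\partial_\xi^\beta r| \leq C_{\alpha,\beta}\,\ep^{6/5-(2/5)(|\alpha|+|\beta|)}$ there.

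Next, by the symbolic calculus for the $\#$-product (Corollary \ref{a1a2}), $P = \Op_\ep^\T(p)$ with
\[ p = \chi_{j,\ep}\,\#\,\tilde{\phi}_{0,\ep}\,\#\,r\,\#\,\tilde{\phi}_{0,\ep}\,\#\,\chi_{j,\ep}. \]
The leading order of this (five-fold) $\#$-product is the pointwise product
\[
p_0(x,\xi;\ep) := \chi_{j,\ep}(x)^2\,\tilde{\phi}_{0,\ep}(\xi)^2\,r(x,\xi;\ep).
\]
Since $\chi_{j,\ep},\tilde{\phi}_{0,\ep}\in S^0_{2/5}(1)$, Leibniz' rule combined with the derivative bounds on $r$ above gives $p_0\in S^{6/5}_{2/5}(1)$. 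Each subsequent order in the $\#$-expansion is a pointwise product of derivatives of the five factors, supported inside the joint support; each additional order gains a factor $\ep^{1-2\delta}=\ep^{1/5}$ (with $\delta=2/5$) while preserving the smallness of $r$, so the $k$-th correction lies in $S^{6/5+k/5}_{2/5}(1)$ and consequently $p\in S^{6/5}_{2/5}(1)$. The Calderón--Vaillancourt theorem (Proposition \ref{cald}, applied with $\delta=2/5\leq 1/2$) then yields $\|P\|\leq M\ep^{6/5}$.

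The main obstacle is balancing the three competing contributions to $r$: the cubic-order deviation $\langle\xi,(B(x)-B(x_j))\xi\rangle$ of $t_0$'s quadratic part, the first-order $\ep$-correction $\ep(t_1(x,\xi)-t_1(x_j,0))$, and the $O(\ep^2)$ remainder $t_2$. The choice of localization scale $\ep^{2/5}$ is precisely what equates the first two contributions to $\ep^{6/5}$ and pushes the remaining piece $t_2=O(\ep^2)$ beneath this threshold. A minor subtlety is that the derivatives of $r$ on the joint support inherit the same $\ep^{6/5-(2/5)(|\alpha|+|\beta|)}$ scaling, which is what allows assigning $p_0$ and its $\#$-corrections to $S^{6/5}_{2/5}(1)$ despite $r$ itself being merely in $S^0_0(1)$ globally.
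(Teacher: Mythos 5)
Your overall strategy coincides with the paper's: reduce to showing $P=\Op_\ep^\T(p)$ with $p\in S^{6/5}_{2/5}(1)$, establish the pointwise derivative bounds on $r=t-\tilde t_{\pi,q,j}$ over the joint $\ep^{2/5}$-scaled phase-space support, and apply the Calder\'on--Vaillancourt estimate of Proposition~\ref{cald}. Your decomposition of $r$ and the resulting estimate $|\partial_x^\alpha\partial_\xi^\beta r|\lesssim \ep^{6/5-(2/5)(|\alpha|+|\beta|)}$ on the joint support are exactly the paper's \eqref{ttpiqjdiff}--\eqref{micnorm3}.

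However, there is a gap in the passage from the pointwise bounds on $r$ to the conclusion $p\in S^{6/5}_{2/5}(1)$. You argue that ``each subsequent order in the $\#$-expansion is a pointwise product of derivatives of the five factors, supported inside the joint support,'' and conclude $p\in S^{6/5}_{2/5}(1)$. But $r$ itself belongs only to $S^0_0(1)$ globally, so Corollary~\ref{a1a2} applied directly to $\chi_{j,\ep}\#\tilde\phi_{0,\ep}\#r\#\tilde\phi_{0,\ep}\#\chi_{j,\ep}$ gives a priori only $p\in S^0_{2/5}(1)$. The asymptotic expansion terms of the iterated $\#$-product are indeed pointwise products supported in the joint support, where your bounds on $r$ apply; but the \emph{remainder} $R_N$ from Corollary~\ref{a1a2} is not a pointwise product, is not localized in the joint support, and its Fr\'echet seminorms are controlled by \emph{global} seminorms of the factors. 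You therefore cannot invoke the localized smallness of $r$ to place $R_N$ in $S^{6/5}_{2/5}(1)$; what saves you is only that $R_N\in S^{k_N}_{2/5}(1)$ with $k_N\to\infty$, so eventually $k_N\geq 6/5$. You need to say this explicitly to close the argument.

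The paper sidesteps this issue more cleanly. It first introduces slightly wider cut-offs $\hat\phi_{0,\ep}, \hat\chi_{j,\ep}$ (built from a $\hat\chi$ which equals $1$ on $|x|\leq 2$ and vanishes for $|x|\geq 3$) and uses equation~\eqref{micnorm1} (which is the rigorous form of ``support propagation under $\#$'' up to $O(\ep^\infty)$) to replace $r$ by $r\,\hat\phi_{0,\ep}\hat\chi_{j,\ep}$ inside the five-fold $\#$-product. This modified middle factor is then a genuine element of $S^{6/5}_{2/5}(1)$ \emph{globally}, and Corollary~\ref{a1a2} immediately gives $p\in S^{6/5}_{2/5}(1)$ with no need to analyze the expansion term by term. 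You should either adopt this device, or explicitly record that after truncating the $\#$-expansion at a finite order $N$ with $N(1-\delta_1-\delta_2)\geq 6/5$, the remainder is automatically in $S^{6/5}_{2/5}(1)$ by the composition calculus alone, independently of the localization of $r$.
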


\begin{proof}

By Proposition \ref{cald}, we only need to show that $p \in S^{\frac{6}{5}}_{\delta}(1)$ for some
$0\leq \delta\leq \frac{1}{2}$, where
$P=\Op_\ep^{\T}(p)$.
First we remark that for two symbols $a,b\in S^r_\delta (m), \, \delta < \frac{1}{2}$, where $b$ has compact
support, and a function
$\psi\in {\mathscr C}_0^\infty (\R^d\times \T^d)$ with $\psi |_{\supp b} =1$, we have by Corollary \ref{a1a2}
\begin{equation}\label{micnorm1}
a\# b (x,\xi,\ep) = a\psi \# b (x,\xi,\ep) + O\left(\ep^\infty \right) \, .
\end{equation}
Now choose cut-off-functions $\hat{\phi}_{0,\ep}(\xi)$ and ${\hat \chi}_{j,\ep}$ constructed as above from
$\hat{\chi}\in{\mathscr C}^\infty_0\left(\R^n\right)$ with $\hat{\chi}=1$ for $|x|\leq 2$ and $\hat{\chi}=0$ for
$|x|\geq 3$.
By Lemma \ref{AB} and \eqref{micnorm1}
it suffices to show that $\hat{p} \in S^{\frac{6}{5}}_{\frac{2}{5}}(1)$, where
\[ \hat{p}(x,\xi;\ep) := (\chi_{j,\ep}\#\tilde{\phi}_{0,\ep}\# (t-t_{\pi,q,j})\hat{\phi}_{0,\ep}{\hat\chi}_{j,\ep}\#
\tilde{\phi}_{0,\ep}\#\chi_{j,\ep}) (x,\xi;\ep) \; .\]

We first determine the symbol class of $(t-t_{\pi,q,j})\hat{\phi}_{0,\ep}{\hat \chi}_{j,\ep}$.
Let $\alpha,\beta\in\N^d$, then for $\alpha_1+\alpha_2 = \alpha$ and
$\beta_1+\beta_2 = \beta $
\begin{multline}\label{micnorm2}
\left| \partial_x^\alpha \partial_\xi^\beta (t-t_{\pi,q,j})\hat{\phi}_{0,\ep}{\hat \chi}_{j,\ep} (x,\xi, \ep)
\right| \\
=
\left| \sum_ {\alpha_1,\alpha_2,\beta_1,\beta_2}\!\!\!
\left( \partial_x^{\alpha_1}
\partial_\xi^{\beta_1} (t-t_{\pi,q,j})(x,\xi, \ep)\right)
\left(\partial_\xi^{\beta_2}\hat{\phi}_{0,\ep}(\xi)\right)
\left(\partial_x^{\alpha_2}\hat{\chi}_{j,\ep}(x)\right)
\right| \, .
\end{multline}
Writing $t-t_{\pi,q,j} = t-t_{\pi,q}+ t_{\pi,q} - t_{\pi,q,j}$, we
use that by the definition of $t_1$, Hypothesis
\ref{hypsatz},(a),(i) and Remark \ref{talskin},(a) for each
$x\in\R^d$
\begin{align}\label{ttpiqjdiff}
(t-t_{\pi,q})(x,\xi,\ep) &= O(|\xi|^4) + \ep O(|\xi|) + O(\ep^2)\qquad\text{and}\\
(t_{\pi,q} - t_{\pi,q,j})(x,\xi,\ep) &= \skp{\xi}{(B(x)-
B(x_j))\xi} + \ep (t_1(x,0) - t_1(x_j,0)) = O(|\xi|^2)O(|x|) + \ep O(|x|)\, .\nonumber
\end{align}
The scaling in the definition of the cut-off-functions yields
$|x-x_j|=O\left(\ep^{\frac{2}{5}}\right) = |\xi|$, therefore by \eqref{ttpiqjdiff}
\begin{equation}
\sup_{|\xi|\in\supp (\hat{\phi}_{0,\ep})}\sup_{|x|\in\supp ({\hat\chi}_{j,\ep})}
\left(\partial_x^{\alpha_1}\partial_\xi^{\beta_1}(t-t_{\pi,q,j})(x,\xi;\ep)\right)
 \leq C \ep^{\frac{6}{5}-|\beta_1|\frac{2}{5}-|\alpha_1|\frac{2}{5}}\, .\label{micnorm3}
\end{equation}
By construction $\hat{\phi}_{0,\ep}, \hat{\chi}_{j,\ep}\in S^0_{\frac{2}{5}}(1)$, thus inserting \eqref{micnorm3} in
\eqref{micnorm2} shows
\[ \left| \partial_x^\alpha \partial_\xi^\beta (t-t_{\pi,q,j})\hat{\phi}_{0,\ep}{\hat \chi}_{j,\ep} (x,\xi;\ep)\right|
\leq
C_{\alpha,\beta}\ep^{\frac{6}{5}-\frac{2}{5}(|\alpha|+|\beta|)} \]
and therefore $(t-t_{\pi,q,j})\hat{\phi}_{0,\ep}{\hat \chi}_{j,\ep} \in S_{\frac{2}{5}}^{\frac{6}{5}}(1)$.
The cut-off-functions $\chi_{j,\ep}$ and $\tilde{\phi}_{0,\ep}$ are both elements of $S_{\frac{2}{5}}^0(1)$, thus by
Corollary \ref{a1a2} we get $p\in  S_{\frac{2}{5}}^{\frac{6}{5}}(1)(\R^d\times \T^d)$.
The estimate of the norm of the associated operator in $\ell^2\left(\disk\right)$ follows by use of
Proposition \ref{cald}.
\end{proof}

\begin{rem}
Using the symbolic calculus introduced in Dimassi-Sj\"ostrand \cite{dima}, in particular Proposition 7.7, Theorem 7.9 and
Theorem 7.11, it is possible to show
by similar considerations as in the lattice case that for $\hat{T}, \hat{T}_{qj}$ defined in
\eqref{T} and \eqref{tqj} respectively, with the cut-off functions $\chi_{j,\ep}, \phi_{k,\ep}$ defined in
\eqref{defchi} and \eqref{phiaufR}, one has the norm estimate
\begin{equation}\label{OpVald}
\|\chi_{j,\ep}(x)\tilde{\phi}_{0,\ep}(\ep D)(T_{\ep} - T_{\ep qj})\tilde{\phi}_{0,\ep}(\ep D)\chi_{j,\ep}(x)
\|_{\infty} = O(\ep^{\frac{6}{5}})\; .
\end{equation}
\end{rem}

\eqref{mikroest} suggests to define (see \eqref{Vdj})
\begin{equation}\label{tildeHj}
\hat{H}^j := \hat{T}_{q,j} + V_0^j + \ep\,V_1(x_j)  = \hat{T}_{q,j} + V_\ep^j
\end{equation}
as an approximating operator of $\hat{H}_{\ep}$ and $H_\ep$ respectively on $L^2 (\R^d)$.
By means of the unitary transformation $Uf(x):= \sqrt{|\det B_j^{-\frac{1}{2}}|} f(B_j^{-\frac{1}{2}}x)$, the operator $\hat{H}^j$ is unitarily equivalent to
\begin{equation}\label{Hjj}
H^j := -\ep^2\Delta  + \skp{(x-x_j)}{A^j(x-x_j)} + \ep\,( V_1(x_j) + t_1(x_j,0)) = U^{-1}\hat{H}^j U\, ,
\end{equation}
where $A^j, B_j$ are defined as in Theorem \ref{satz1}. Furthermore, by scaling,
$H^j$ is unitarily equivalent to $\ep K_j$.
Thus the spectrum of $H^j$ and $\hat{H}_j$ is given by
$\ep\:\sigma(K_j)$. The eigenfunctions of $H^j$ and $\hat{H}^j$ are
\begin{equation}\label{gnj}
g'_{\alpha j}(x) = \ep^{-\frac{d}{4}}h_{\alpha}\left(\tfrac{x-x_j}{\sqrt{\ep}}\right)
e^{-\varphi_0^j(\frac{x-x_j}{\sqrt{\ep}})}\quad\text{and} \quad
g_{\alpha j}:= U \, g'_{\alpha j}\quad\text{respectively}\;  \: .
\end{equation}

We will show now that modulo terms of order $\ep^{\frac{6}{5}}$ one can decompose $H_\ep$ with respect to
the partition of unity introduced above into a sum of Dirichlet operators. This is a generalization of
the IMS-localization formula for Schr\"odinger operators described for example in
Cycon-Froese-Kirsch-Simon \cite{simon}.

\begin{Lem}\label{1zer}
Let $H_{\ep} = T_{\ep} + V_{\ep}$ satisfy Hypothesis \ref{hypsatz} and
denote by $V_\ep^j$ the quadratic approximation of $V_\ep$ defined in (\ref{Vdj}).

Let $\chi_{j,\ep}\: , 0\leq j \leq m$ and $\tilde{\phi}_{k,\ep}\: , k=0,1$ be given
by (\ref{defchi}) and (\ref{phiaufR}) respectively.
Then the following estimates hold in operator norm.
\begin{enumerate}
\item \[H_{\ep} = \sum_{j=0}^m \chi_{j,\ep}\,H_{\ep}\,\chi_{j,\ep} +
O\left(\ep^{\frac{6}{5}}\right)\: .\]
\item \[T_{\ep} + V_\ep^j = \Op_\ep^{\T}(\tilde{\phi}_{0,\ep})(T_{\ep} +
V_{\ep}^j)\Op_\ep^{\T}(\tilde{\phi}_{0,\ep}) +
\Op_\ep^{\T}(\tilde{\phi}_{1,\ep})(T_{\ep} + V^j_{\ep})\Op_\ep^{\T}(\tilde{\phi}_{1,\ep}) +
O\left(\ep^{\frac{6}{5}}\right)\,.\]
\end{enumerate}
\end{Lem}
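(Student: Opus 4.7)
The plan is to use an IMS-type localization formula. For a self-adjoint operator $A$ and operators $\Pi_j$ satisfying $\sum_j \Pi_j^2 = \id$, the algebraic identity
\[ A = \sum_j \Pi_j A \Pi_j + \tfrac12 \sum_j [\Pi_j, [\Pi_j, A]] \]
reduces both (a) and (b) to showing that the double commutators on the right are $O(\ep^{6/5})$ in operator norm. For (a) we take $\Pi_j = \chi_{j,\ep}$, $j = 0, \ldots, m$; for (b) we take $\Pi_k = \Op_\ep^\T(\tilde\phi_{k,\ep})$, $k = 0, 1$. In case (b) the relation $\Pi_0^2 + \Pi_1^2 = \id$ is in fact exact: since $\tilde\phi_{k,\ep}$ depends only on $\xi$, all but the zeroth term of the $\#$-expansion vanish, so $\Op_\ep^\T(\tilde\phi_{k,\ep})^2 = \Op_\ep^\T(\tilde\phi_{k,\ep}^2)$, and the claim follows from $\tilde\phi_{0,\ep}^2 + \tilde\phi_{1,\ep}^2 = 1$.

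For (a), since $V_\ep$ is a multiplication operator one has $[\chi_{j,\ep}, V_\ep] = 0$, so only the commutators with $T_\ep$ contribute. Note $\chi_{j,\ep} \in S^0_{2/5}(1)(\R^d\times\T^d)$ depends on $x$ alone, while $t \in S^0_0(1)$. Because $\partial_\xi \chi_{j,\ep} = 0$, the leading (order-$\ep^0$) terms in the $\#$-expansion of Corollary \ref{a1a2} cancel, and $[\chi_{j,\ep}, t]_\#$ lies in $S^{3/5}_{2/5}(1)$ with leading symbol $i\ep(\partial_\xi t)(\partial_x\chi_{j,\ep})$. Applying $[\chi_{j,\ep}, \cdot\,]_\#$ a second time once again cancels the leading order and produces a symbol in $S^{6/5}_{2/5}(1)$. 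The Calder\'on-Vaillancourt estimate of Proposition \ref{cald} then gives the operator-norm bound $O(\ep^{6/5})$.

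For (b), the double commutator splits into a kinetic and a potential piece. The kinetic piece $[\Pi_k, [\Pi_k, T_\ep]]$ is handled by the same argument as in (a) with the roles of $x$ and $\xi$ interchanged: since $\tilde\phi_{k,\ep}$ depends only on $\xi$, the zeroth-order terms in the $\#$-expansion of $[\tilde\phi_{k,\ep}, t]_\#$ cancel, and two rounds of commutation leave a symbol in $S^{6/5}_{2/5}(1)$. The main obstacle is the potential piece $[\Pi_k, [\Pi_k, V_\ep^j]]$: since $V_\ep^j = \langle x - x_j, \tilde A^j(x - x_j)\rangle + \ep V_1(x_j)$ is a quadratic polynomial in $x$, it is not a bounded symbol and Proposition \ref{cald} does not apply directly to it. The way around this is the observation that $[x_\nu, \Pi_k] = i\ep\,\Op_\ep^\T(\partial_{\xi_\nu}\tilde\phi_{k,\ep}) =: C_\nu$ is itself a pure Fourier multiplier and therefore commutes with $\Pi_k$, so $[[x_\nu, \Pi_k], \Pi_k] = 0$. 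A direct Leibniz computation, in which the $\ep V_1(x_j)$ term drops out as a constant, then collapses to
\[ [\Pi_k, [\Pi_k, V_\ep^j]] = 2\sum_{\mu,\nu}\tilde A^j_{\mu\nu}\, C_\mu C_\nu, \]
a product of two Fourier multipliers whose symbols lie in $S^{3/5}_{2/5}(1)$, each of norm $O(\ep^{3/5})$ by Proposition \ref{cald}; hence the product has norm $O(\ep^{6/5})$. Summing the resulting estimates over the partition completes the proof of (a) and (b).
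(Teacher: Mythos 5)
Your proof is correct, and the overall strategy --- the IMS localization formula followed by operator-norm estimates on the double commutators --- is exactly what the paper does. For (a), your argument coincides with the paper's: you both reduce to $[\chi_{j,\ep},[\chi_{j,\ep},t]_\#]_\#$ (using that $V_\ep$ commutes with $\chi_{j,\ep}$) and invoke Lemma~\ref{kommut} to place it in $S^{6/5}_{2/5}(1)$, then apply Proposition~\ref{cald}. Your remark that for (b) the partition $\Pi_0^2+\Pi_1^2=\id$ is exact (because $\tilde\phi_{k,\ep}\#\tilde\phi_{k,\ep}=\tilde\phi_{k,\ep}^2$ for symbols depending on $\xi$ alone) is a nice observation that the paper does not state explicitly.

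Where you genuinely diverge from the paper is in handling the unbounded quadratic potential $V_\ep^j$ in part (b). The paper stays at the symbol level: it writes out the $\#$-expansion of $[\tilde\phi_{k,\ep},[\tilde\phi_{k,\ep},t+V_\ep^j]_\#]_\#$ from Lemma~\ref{kommut}, notices that the $|\alpha|=2$ term involves the \emph{constant} second derivative $\partial_x^2 V_\ep^j$ and that the remainder involves only $\partial_x^\beta V_\ep^j$ with $|\beta|\geq 3$ (which vanish), and concludes the total symbol lies in $S^{6/5}_{2/5}(1)$. You instead compute the double commutator with $V_\ep^j$ exactly at the operator level: since $C_\nu:=[x_\nu,\Pi_k]$ is a Fourier multiplier that commutes with $\Pi_k$, and $V_\ep^j$ is quadratic, Leibniz collapses to $[\Pi_k,[\Pi_k,V_\ep^j]]=2\sum\tilde A^j_{\mu\nu}C_\mu C_\nu$, a product of two $O(\ep^{3/5})$ operators. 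Both proofs rest on the same structural fact --- the quadratic polynomial has constant Hessian and vanishing higher derivatives --- but your closed-form operator identity sidesteps the issue of $V_\ep^j\notin S(1)$ more transparently than the symbolic argument, at the cost of separating the kinetic and potential pieces which the paper treats in one expansion. (A small cosmetic point: with the paper's left quantization one gets $[x_\nu,\Pi_k]=-i\ep\,\Op_\ep^\T(\partial_{\xi_\nu}\tilde\phi_{k,\ep})$, opposite in sign to what you wrote, but since $C_\mu C_\nu$ is quadratic in $C$ this does not affect the conclusion.)
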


\begin{proof}
(a):\\
$H_{\ep}$ can be written as as
\begin{equation}\label{Hdauf}
 H_{\ep} = \frac{1}{2}\sum_{j=0}^m\chi_{j,\ep}^2\,H_{\ep} + \frac{1}{2}H_{\ep}\,
\sum_{j=0}^m\chi_{j,\ep}^2 =
\sum_{j=0}^m\chi_{j,\ep}\,H_{\ep}\,\chi_{j,\ep} +
\frac{1}{2}\sum_{j=0}^m\left[\chi_{j,\ep},[\chi_{j,\ep},H_{\ep}]\right] \, ,
\end{equation}
therefore we have to estimate the double
commutators on the right hand side of \eqref{Hdauf}.
Since $t\in S_0^0(1)$ and $\chi_{j,\ep}\in S_{\frac{2}{5}}^0(1)$,
$j=0,\ldots m$,
it follows at once from Lemma \ref{kommut} that $[\chi_{j,\ep},[\chi_{j,\ep}, t]_\#]_\# \in
S_{\frac{2}{5}}^{\frac{6}{5}} (1)$, which leads to (a) by Proposition \ref{cald}.

(b):\\
The arguments are quite similar to (a), but we need to consider the expansions for the symbolic double
commutator, since
the quadratic potential $V_\ep^j$ is not bounded, but $V_\ep^j \in S_0^0(|x|^2)$. Thus the general result
on the symbol class of the double commutator given in Lemma \ref{kommut} does not allow to use
Proposition \ref{cald} directly.
By Lemma \ref{kommut}, the double commutator
in the symbolic calculus with $\alpha,\alpha_1,\alpha_2\in\N^d$ for
$k=0,1$ can be written as
\begin{multline*}
[\tilde{\phi}_{k,\ep}(\xi),[\tilde{\phi}_{k,\ep}(\xi), (t + V^j_\ep)(x,\xi)]_\#]_\# \\
=\sum_{|\alpha|= 2}\!\!(i\ep)^{|\alpha|} \left(\partial_x^\alpha (t + V_\ep^j)\right)\!\!(x,\xi)\!\!\!\!\!\!\!
\sum_{\alpha_1+\alpha_2=\alpha}
\left(\partial_\xi^{\alpha_1}\tilde{\phi}_{k,\ep}\right)
\left(\partial_\xi^{\alpha_2}\tilde{\phi}_{k,\ep}\right)(\xi) + R_3\, .
\end{multline*}
Now we use that $t\in S_0^0(1)$ and $\tilde{\phi}_{k,\ep}\in S_{\frac{2}{5}}^0(1)$ and furthermore that
the second derivative of the quadratic term $V_\ep^j$ is constant. Thus all the summands are
bounded, of order $\ep^{2-\frac{4}{5}}$ and the $\ep$-order in lowered by $\frac{2}{5}$ with each
differentiation, i.e., they are elements of $S_{\frac{2}{5}}^{\frac{6}{5}} (1)$.
By Lemma \ref{kommut}, the remainder $R_3$ depends linearly on a finite number of derivatives
$\partial_x^\beta (h+V_\ep^j)$ with $|\beta|\geq 3 $ (which is bounded) and
$\left(\partial_\xi^{\beta_1}\tilde{\phi}_{k,\ep}\right)\left(\partial_\xi^{\beta_2}\tilde{\phi}_{k,\ep}\right)$
with $|\beta_1| + |\beta_2|\geq 3$. Thus it is an element of $S_{\frac{2}{5}}^{\frac{9}{5}}(1)$.
We therefore get
$[\tilde{\phi}_{k,\ep}(\xi),[\tilde{\phi}_{k,\ep}(\xi), (t + V^j_\ep)(x,\xi)]_\#]_\# \in S_{\frac{2}{5}}^{\frac{6}{5}}(1)$,
yielding
by Proposition \ref{cald} the stated norm estimate for the associated operator.
\end{proof}

We shall now restrict the eigenfunctions $g_{\alpha j}$ of $\hat{H}^j$ introduced in \eqref{gnj} to the lattice
$\disk$. We denote these restrictions by $g_{\alpha j}^\ep$ and we shall use them as approximate eigenfunctions
for $H_\ep$.

\begin{Lem}\label{kleinnorm}
Let $f, g$ denote eigenfunctions of $\hat{H}^j$ as defined in \eqref{gnj} and $f^\ep, g^\ep$ their
restriction to $\disk$. Then
\begin{equation}\label{gepdurchg}
\skpd{g^{\ep}}{f^{\ep}} =\ep^{-d}\left( \skpR{g}{f} + O(\sqrt{\ep})\right)\; .
\end{equation}
\end{Lem}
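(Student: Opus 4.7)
The plan is to recognize $\sum_{x\in\disk}\bar g(x)f(x)$ as an $\ep^d$-weighted Riemann sum approximating $\ep^{-d}\int_{\R^d}\bar g(x)f(x)\,dx$, and then to control the approximation error by a mean value estimate that takes advantage of the $\sqrt{\ep}$ length scale of the eigenfunctions.

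First, I would unwind the formula \eqref{gnj}: after the affine change of variables $y=(B_j^{-1/2}x-x_j)/\sqrt{\ep}$ (and using that $U$ is unitary), the product $h(x):=\bar g(x)f(x)$ has the form
\[
h(x)=\ep^{-d/2}\,p\!\left(\tfrac{x-x_j}{\sqrt{\ep}}\right)\exp\!\left(-Q\!\left(\tfrac{x-x_j}{\sqrt{\ep}}\right)\right)
\]
for some polynomial $p$ and positive definite quadratic form $Q$ (both independent of $\ep$). Consequently
\[
\|h\|_{L^1(\R^d)}=O(1),\qquad \|\nabla h\|_{L^1(\R^d)}=O(\ep^{-1/2}),\qquad \|\nabla^2 h\|_{L^1(\R^d)}=O(\ep^{-1}).
\]

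Next, I would decompose $\R^d=\bigsqcup_{n\in\Z^d}C_n$ into lattice cubes $C_n:=\ep n+[-\tfrac{\ep}{2},\tfrac{\ep}{2})^d$ and write
\[
\ep^d\sum_{x\in\disk}h(x)-\int_{\R^d}h(x)\,dx=\sum_{n\in\Z^d}\int_{C_n}\bigl(h(\ep n)-h(x)\bigr)dx.
\]
The mean value inequality bounds each integrand by $\tfrac{\sqrt{d}}{2}\ep\sup_{C_n}|\nabla h|$, and by iterating the same Riemann sum idea one level deeper, $\sup_{C_n}|\nabla h|\leq |\nabla h(\ep n)|+\ep\sup_{C_n}|\nabla^2 h|$, one obtains
\[
\Bigl|\ep^d\sum_{x\in\disk}h(x)-\int_{\R^d}h(x)\,dx\Bigr|\leq C\ep\,\|\nabla h\|_{L^1}+C\ep^2\,\|\nabla^2 h\|_{L^1}=O(\sqrt{\ep}).
\]
Dividing by $\ep^d$ and recalling $\int\bar g f\,dx=\skpR{g}{f}$ yields exactly \eqref{gepdurchg}.

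The only slightly delicate point is controlling the Riemann sum $\ep^d\sum_n\sup_{C_n}|\nabla h|$ by $\|\nabla h\|_{L^1}$, which is why I insert the second mean value step; after that the estimate is routine thanks to the explicit Gaussian-times-polynomial form of $h$. No genuine obstacle appears: in fact, applying Poisson summation to $h$ would give an exponentially small error $O(\ep^{-d}e^{-c/\ep})$ because the Fourier transform of a polynomial-times-Gaussian decays as $e^{-c|\xi|^2\ep}$, but the weaker $O(\sqrt{\ep})$ bound from the mean value argument is all that subsequent applications of the lemma require.
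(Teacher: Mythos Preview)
Your argument is correct and follows essentially the same Riemann-sum-plus-mean-value strategy as the paper: the paper writes $\ep^d=\int_{[x,x+\ep)^d}dy$, splits $f(x)g(x)-f(y)g(y)$ telescopically into $(f(x)-f(y))g(x)+f(y)(g(x)-g(y))$, and bounds each piece by a global $\sup|\nabla f|$ (or $\sup|\nabla g|$) times a Riemann sum for $\int|g|$ (or $\int|f|$), whereas you treat $h=\bar g f$ as a single function and localize the gradient bound cube by cube. The only cosmetic point is that your displayed inequality with $\|\nabla h\|_{L^1}$ and $\|\nabla^2 h\|_{L^1}$ is not literally what your two mean-value steps produce (they give lattice sums, not integrals), but---as you note---the Gaussian-times-polynomial structure makes these sums of the same order, so the conclusion $O(\sqrt{\ep})$ is unaffected.
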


\begin{proof}
We use $\ep^d = \int_{[x,x+\ep)^d} dx$ to write
\begin{equation}\label{fgzerleg}
 \skpd{f^{\ep}}{g^{\ep}} = I_1 + I_2 + I_3\, ,
\end{equation}
where
\begin{align}\label{Ifg}
I_1 &=  \ep^{-d}\sum_{x\in\disk}\int_{[ x,
x+\ep[^d} (f(x) - f( y))\,g( x)\, dy \\
I_2 &=\ep^{-d}\sum_{x\in\disk}\int_{[ x,
x+\ep[^d}  f(y)\,(g(x) - g(y))\, dy \nonumber
\end{align}
and
\[ I_3 = \ep^{-d} \int_{\R^d}  f(y)g(y) \, dy = \ep^{-d}\skpR{f}{g} \: .
\]
It thus remains to show that $I_1$ and $I_2$ are of order $\ep^{-d + \frac{1}{2}}$.
By the scaling of $f$ and since $f=O(\ep^{-\frac{d}{4}})$
\begin{equation}\label{fdiff}
\sup_{x\in\disk}\sup_{y\in[x,x+\ep)^d} |f(x)-f(y)| \leq \ep \sup_{z\in\R^d} |\nabla f(z)|
\leq C \ep^{-\frac{d}{4}}\ep^{\frac{1}{2}}\; .
\end{equation}
Thus, setting $g(x) = \ep^{-\frac{d}{4}}\tilde{g}(\frac{x-x_j}{\sqrt{\ep}})$ for some $j\in\{ 1,\ldots m\}$,
we have by \eqref{Ifg} and \eqref{fdiff}
\begin{equation}\label{vterm1e}
|I_1| \leq C \ep^{\frac{1}{2} - \frac{d}{2}} \sum_{y\in\sqrt{\ep}\Z^d} \tilde{g}\left(y-\tfrac{x_j}{\sqrt{\ep}}\right) =
O\left(\ep^{\frac{1}{2}-d}\right)\; ,
\end{equation}
where in the last step we used that by the definition of the Riemann Integral
\begin{equation}\label{lemma2.10.2}
\lim_{\ep\to 0} \;\ep^{\frac{d}{2}}\sum_{y\in(\sqrt{\ep}\Z)^d} \left|\tilde{g}\left(y-\tfrac{x_j}{\sqrt{\ep}}\right)\right| =
\int_{\R^d} |\tilde{f}(u)|\, du\; ,
\end{equation}
which is a constant independent of $\ep$.
The estimates for $I_2$ are analogous.
\end{proof}

The functions $g_{\alpha j}$ defined in \eqref{gnj} are localized near the well $x_j$ for $j=1,\ldots,m$
and decrease exponentially fast.
We need the following localization estimates.

\begin{Lem}\label{lzweiab}
For $s<\frac{1}{2}$ let , $\chi_{j,\ep}$, $\chi_{j,\ep,s},\, 1\leq j \leq m$ and $\tilde{\phi}_{0,\ep,s}$, denote the
cut-off functions defined in \eqref{defchi}, \eqref{defchi2} and below \eqref{phiaufR2} respectively.
Let $g_{\alpha j}^{(\ep)}$ denote the eigenfunctions of the harmonic oscillator defined in
\eqref{gnj} (or their restriction to the lattice).
Then for $\ep \to 0:$
\ben
\item There exists a constant $C>0$ such that
\[ \left| \skpR{g_{\alpha j}}{\left(1-\chi^2_{j,\ep,s}\right)g_{\alpha j}}\right| =
O\left(e^{-C \ep^{2s-1}}\right)\; . \]
\item For all $N\in\N$
\[ \left| \skpT{\mathscr{F}_\ep^{-1}\left(\chi_{j,\ep}g_{\alpha j}^\ep\right)}{\phi_{1,\ep,s}^2
\mathscr{F}_\ep^{-1}\left(\chi_{j,\ep}g_{\alpha j}^\ep\right)} \right|
= O\left(\ep^{N}\right)\, . \]
\een
\end{Lem}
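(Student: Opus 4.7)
The plan is to prove (a) by a direct Gaussian-tail estimate and (b) by Poisson summation combined with the Schwartz character of the harmonic oscillator eigenfunctions.

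For (a), the key point is that, up to a linear change of variables absorbed into the unitary $U$ of \eqref{Hjj}, $g_{\alpha j}(x)$ takes the form $\ep^{-d/4}P((x-x_j)/\sqrt{\ep})e^{-Q((x-x_j)/\sqrt{\ep})}$ with $P$ a polynomial and $Q$ a positive definite quadratic form. Substituting $y=(x-x_j)/\sqrt{\ep}$ transforms the region $\{|x-x_j|\geq\ep^s\}$ into $\{|y|\geq c\ep^{s-1/2}\}$, which escapes to infinity since $s<\tfrac12$. Writing $e^{-2Q(y)}=e^{-Q(y)}\cdot e^{-Q(y)}$ and using $Q(y)\geq c'|y|^2$ gives
\begin{equation*}
\int_{|x-x_j|\geq\ep^s}|g_{\alpha j}(x)|^2\,dx \;\leq\; e^{-C\ep^{2s-1}}\int_{\R^d}|P(y)|^2 e^{-Q(y)}\,dy,
\end{equation*}
and the remaining integral is finite; this establishes (a).

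For (b), I would apply Poisson summation to the discrete Fourier transform. Setting $\psi:=\chi_{j,\ep}g_{\alpha j}$ and $x=\ep\gamma$, the definition \eqref{Fou-1} together with Poisson summation gives
\begin{equation*}
\mathscr{F}_\ep^{-1}(\chi_{j,\ep}g_{\alpha j}^\ep)(\xi)=(2\pi)^{-d/2}\sum_{\gamma\in\Z^d}e^{i\gamma\cdot\xi}\psi(\ep\gamma)=(2\pi)^{-d/2}\ep^{-d}\sum_{k\in\Z^d}\widehat\psi\bigl((2\pi k-\xi)/\ep\bigr),
\end{equation*}
where $\widehat\psi(\eta):=\int_{\R^d}e^{-i\eta\cdot x}\psi(x)\,dx$ is the standard continuous Fourier transform. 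Away from the cutoff tail (treated below), $\psi$ is $\ep^{-d/4}$ times a polynomial-times-Gaussian in $(x-x_j)/\sqrt{\ep}$, so after rescaling $\widehat\psi$ is again polynomial-times-Gaussian in $\sqrt{\ep}\eta$ and satisfies
\begin{equation*}
|\widehat\psi(\eta)|\leq C\ep^{d/4}(1+\sqrt{\ep}|\eta|)^{M_0}e^{-c\ep|\eta|^2}+O(\ep^\infty).
\end{equation*}
Plugging $\eta=(2\pi k-\xi)/\ep$ converts the exponent into $-c|2\pi k-\xi|^2/\ep$. For $\xi\in\supp\phi_{1,\ep,s}\cap[-\pi,\pi]^d$ we have $|\xi|\geq\ep^s$, so the $k=0$ contribution is $O(e^{-c\ep^{2s-1}})$; for $k\neq 0$ one uses $|2\pi k-\xi|\geq\pi$, yielding even stronger decay and a convergent sum over $k$. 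Since $s<\tfrac12$, every term is $O(\ep^N)$ for each $N$, and integrating the resulting pointwise bound against $|\phi_{1,\ep,s}|^2$ over $[-\pi,\pi]^d$ yields (b).

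The main technical obstacle in (b) is justifying that the cutoff $\chi_{j,\ep}$ does not spoil the Gaussian decay of $\widehat\psi$. This is handled by observing that $1-\chi_{j,\ep}$ is supported on $|x-x_j|\geq\ep^{2/5}$, i.e., on $|y|\geq\ep^{-1/10}$ in the scaled variable $y=(x-x_j)/\sqrt{\ep}$, where the weight $e^{-Q(y)}$ contributes a factor of order $e^{-c\ep^{-1/5}}$. Hence replacing $\chi_{j,\ep}g_{\alpha j}$ by $g_{\alpha j}$ inside $\widehat\psi$ produces only an $O(\ep^\infty)$ error, which is absorbed into the remainder in the displayed estimate above.
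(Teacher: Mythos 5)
Your proof of part~(a) follows the paper's argument: both reduce to a Gaussian-tail integral after the rescaling $y=(x-x_j)/\sqrt\ep$ and exploit $s<\tfrac12$; your presentation via the factorization $e^{-2Q}=e^{-Q}e^{-Q}$ is slightly cleaner but otherwise the same. Your proof of part~(b), however, takes a genuinely different route. The paper establishes pointwise rapid decay of $v(\xi)=\mathscr{F}_\ep^{-1}(\chi_{j,\ep}g^\ep_{\alpha j})(\xi)$ on $|\xi|\ge\ep^s$ by summation by parts against the discrete Laplacian $\Delta_\ep$ (a non-stationary-phase argument carried out entirely inside the lattice calculus, see \eqref{partsum3}--\eqref{vnorm}); you instead identify $v$ via Poisson summation with a periodization of $\widehat\psi$, $\psi=\chi_{j,\ep}g_{\alpha j}$, and then read the decay off the Gaussian form of $\widehat{g_{\alpha j}}$. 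Conceptually your approach is appealing because it makes transparent the relation between $\mathscr{F}_\ep^{-1}$ and the continuous Fourier transform $F_\ep^{-1}$, a relation the paper uses elsewhere (see \eqref{opall1}); the paper's approach is more self-contained in the discrete setting and does not require a representation of $\psi$ as polynomial-times-Gaussian.

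There is, though, a real gap in your part~(b) as written. You record the bound
$|\widehat\psi(\eta)|\leq C\ep^{d/4}(1+\sqrt\ep|\eta|)^{M_0}e^{-c\ep|\eta|^2}+O(\ep^\infty)$
with the $O(\ep^\infty)$ coming from the cutoff error $\widehat{(1-\chi_{j,\ep})g_{\alpha j}}$ and \emph{uniform} in $\eta$. When you substitute $\eta=(2\pi k-\xi)/\ep$ and sum over $k\in\Z^d$, the Gaussian piece sums fine, but the uniform $O(\ep^\infty)$ piece, repeated over infinitely many $k$, does not a priori yield a finite (let alone small) quantity, so the claim that the error is ``absorbed into the remainder'' is unjustified in the form you state it. The fix is available within your framework and should be made explicit: $(1-\chi_{j,\ep})g_{\alpha j}$ is a Schwartz function all of whose seminorms are $O(\ep^\infty)$ (the support is at $|y|\gtrsim\ep^{-1/10}$ in the $y$-variable, so the Gaussian weight dominates every derivative of the cutoff), hence $|\widehat{(1-\chi_{j,\ep})g_{\alpha j}}(\eta)|\leq C_{M,N}\,\ep^N(1+|\eta|)^{-M}$ for all $M,N$, which is summable in $k$ and still $O(\ep^\infty)$. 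Alternatively, and more in the spirit of the lemma itself, you could remove the tail before applying Poisson summation: write $v=\mathscr{F}_\ep^{-1}(g^\ep_{\alpha j})-\mathscr{F}_\ep^{-1}((1-\chi_{j,\ep})g^\ep_{\alpha j})$, estimate the second term directly in $L^2(\T^d)$ by the isometry $\|\mathscr{F}_\ep^{-1}w\|_\T=\|w\|_{\ell^2}$ together with the discrete analogue of part~(a), and apply your Poisson argument only to the first term, which has no cutoff and an exactly Gaussian Fourier transform. Either repair closes the gap; without one of them the $k$-sum is not controlled.
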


\begin{proof}

(a):\\
Estimating of $\id-\chi_{j,\ep,s}^2$ by $\id$ on its support gives
\[
\left| \skpR{g_{\alpha j}}{\left(1-\chi^2_{j,\ep,s}\right)g_{\alpha j}}\right|
\leq \int_{|x-x_j|\geq \ep^{s}}\left|g_{\alpha j}(x) \right|^2\, dx
\]
Using $g_{\alpha j}=O(\ep^{-\frac{d}{4}})$ and the
exponential decay of $g_{\alpha j}$,
the right hand side can be estimated from
above by
\begin{equation}\label{abgauss}
c \ep^{-\frac{d}{2}} \int_{|x-x_j|\geq \ep^{s}} p(|u|) e^{-C\frac{|x-x_j|^2}{\ep}}\, d|x-x_j| =
O\left(e^{-\frac{C}{2} \ep^{2s-1}}\right)
\end{equation}
for some $c,C>0$ and some polynomial $p$, proving (a).\\

(b):\\
To prove this statement, we sum by parts.
Setting $v:= \mathscr{F}_\ep^{-1}\left(\chi_{j,\ep}g_{\alpha j}^\ep\right)$ and
replacing the function $\tilde{\phi}_{1,\ep,s}$ on its support by $\id$, we get
\begin{equation}\label{strichb2}
\left| \skpT{v}{\phi_{1,\ep,s}^2 \, v} \right|
\leq \int\limits_{ \natop{[-\pi,\pi]^d}{|\xi|\geq \ep^s}}
|v(\xi)|^2\,d\xi  \: .
\end{equation}
We now estimate $|v(\xi)|^2$.
By the definition (\ref{Fou-1}) of the inverse Fourier transform,
\begin{equation}\label{vFou}
v(\xi) = \frac{1}{\sqrt{2\pi}^d}
\sum_{y\in\disk}e^{\frac{i}{\ep}\xi \cdot y}
\chi_{j,\ep}(y) g^{\ep}_{\alpha j}(y)\; .
\end{equation}
To analyze $v$ and $\bar{v}$, we use
summation by parts and the discrete Laplace operator $\Delta_\ep$
\begin{equation}\label{diskretlap}
(\Delta_{\ep}f)(x) :=
 \left(\sum_{\nu=1}^d(\tau_{\ep e_\nu} + \tau_{-\ep e_\nu})-2d\right)
f(x)\: .
\end{equation}
The operator $\Delta_\ep$ is symmetric in $\ell^2(\disk)$, i.e.,
\begin{equation}\label{symm}
\skpd{f}{\Delta_{\ep}h} = \skpd{\Delta_{\ep}f}{h}\; , \qquad f,h\in \ell^2(\disk)\; .
\end{equation}
By \eqref{diskretlap} we have
\begin{equation}\label{deltaeps}
 e^{\pm\frac{i}{\ep}x\cdot\xi} =
 -\,\left(2d-2\sum_{\nu =1}^d\cos \left(\xi_\nu\right)\right)^{-1}
 \Delta_{\ep} e^{\pm\frac{i}{\ep}x\cdot\xi}\: .
\end{equation}
Combining \eqref{vFou}, \eqref{deltaeps} and \eqref{symm} for any $N\in\N$ leads to
\begin{equation}\label{partsum3}
 \sqrt{2\pi}^d \, v(\xi) =
 -\,\left(2d-2\sum_{\nu =1}^d\cos \left(\xi_\nu\right) \right)^{-N}\sum_{x\in\disk}
(\Delta_{\ep}^N\chi_{j,\ep} g_{\alpha j}^{\ep})(x)
 e^{\frac{i}{\ep}x\cdot\xi}\; .
\end{equation}
We shall estimate the first factor on the right hand side of \eqref{partsum3} for
$\xi\in\mathcal{M}_\ep := \{ \xi\in [-\pi,\pi]^d\, |\, |\xi| \geq \ep^s \}$.
From the inequality $\pi^2(1-\cos \xi_\nu )\geq \xi_\nu^2$ for $|\xi_\nu|\leq\pi$ it follows that
\[
\frac{1}{\sum_{\nu =1}^d(2-2\cos
(\xi_\nu ))}\leq\frac{\pi^2}{2\sum_{\nu =1}^d\xi_\nu^2}
=\frac{\pi^2}{2|\xi|^2}\, , \qquad \xi\in\mathcal{M}_\ep
\]
and therefore
\begin{equation}\label{abep12}
\left(\sum_{\nu =1}^d(2-2\cos (\xi_\nu ))\right)^{-N}\leq
\left(\frac{\pi^2}{2\ep^{2s}}\right)^N
= O\left(\ep^{-2Ns}\right)\, , \qquad \xi\in\mathcal{M}_\ep \: .
\end{equation}
To find an estimate for the remaining series on the right hand side of \eqref{partsum3},
we use the differentiability of the
functions $\chi_{j,\ep} g_{\alpha j}$. We set $u:= \ep^{\frac{d}{4}}\chi_{j,\ep}g_{\alpha j}$, then by the chain rule and the
scaling of $g_{\alpha j}$ and $\chi_{j,\ep}$
\[
\partial^2_\nu u(x) = O\left(\ep^{-1}\right)\, , \qquad x\in\R^d\; .
\]
Thus Taylor expansion gives
\begin{align}\label{partsum33}
 \Delta_{\ep}u(x) &=  \sum_{\nu =1}^d
(u(x+\ep\, e_\nu )-
u(x)) + (u(x-\ep\, e_\nu ) - u(x))
\\
&= \ep^2 \sum_{\nu=1}^d \int_0^1\left( \partial_\nu^2 u(x+ t\ep e_\nu) +
\partial_nu^2 u(x-t\ep e_\nu)\right) \, dt \nonumber\\
&= O(\ep)\; .\nonumber
\end{align}
Iterating \eqref{partsum33} gives
\begin{equation}\label{Ndisklap}
\Delta_\ep^N u(x) = O\left( \ep^{N}\right) \; .
\end{equation}
Inserting \eqref{Ndisklap} and \eqref{abep12} into \eqref{partsum3} gives
\begin{equation}\label{vnorm}
|v(\xi)|^2 = O\left(\ep^{-\frac{d}{2} + N(1-2s)}\right)\, ,\qquad \xi\in\mathcal{M}_\ep\, .
\end{equation}
Inserting \eqref{vnorm} into \eqref{strichb2} shows (b).
\end{proof}

In the following lemma we use the above results to analyze the difference of matrix elements
for $H_\ep$, $V^j_\ep$ and $T_\ep$ and their localized approximations in the case $s=\frac{2}{5}$.

\begin{Lem}\label{chiphi}
Let $H_\ep$ and $T_\ep$ be given as in Hypothesis \ref{hypsatz},
$V_\ep^j$ be given in \eqref{Vdj} and $\hat{T}_{q,j}$ in \eqref{tqj}.
Let $\tilde{\phi}_{0,\ep}$, $\phi_{0,\ep}$ and $\chi_{j,\ep},\, 1\leq j \leq m$, denote the
cut-off functions defined in \eqref{phiaufR} and \eqref{defchi} respectively.
Let $g_{\alpha j}^{(\ep)}$ denote the eigenfunctions of $\hat{H}^j$ defined in
\eqref{gnj} (or their restriction to the lattice).
Then for $\ep \to 0:$
\begin{enumerate}
\item
\begin{equation}\label{chigg}
\left|\skpd{g^{\ep}_{\alpha j}}{H_\ep g^{\ep}_{\beta l}} - \skpd{\chi_{j,\ep} g^{\ep}_{\alpha j}}
{H_{\ep} \chi_{j,\ep} g^{\ep}_{\beta l}}\right| = O\left(\ep^{\frac{6}{5}}\right)\; .
\end{equation}
\item There exists a constant $c>0$ such that
\[
\left|\skpR{g_{\alpha j}}{V^j_{\ep} g_{\beta l}} - \skpR{\chi_{j,\ep} g_{\alpha j}}
{V^j_{\ep} \chi_{j,\ep} g_{\beta l}}\right| = O\left(e^{-c\ep^{-\frac{1}{5}}}\right)\; .
\]
\item 
\begin{equation}\label{phistrichgg}
 \left| \skpd{\chi_{j,\ep} g^{\ep}_{\alpha j}}{T_\ep\,
 \chi_{j,\ep} g^{\ep}_{\beta l}} -
\skpd{\Op_\ep^{\T}(\tilde{\phi}_{0,\ep}) \chi_{j,\ep} g^{\ep}_{\alpha j})}
{T_\ep\, \Op_\ep^{\T}(\tilde{\phi}_{0,\ep}) \chi_{j,\ep} g^{\ep}_{\beta l}} \right|
= O\left(\ep^{\frac{6}{5}}\right)\; .
\end{equation}
\item
\[
\left| \skpR{g_{\alpha j}}{\hat{T}_{q,j}\, g_{\beta l}} -
\skpR{\Op_\ep (\phi_{0,\ep})\chi_{j,\ep} g_{\alpha j}}{\hat{T}_{q,j}\,
\Op_\ep(\phi_{0,\ep})\chi_{j,\ep} g_{\beta l}} \right| =
      O\left(\ep^{\frac{6}{5}}\right)\; .
\]
\end{enumerate}
\end{Lem}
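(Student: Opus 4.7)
The strategy combines two complementary localization mechanisms: spatial localization via $\chi_{j,\ep}$ for parts (a), (b) (and as a preliminary step for (c), (d)), and phase-space localization via the Fourier multipliers $\Op_\ep^{\T}(\tilde\phi_{0,\ep})$ (resp.\ $\phi_{0,\ep}(\ep D)$) for parts (c), (d). The common input is that $g_{\alpha j}$ is concentrated on scale $\sqrt{\ep}$, which is \emph{finer} than the cutoff scale $\ep^{2/5}$.

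For (a) and (b) I write $g^{(\ep)} = \chi_{j,\ep} g^{(\ep)} + (1-\chi_{j,\ep})g^{(\ep)}$ and expand bilinearly, so the difference of inner products becomes a sum of three cross- and corner-terms, each carrying a factor $(1-\chi_{j,\ep})g_{\alpha j}^{(\ep)}$. Since $\operatorname{supp}(1-\chi_{j,\ep}) \subset \{|x-x_j|\geq \ep^{2/5}\}$ and $\tfrac{2}{5}<\tfrac{1}{2}$, Lemma \ref{lzweiab}(a) gives $\|(1-\chi_{j,\ep})g_{\alpha j}\|_{L^2} = O(e^{-c\ep^{-1/5}})$; the analogous $\ell^2$ bound (with an extra factor $\ep^{-d/2}$) follows from the same Gaussian tail computation adapted to the lattice sum, as in Lemma \ref{kleinnorm}. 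For (b), Cauchy--Schwarz together with $V_\ep^j g_{\beta l} \in L^2(\R^d)$ (the quadratic $V_\ep^j$ times a Gaussian) yields the stated $O(e^{-c\ep^{-1/5}})$. For (a), I further invoke the uniform boundedness of $T_\ep$ (Remark \ref{talskin}(c)) and the polynomial boundedness of $V_\ep$ (Hypothesis \ref{hypsatz}(b)(ii)) to control $\|H_\ep(1-\chi_{j,\ep})g_{\beta l}^\ep\|_{\ell^2}$; the resulting exponentially small bound is \emph{a fortiori} $O(\ep^{6/5})$.

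For (c) and (d) I use a phase-space IMS identity. Since each $\tilde\phi_{k,\ep}$ depends only on $\xi$, $\Op_\ep^{\T}(\tilde\phi_{k,\ep})$ is a self-adjoint Fourier multiplier, and $\tilde\phi_{0,\ep}^{2} + \tilde\phi_{1,\ep}^{2} = \id$ implies
\begin{equation*}
T_\ep = \sum_{k=0}^{1}\Op_\ep^{\T}(\tilde\phi_{k,\ep})\,T_\ep\,\Op_\ep^{\T}(\tilde\phi_{k,\ep}) + \tfrac{1}{2}\sum_{k=0}^{1}\bigl[\Op_\ep^{\T}(\tilde\phi_{k,\ep}),\bigl[\Op_\ep^{\T}(\tilde\phi_{k,\ep}),T_\ep\bigr]\bigr].
\end{equation*}
Sandwiching between $\chi_{j,\ep} g^\ep_{\alpha j}$ and $\chi_{j,\ep} g^\ep_{\beta l}$ and subtracting the $k=0$ term leaves two residues: (i) the $k=1$ sandwich, controlled by $\|\Op_\ep^{\T}(\tilde\phi_{1,\ep})\chi_{j,\ep}g^\ep_{\alpha j}\|_{\ell^2}^2 = O(\ep^N)$ for every $N$ (Lemma \ref{lzweiab}(b)) together with the boundedness of $T_\ep$; and (ii) the double commutators, whose symbols lie in $S^{6/5}_{2/5}(1)(\R^d\times\T^d)$ by the same computation as in the proof of Lemma \ref{1zer}(b)---each $\partial_\xi \tilde\phi_{k,\ep}$ costs $\ep^{-2/5}$ and $\partial_x^\alpha t$ is bounded for every $\alpha$ by Remark \ref{talskin}(a), producing a leading term of order $\ep^{2-4/5}=\ep^{6/5}$---so Proposition \ref{cald} yields the norm bound $O(\ep^{6/5})$. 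Part (d) is structurally identical on $L^2(\R^d)$: one first reduces $g_{\alpha j}$ to $\chi_{j,\ep} g_{\alpha j}$ at the cost of an exponentially small error (as in (b)), then applies the analogous continuous phase-space IMS formula with $\phi_{0,\ep}(\ep D)$ using the symbolic calculus of Dimassi--Sj\"ostrand on $\R^{2d}$ referenced in the remark following Proposition \ref{micnorm}.

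The principal subtlety I anticipate lies in (d): $\hat T_{q,j}$ is \emph{unbounded} on $L^2(\R^d)$---it contains $-\ep^2 B_{\nu\mu}(x_j)\partial_\nu\partial_\mu$ by \eqref{tq1}---so in the reduction step one cannot bound $\hat T_{q,j}(1-\chi_{j,\ep})g_{\beta l}$ by an operator norm. Instead, the explicit differential form of $\hat T_{q,j}$ must be applied directly to the Gaussian tail of $g_{\beta l}$: the $\partial_\nu\partial_\mu$ introduces polynomial factors in $\ep^{-1}$ via the scaling $(x-x_j)/\sqrt{\ep}$, but these are easily absorbed by the dominating factor $e^{-c\ep^{-1/5}}$. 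A secondary bookkeeping point is that in (c) one works with $T_\ep$ alone (rather than $T_\ep + V_\ep^j$ as in Lemma \ref{1zer}(b)), which actually simplifies the symbolic analysis since $t\in S^0_0(1)$ has all $x$-derivatives uniformly bounded, so no remainder term analogous to $R_3$ needs separate treatment.
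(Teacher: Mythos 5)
Your proposal is correct and covers the four parts, but the organization for (a) and (d) differs from the paper's, while (b) and (c) essentially match.

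For (a), you do a direct bilinear expansion $g = \chi_{j,\ep}g + (1-\chi_{j,\ep})g$ and observe that each of the three residual terms carries a factor $(1-\chi_{j,\ep})g_{\cdot}^{\ep}$ with Gaussian tail beyond $|x-x_j|\geq\ep^{2/5}$. The paper instead invokes the operator-norm IMS identity of Lemma \ref{1zer}(a) to reduce to $\sum_{k\neq j}\skpd{\chi_{k,\ep}g_{\alpha j}^\ep}{H_\ep\chi_{k,\ep}g_{\beta l}^\ep}$ plus an $O(\ep^{6/5})$ error, then bounds those cross-terms exponentially. Your route is more elementary, avoids the $O(\ep^{6/5})$ loss from the IMS commutator, and in fact yields an exponentially small bound; the price is that you must argue directly that $V_\ep(1-\chi_{j,\ep})g$ lies in $\ell^2$ with exponentially small norm, which you correctly flag via polynomial boundedness of $V_\ep$. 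Either argument is valid.

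For (d), the paper's order is the reverse of yours: it first replaces $\hat T_{q,j}$ by the Fourier-truncated operator $P=\Op_\ep(\phi_{0,\ep})^2\hat T_{q,j}$ acting on the \emph{unlocalized} $g_{\alpha j}$ (the error is exponentially small because $F_\ep^{-1}g_{\alpha j}$ is an explicit Gaussian of width $\sqrt\ep$), and only then performs the spatial IMS on the now-bounded symbol $p=\phi_{0,\ep}^2t_{q,j}\in S^{4/5}_{2/5}(1)$, giving the $O(\ep^{6/5})$ via the double commutator. You instead cut in $x$ first and in $\xi$ second. Your order works, and you correctly identify the subtlety that $\hat T_{q,j}$ is unbounded so the spatial-cutoff error must be estimated by applying the differential operator directly to the Gaussian tail. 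Two points you should note: (i) because $\hat T_{q,j}$ is a constant-coefficient Fourier multiplier, it commutes with $\Op_\ep(\phi_{k,\ep})$, so your phase-space IMS for (d) is \emph{exact} — there is no commutator remainder at all, so (d) is actually simpler than (c), not ``structurally identical''; and (ii) you need an $L^2(\R^d)$ analogue of Lemma \ref{lzweiab}(b) for $F_\ep^{-1}(\chi_{j,\ep}g)$ rather than $F_\ep^{-1}g$ (the paper sidesteps this by working with the bare Gaussian). That analogue is a routine integration-by-parts estimate, but it should be stated; as written your proposal cites only the torus version.

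No genuine gaps: all the steps you outline can be carried through, and the bounds you anticipate are correct.
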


\begin{proof}

(a):\\
By Lemma \ref{1zer}
\begin{eqnarray}\label{chigg3}
\lefteqn{\left|\skpd{g^{\ep}_{\alpha j}}{H_{\ep} g^{\ep}_{\beta j}} - \skpd{\chi_{j,\ep}
g^{\ep}_{\alpha j}}{H_{\ep} \chi_{j,\ep}
g^{\ep}_{\beta j}}\right|  }\hspace{1.5cm}\\
&& =\left|\sum_{k\neq j}\skpd{\chi_{k,\ep}(x) g_{\alpha j}^{\ep}}{(T_\ep + V_\ep)
\chi_{k,\ep}(x) g_{\beta j}^{\ep}}
\right| +  O\left(e^{\frac{6}{5}}\right).\nonumber
\end{eqnarray}
We consider the kinetic and potential term separately, starting with the
potential term $V_\ep$. By estimating $(\id-\chi_{j,\ep}^2)$ on its support by $1$ and
using $g_{\beta l}=O(\ep^{-\frac{d}{4}})$, we get for some $C>0$
\[
\left|\skpd{g_{\alpha j}^{\ep}}{(\id - \chi_{j,\ep}^2)V_{\ep} g_{\beta l}^{\ep}} \right|
\leq C \ep^{-\frac{d}{4}} \sum_{\natop{x\in\disk}{|x-x_j|\geq \ep^{\frac{2}{5}}}}\left|V_{\ep}(x)
g_{\alpha j}^{\ep}(x)\right|\; .
\]
$V_\ep$ is by Hypothesis \ref{hypsatz} polynomially bounded, thus  the right hand side is bounded from above by
\[
 C\ep^{-\frac{d}{2}}\sum_{|x-x_j|\geq \ep^{\frac{2}{5}}} |p(|x-x_j|)|e^{-c \frac{|x-x_j|^2}{\ep}}
\]
for some $c,C>0$ and some polynomial $p$. This yields for some $c>0$
\begin{equation}\label{lemma291}
\left|\skpd{g_{\alpha j}^{\ep}}{(\id - \chi_{j,\ep}^2)V_{\ep}g_{\beta l}^{\ep}} \right| =
O\left(e^{-c\ep^{-\frac{1}{5}}}\right)\; .
\end{equation}
The boundedness of $T_\ep$ together with Lemma \ref{lzweiab} yields
\begin{equation}\label{lemma292}
\left|\sum_{k\neq j}\skpd{\chi_{k,\ep}(x) g_{\alpha j}^{\ep}}{T_\ep \chi_{k,\ep}(x) g_{\beta j}^{\ep}}
\right|\leq C \sum_{k\neq j} \|\chi_{k,\ep}g_{\alpha j}^\ep\|_{\ell^2} = O\left(e^{-c\ep^{-\frac{1}{5}}}\right)
\end{equation}
for some $c>0$.
Inserting \eqref{lemma291} and \eqref{lemma292} in (\ref{chigg3}) shows the stated estimate.

(b):\\
This is analogue to the proof of Lemma \ref{lzweiab}, since $V^j_\ep$ just changes the
polynomial term in \eqref{abgauss}.

(c):\\
By Lemma \ref{1zer},
\begin{multline}\label{stricha}
 \left| \skpd{\chi_{j,\ep} g^{\ep}_{\alpha j}}{T_\ep\,
 \chi_{j,\ep} g^{\ep}_{\beta l}} -
\skpd{\Op_\ep^{\T}(\tilde{\phi}_{0,\ep}) \chi_{j,\ep} g^{\ep}_{\alpha j}}
{T_\ep\, \Op_\ep^{\T}(\tilde{\phi}_{0,\ep})\chi_{j,\ep} g^{\ep}_{\beta l}} \right|  \\
=  \left| \skpd{\Op_\ep^{\T}(\tilde{\phi}_{1,\ep}) \chi_{j,\ep} g^{\ep}_{\alpha j}}
{T_\ep\, \Op_\ep^{\T}(\tilde{\phi}_{1,\ep}) \chi_{j,\ep} g^{\ep}_{\beta l}}\right| +
O \left(\ep^{\frac{6}{5}}\right)
\end{multline}
Since by \eqref{AB}
$T_\ep \Op_\ep^{\T}(\tilde{\phi}_{1,\ep}) = \Op_\ep^{\T}(t \tilde{\phi}_{1,\ep})$, we have
by the isometry of $\mathscr{F}_\ep$
\begin{multline}\label{strichb}
\left| \skpd{\Op_\ep^{\T}(\tilde{\phi}_{1,\ep}) \chi_{j,\ep} g^{\ep}_{\alpha j}}
{T_\ep\, \Op_\ep^{\T}(\tilde{\phi}_{1,\ep}) \chi_{j,\ep} g^{\ep}_{\beta l}}\right| =
\left| \skpT{\tilde{\phi}_{1,\ep}{\mathscr F}_\ep^{-1}(\chi_{j,\ep} g^{\ep}_{\alpha j})}
{t\, \tilde{\phi}_{1,\ep} {\mathscr F}_\ep^{-1}(\chi_{j,\ep} g^{\ep}_{\beta l})} \right|\\
\leq
C \,\|\tilde{\phi}_{1,\ep}{\mathscr F}_\ep^{-1}(\chi_{j,\ep} g^{\ep}_{\alpha j})\|_{\T}\,
 \|\tilde{\phi}_{1,\ep} {\mathscr F}_\ep^{-1}(\chi_{j,\ep} g^{\ep}_{\beta l})\|_{\T}  = O\left(\ep^\infty\right)\; ,
\end{multline}
where the second estimate follows from the boundedness of $t$ and the last from Lemma \ref{lzweiab}, (b).

(d):\\
We set $P:=\Op_\ep\left(\phi_{0,\ep}\right)^2 \hat{T}_{q,j}$, with symbol
$p(\xi)= \phi_{0,\ep}^2(\xi) t_{q,j}(\xi)$. Then, using \eqref{parsevalkont},
\begin{multline}\label{TundP}
\left|  \skpR{g_{\alpha j}}{\hat{T}_{q,j} g_{\beta l}} - \skpR{g_{\alpha j}}{P g_{\beta l}} \right|
 =
\ep^d \left|
\skpR{F_\ep^{-1}g_{\alpha j}}{\phi_{1,\ep}^2 t_{q,j}\,F_\ep^{-1} g_{\beta l}}
\right|
 \\
 \leq \int_{|\xi|\geq \ep^{\frac{2}{5}}}
\left|
\left(F_\ep^{-1}g_{\alpha j}\right)(\xi)
t_{q,j}(\xi)
\left(F_\ep^{-1}g_{\beta l}\right)(\xi) \right| \, d\xi
\leq C e^{c\ep^{-\frac{1}{5}}}\; ,
\end{multline}
where we used that
\[ \left| \left(F_\ep^{-1}g_{\alpha j}\right)(\xi)\right| \leq C \ep^{-N} \left| q(\xi) e^{-c\frac{|\xi|^2}{\ep}}\right|
\]
for some $N\in\N, C,c >0$ and some polynomial $q(\xi)$.
Next observe that by \eqref{Hdauf}
\begin{equation}\label{Pzerleg}
P - \sum_{j=0}^m \chi_{j,\ep}P\chi_{j,\ep} = \frac{1}{2} \left[ \chi_{j,\ep},\,\left[ \chi_{j,\ep}, P\right]\right] =
O\left(\ep^{\frac{6}{5}}\right)\, ,
\end{equation}
since $p\in S^{\frac{4}{5}}_{\frac{2}{5}}(1)$ and $\chi_{j,\ep} \in S^{0}_{\frac{2}{5}}(1)$, using
PDO-calculus on $\R^d$, in particular the Theorem of Calderon and Vaillancourt (see \cite{dima}). Furthermore
\begin{equation}\label{zweiteab}
\sum_{k\neq j} \left| \skpR{g_{\alpha j}}{\chi_{k,\ep}P\chi_{k,\ep} g_{\beta l}}\right| \leq
\sum_{k\neq j} \| \chi_{k,\ep} g_{\alpha j}\|_{L^2} \, \|P\chi_{k,\ep} g_{\beta l}\|_{L^2}
= O\left(e^{-c\ep^{-\frac{1}{5}}}\right)
\end{equation}
by Lemma \ref{lzweiab}, (a). Combining
\[ \hat{T}_{q,j} - \chi_{j,\ep} P \chi_{j,\ep} = \left(\hat{T}_{q,j} - P\right) + \left( P -
\sum_{k=0}^m \chi_{k,\ep} P \chi_{k,\ep} \right) + \sum_{k\neq j} \chi_{k,\ep} P \chi_{k,\ep} \]
with \eqref{TundP}, \eqref{Pzerleg} and \eqref{zweiteab} proves (d).
\end{proof}

Since Theorem \ref{satz1} compares the eigenvalues of a self adjoint unbounded operator
on $\ell^2\left(\disk\right)$ with the eigenvalues of the
harmonic oscillator, which is an unbounded self adjoint operator on $L^2\left(\R^d\right)$,
we have to compare some
matrix elements with respect to the scalar product $\skpd{.}{.}$ with
those with respect to $\skpR{.}{.}$. How this can be done is shown in
the next lemma, giving an estimate for the difference of these terms.

\begin{Lem}\label{diskkont}
Let $T_{\ep,q,j}$ and $\hat{T}_{q,j}$ be defined in \eqref{tpiqj}
and \eqref{tqj} respectively and let $V_{\ep}^j$ be given by
\eqref{Vdj}.
Let $f,g\in L^2\left(\R^d\right)$ denote normalized
eigenfunctions of $\hat{H}^j$ given in \eqref{tildeHj} (of
the form \eqref{gnj}) and $f^\ep,g^\ep\in
\ell^2\left(\disk\right)$ their restrictions to the lattice. Let
$\chi_{j,\ep},\, 1\leq j \leq m$, $\tilde{\phi}_{0,\ep}$ and $\phi_{0,\ep}$ be
the cut-off functions defined in \eqref{defchi} and \eqref{phiaufR}.
Then for $\ep$ sufficiently small
\begin{enumerate}
\item for any $\alpha<\frac{1}{2}$
\begin{multline*}
\skpd{\chi_{j,\ep} f^{\ep}\,}{\Op_\ep^{\T}(\tilde{\phi}_{0,\ep})T_{\ep,q,j}
\Op_\ep^{\T}(\tilde{\phi}_{0,\ep})\,\chi_{j,\ep} g^{\ep}} \\
= \ep^{-d}\left(\skpR{\chi_{j,\ep} f\,}{\Op_\ep
(\phi_{0,\ep})\hat{T}_{q,j}\Op_\ep (\phi_{0,\ep})\,\chi_{j,\ep} g} +
O\left(\ep^{1 + \alpha}\right)\right)\, .
\end{multline*}
\item
\[\skpd{f^{\ep}\,}{\,\chi_{j,\ep} V_{\ep}^j\chi_{j,\ep}\,g^{\ep}} =
\ep^{-d}\left( \skpR{f\,}{\,\chi_{j,\ep} V_{\ep}^j\chi_{j,\ep}\,g} +
O\left(\ep^{\frac{13}{10}}\right)\right) \,.\]
\end{enumerate}
\end{Lem}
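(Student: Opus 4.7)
The plan is to treat (a) and (b) by rather different methods: (b) is a Riemann-sum comparison with an extra small weight, while (a) reduces, via Fourier transforms, to a Poisson-summation argument.

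For (b), I would follow the proof of Lemma \ref{kleinnorm}, but with the extra multiplier $W(x):=\chi_{j,\ep}^2(x)V_\ep^j(x)$ inserted into the integrand. The telescoping identity
\[
\ep^d\,\skpd{f^\ep}{Wg^\ep}-\skpR{f}{Wg}
=-\sum_{x\in\disk}\int_{[x,x+\ep)^d}\bigl(\bar f Wg(y)-\bar f Wg(x)\bigr)\,dy
\]
bounds the difference by $C\ep\,\|\nabla(\bar f Wg)\|_{L^1(\R^d)}$. The gain compared with Lemma \ref{kleinnorm} comes from the smallness of $W$ on its support: since $\supp W\subset\{|x-x_j|\le 2\ep^{2/5}\}$ and $V_\ep^j(x)=O(|x-x_j|^2)+\ep\,O(1)$ there, one has $\|W\|_\infty=O(\ep^{4/5})$ and, by the scaling of $\chi_{j,\ep}$, $\|\nabla W\|_\infty=O(\ep^{2/5})$. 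Combined with $\|f\|_{L^2}=\|g\|_{L^2}=O(1)$ and the Gaussian-scaling bound $\|\nabla f\|_{L^2}=\|\nabla g\|_{L^2}=O(\ep^{-1/2})$, the product rule and Cauchy--Schwarz yield $\|\nabla(\bar f Wg)\|_{L^1}=O(\ep^{3/10})$, so the error is $O(\ep\cdot\ep^{3/10})=O(\ep^{13/10})$, exactly as claimed.

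For (a), I would use that every operator on both sides is a Fourier multiplier (depending only on $\xi$), so the operator products reduce to products of the symbols $\tilde\phi_{0,\ep}^2\tilde t_{\pi,q,j}$ and $\phi_{0,\ep}^2 t_{q,j}$ respectively. Setting $u:=\chi_{j,\ep}f$ and $v:=\chi_{j,\ep}g$, Parseval for $\mathscr{F}_\ep^{-1}$ rewrites the left-hand side as $\int_{[-\pi,\pi]^d}\overline{\mathscr{F}_\ep^{-1}u^\ep}\,\tilde\phi_{0,\ep}^2\tilde t_{\pi,q,j}\,\mathscr{F}_\ep^{-1}v^\ep\,d\xi$, while identity \eqref{parsevalkont} turns the right-hand side into $\ep^d\int_{\R^d}\overline{F_\ep^{-1}u}\,\phi_{0,\ep}^2 t_{q,j}\,F_\ep^{-1}v\,d\xi$. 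Since $\supp\phi_{0,\ep}\subset\{|\xi|\le 2\ep^{2/5}\}$ and the truncating factor $k$ equals $1$ there for small $\ep$, both symbols coincide on $[-\pi,\pi]^d$ and vanish outside, so the task reduces to comparing $\mathscr{F}_\ep^{-1}u^\ep$ with $F_\ep^{-1}u$ on $[-\pi,\pi]^d$ (and similarly for $v$).

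This last comparison is the Poisson summation formula $\mathscr{F}_\ep^{-1}u^\ep(\xi)=\sum_{k\in\Z^d}F_\ep^{-1}u(\xi+2\pi k)$, so the error is the tail $\sum_{k\ne 0}F_\ep^{-1}u(\xi+2\pi k)$, evaluated at points $|\eta|\ge\pi\gg\sqrt{\ep}$. Iterated integration by parts in $F_\ep^{-1}u(\eta)=(\ep\sqrt{2\pi})^{-d}\int e^{i\eta x/\ep}u(x)\,dx$, together with the scaling bounds $\|\partial^N u\|_{L^1(\R^d)}=O(\ep^{d/4-N/2})$ coming from the widths $\ep^{2/5}$ of $\chi_{j,\ep}$ and $\sqrt{\ep}$ of the Hermite--Gaussian $f$, gives $|F_\ep^{-1}u(\eta)|=O_N(\ep^N)$ for $|\eta|\ge\pi$ and every $N$; the same holds for $v$. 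Hence $\mathscr{F}_\ep^{-1}u^\ep-F_\ep^{-1}u=O(\ep^\infty)$ on $[-\pi,\pi]^d$, and the two integrals differ by $O(\ep^\infty)$, which implies the stated $O(\ep^{1+\alpha})$. The main technical nuisance will be bookkeeping the three Fourier conventions $\mathscr{F}_\ep, F_\ep, F_\ep^{-1}$ with their different normalizations and Parseval identities, and verifying the support inclusion $\supp\phi_{0,\ep}\subset[-\pi,\pi]^d$ uniformly for small $\ep$.
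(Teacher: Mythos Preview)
Your argument for (b) is correct and essentially coincides with the paper's: both split the difference as a Riemann-sum error and exploit that $W=\chi_{j,\ep}^2V_\ep^j=O(\ep^{4/5})$ on its support, arriving at the same $O(\ep^{13/10})$.

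For (a) your approach is correct but genuinely different from the paper's, and in fact sharper. The paper does \emph{not} use Poisson summation. Instead it writes $u_1:=\mathscr{F}_\ep^{-1}(\chi_{j,\ep}f^\ep)$, $u_2:=F_\ep^{-1}(\chi_{j,\ep}f)$ and decomposes $u_1-u_2=h_1+h_2$ by the telescoping
\[
u_1(\xi)-u_2(\xi)=(\ep\sqrt{2\pi})^{-d}\sum_{x\in\disk}\int_{[x,x+\ep)^d}\Bigl[(e^{ix\xi/\ep}-e^{iy\xi/\ep})u(x)+e^{iy\xi/\ep}(u(x)-u(y))\Bigr]\,dy,
\]
so that $h_1$ carries the phase error and $h_2$ the amplitude error. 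Estimating $\|h_1\|_{L^2(|\xi|<\ep^s)}$, $\|h_2\|_{L^2}$ separately and combining with the bound $\|G\|_\infty=O(\ep^{2s})$ on $\supp\phi_{0,\ep,s}$, the paper obtains only $O(\ep^{1+\alpha})$ for any $\alpha<\tfrac12$, which is exactly the statement of the lemma. Your Poisson-summation identity $\mathscr{F}_\ep^{-1}u^\ep(\xi)=\sum_{k\in\Z^d}F_\ep^{-1}u(\xi+2\pi k)$ (valid since $u=\chi_{j,\ep}f\in\Ce_0^\infty$) together with the integration-by-parts decay $|F_\ep^{-1}u(\eta)|\le C_N\ep^{N/2-3d/4}|\eta|^{-N}$ gives the tail $\sum_{k\ne0}=O(\ep^\infty)$ uniformly on $[-\pi,\pi]^d$, and hence an $O(\ep^\infty)$ error in place of $O(\ep^{1+\alpha})$. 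Thus your route is both shorter and yields a strictly stronger bound; the paper's Riemann-sum argument is more elementary but leaves money on the table.
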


\begin{rem} The estimate in (b) is a rough Corollary of Lemma \ref{lzweiab}.
\end{rem}

\begin{proof}

(a):\\
Let $\tilde{t}_{\pi,q,j}$ and $t_{q,j}$ be defined in \eqref{tpiqj} and \eqref{tqj} respectively. Then we observe
that $\tilde{\phi}^2_{0,\ep}\tilde{t}_{\pi,q,j}$ on $\T^d$ can be identified with the function
$G:= \phi^2_{0,\ep} t_{q,j}$ on $\R^d$,
since $\supp G\in (-\pi,\pi)^d$ for $\ep$ sufficiently small. Setting
\begin{align}\label{u12v12}
u_1&:= {\mathscr F}_\ep^{-1}(\chi_{j,\ep}f^{\ep})\, , \qquad u_2:= F_\ep^{-1}(\chi_{j,\ep}f)\\
v_1&:= {\mathscr F}_\ep^{-1}(\chi_{j,\ep}g^{\ep})\, , \qquad v_2:= F_\ep^{-1}(\chi_{j,\ep}g)\, ,\nonumber
\end{align}
we obtain by use of \eqref{dnachT} that the left hand side of (a) is given by
\begin{equation}\label{zerlegung1}
 \skpR{u_1}{G v_1} = I_1 + I_2 + I_3\, ,
\end{equation}
where
\begin{equation}\label{I1undI2}
I_1 = \skpR{u_1 - u_2}{G v_1}\, , \qquad I_2 = \skpR{u_2}{G (v_1 - v_2)}
\end{equation}
and
\begin{equation}\label{I3}
I_3 = \skpR{u_2}{G v_2} = \ep^{-d} \skpR{\chi_{j,\ep}f}{ \Op_{\ep}(\phi_{0,\ep}) \hat{T}_{q,j}\Op_{\ep}(\phi_{0,\ep}) \chi_{j,\ep}g}\, ,
\end{equation}
where the last equalitiy follows from the ``Parseval'' relation \eqref{parsevalkont}
for the $\ep$-Fourier transform $F_\ep$ defined in \eqref{F}.
We claim that for any $\alpha<\frac{1}{2}$ and for $j=1,2$
\begin{equation}\label{Ij}
|I_j| = O(\ep^{-d + 1 + \alpha})\, ,
\end{equation}
which together with \eqref{I3} proves (a).

Using Cauchy-Schwarz in \eqref{I1undI2}, we obtain by Lemma \ref{lzweiab}, (b), and the boundedness of $G$ for any
$s<\frac{1}{2}$
\begin{equation}\label{I1CS}
|I_1| \leq \| u_1 - u_2 \|_{L^2} \| G v_1\|_{L^2} =  \| h_1 + h_2 \|_{L^2(|\xi|< \ep^s)}
\| G v_1\|_{L^2(|\xi|< \ep^s)} + O(\ep^\infty)\, ,
\end{equation}
where, setting $Q_x := [x,x+\ep]^d$,
\begin{align}\label{h1}
h_1(\xi) &= (\ep \sqrt{2\pi})^{-d} \sum_{x\in\disk}\int_{Q_x}\left((e^{\frac{i}{\ep} x\cdot \xi} -
e^{\frac{i}{\ep} y\cdot \xi}\right)\chi_{j,\ep} f(x)\, dy \\
h_2(\xi) &= (\ep \sqrt{2\pi})^{-d} \sum_{x\in\disk}\int_{Q_x} e^{\frac{i}{\ep} y\xi} h(y) \, dy\, ,
\end{align}
with
\[ h(y) = (\chi_{j,\ep} f)(y) - (\chi_{j,\ep}f)(x)\, , \qquad y\in Q_x \; . \]
Thus we have
$h_2 (\xi) = F_{\ep}^{-1} h (\xi)$, giving by \eqref{parsevalkont}
\begin{equation}\label{pars}
\| h_2\| _{L^2} = \| F_\ep^{-1}h\|_{L^2} = \ep^{-\frac{d}{2}} \|h\|_{L^2}\; .
\end{equation}
Using Lemma \ref{lzweiab}, (a), we obtain for any $s<\frac{1}{2}$ and $\ep$ sufficiently small
\begin{equation}\label{hab}
\|h\|_{L^2}^2 = \int_{|x-x_j|<\ep^s} |h(y)|^2\, dx + O(\ep^\infty)\, .
\end{equation}
In the domain of integration we have $\chi_{j,\ep} = 1$ for $s\in(\frac{2}{5}, \frac{1}{2})$ and $\ep$ sufficiently small.
This gives, applying the chain rule to the scaled function $f$,
\begin{equation}\label{hbetrag}
|h(y)| \leq \ep \sup_{z\in\R^d} |\nabla f(z)|\leq C \ep^{-\frac{d}{4}} \ep^{\frac{1}{2}} \; .
\end{equation}
Thus
\begin{equation}\label{hab2}
\left(\int_{|x-x_j|<\ep^s} |h(y)|^2\, dy\right)^{\frac{1}{2}} \leq  C \ep^{-d(\frac{1}{2} - s) + 1}
\end{equation}
Combining \eqref{hab2}, \eqref{hab} and \eqref{pars} gives, taking $\frac{1}{2}-s$ small,
\begin{equation}\label{h2ab2}
\|h_2 \|_{L^2} \leq C \ep^{-\frac{d}{2}} \ep^{\frac{d}{2}(\frac{1}{2}-s) + \frac{1}{2}} =
O\left(\ep^{-\frac{d}{2}}\ep^\alpha\right)\qquad\text{for any}\quad \alpha<\frac{1}{2}\, .
\end{equation}
To estimate $h_1(\xi)$, observe that for $y\in Q_x$
\[
\left|e^{\frac{i}{\ep}x \xi} - e^{\frac{i}{\ep} y \xi}\right|\,
\leq\, \sup_{ y\in Q_x} \left|\tfrac{1}{\ep}(x- y)\cdot\xi\right|
\,\leq\, C |\xi|
\]
uniformly in $x$ and $\xi$. Inserting this into \eqref{h1} and setting
\begin{equation}\label{tildef}
f(x) = \ep^{-\frac{d}{4}} \tilde{f}\left(\frac{x-x_j}{\sqrt{\ep}}\right)
\end{equation}
gives by \eqref{lemma2.10.2}
\begin{equation}\label{h1ab}
|h_1(\xi)| \leq C \sum_{x\in\disk} |\xi|\, |f(x)| \leq C|\xi| \ep^{-\frac{d}{4}}\sum_{y\in (\sqrt{\ep}\Z)^d}
\left|\tilde{f}\left(y-\frac{x_j}{\sqrt{\ep}}\right)\right|\leq C|\xi| \ep^{-\frac{3d}{4}}\, .
\end{equation}
From \eqref{h1ab}, we obtain
\[
\| h_1\|^2_{L^2(|\xi|<\ep^s)} \leq C \ep^{-\frac{3d}{2}} \int_{|\xi|<\ep^s} |\xi|^2\, d\xi \leq
C \ep^{-d}\ep^{d(\frac{1}{2}-s) + 2s}  \, . \]
Thus, taking $\frac{1}{2}-s$ small, we get for any $\alpha < \frac{1}{2}$
\begin{equation}\label{h1ab3}
\|h_1\|_{L^2(|\xi|<\ep^s)} \leq C_\alpha \ep^{-\frac{d}{2} + \alpha}\; .
\end{equation}
Furthermore, since $\sup_{|\xi|<\ep^s} |t_{q,j}(\xi)| \leq C \ep^{2s}$, we get using \eqref{u12v12} and \eqref{dnachT}
\begin{equation}\label{v1ab}
\|G v_1\|_{L^2(|\xi|<\ep^s)} \leq C \ep^{2s} \|\chi_{j,\ep}g^\ep\|_{\ell^2} = O\left( \ep^{-\frac{d}{2}+2s}\right)\; .
\end{equation}
Combining \eqref{v1ab}, \eqref{h1ab3}, \eqref{h2ab2} and \eqref{I1CS} proves \eqref{Ij} for $I_1$.
The estimate for $I_2$ is similar.

(b):\\
Using the identity $\ep^d = \int_{Q_x}\,d y$ and setting $W:=\chi_{j,\ep} V_{\ep}^j\chi_{j,\ep}$, the left hand side of (b) can
analog to \eqref{zerlegung1} be written as
\begin{equation}\label{3terme2}
 \skpd{f^{\ep}}{W\,g^{\ep}} = I_1 + I_2 + I_3\, ,
\end{equation}
where
\begin{align}\label{IeinsIzwei}
I_1 &= \frac{1}{\ep^d}\sum_{x\in\disk}\int_{Q_x}  (f(x) - f( y))\,W g(x)\, dy \\
I_2 &= \frac{1}{\ep^d}\sum_{x\in\disk}\int_{Q_x}
 f( y)\,(W g(x) - W g(y))\, dy\nonumber
\end{align}
and
\begin{equation}
I_3 = \ep^{-d} \int_{\R^d} f(y)\,W g(y) \, dy = \ep^{-d} \skpR{f\,}{\,\chi_{j,\ep} V_{\ep}^j\chi_{j,\ep}\,g}\: .
\end{equation}
Then, similar to the proof of (a), it remains to estimate $I_1$ and $I_2$. We claim that
\begin{equation}\label{I_j2}
|I_j| = O\left(\ep^{\frac{13}{10}-d}\right)\, , \qquad j=1,2 \; .
\end{equation}
By the scaling of $\chi_{j,\ep}$
\begin{equation}\label{Wab}
\sup_{x\in\R^d} |W(x)|\leq \sup_{|x|\leq 2\ep^{\frac{2}{5}}} |V_\ep^j(x)| = O\left(\ep^{\frac{4}{5}}\right)\, ,
\end{equation}
since $V_\ep^j$ is quadratic in $x$.
Combining \eqref{vterm1e} and \eqref{Wab} shows \eqref{I_j2} for $j=1$. The proof for $I_2$ is similar.
\end{proof}

We still need one more estimate for the proof of Theorem
\ref{satz1}. It concerns replacing the
$x$-dependent quadratic approximation $\hat{T}_q$ of the kinetic energy
by the operator $\hat{T}_{q,j}$ fixed at the well $x_j$.

\begin{Lem}\label{TqTqj}
Let $\hat{T}_q$ and $\hat{T}_{q,j}$ be given by \eqref{tq1} and \eqref{tqj} respectively
for $1\leq j \leq m$.
Let $\chi_{j,\ep}$ be the cut-off function defined in \eqref{defchi} and
$f,g$ denote normalized eigenfunctions of $\hat{H}^j$ given in \eqref{tildeHj},
then
\[ \left| \skpR{f}{\chi_{j,\ep}\,\hat{T}_q\,\chi_{j,\ep}\, g} -\skpR{f}{\chi_{j,\ep}\,\hat{T}_{q,j}\,\chi_{j,\ep}\, g}\right| =
O\left(\ep^{\frac{7}{5}}\right) \, . \]
\end{Lem}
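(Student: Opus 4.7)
The plan is to subtract the two operators, apply the Leibniz rule to the resulting second-order part, and estimate each piece by Cauchy--Schwarz using (i) the Taylor smallness of $B(x)-B(x_j)$ and $t_1(x,0)-t_1(x_j,0)$ on $\supp\chi_{j,\ep}$ and (ii) the $\ep$-scaling of the Hermite--Gaussian eigenfunctions of $\hat H^j$. The difference of the two operators is
\[
\hat T_q - \hat T_{q,j} \;=\; -\ep^2\sum_{\nu,\mu=1}^d C_{\nu\mu}(x)\,\partial_\nu\partial_\mu \;+\; \ep\, D(x),
\]
with $C_{\nu\mu}(x):=B_{\nu\mu}(x)-B_{\nu\mu}(x_j)$ and $D(x):=t_1(x,0)-t_1(x_j,0)$. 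Since $B$ and $t_1(\cdot,0)$ are smooth and $|x-x_j|\leq 2\ep^{2/5}$ on $\supp\chi_{j,\ep}$, first-order Taylor gives $\|C_{\nu\mu}\chi_{j,\ep}\|_\infty=O(\ep^{2/5})$ and $\|D\,\chi_{j,\ep}\|_\infty=O(\ep^{2/5})$.

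The multiplication piece is immediate: $\chi_{j,\ep}\,\ep D(x)\,\chi_{j,\ep}$ is bounded pointwise by $C\ep^{7/5}$, and Cauchy--Schwarz with $\|f\|_{L^2}=\|g\|_{L^2}=1$ gives an $O(\ep^{7/5})$ contribution to the matrix element. For the second-order part I would write $\chi_{j,\ep}\bigl(-\ep^2\sum_{\nu,\mu}C_{\nu\mu}\partial_\nu\partial_\mu\bigr)\chi_{j,\ep}g \;=\; -\ep^2\chi_{j,\ep}\sum_{\nu,\mu}C_{\nu\mu}\,\partial_\nu\partial_\mu(\chi_{j,\ep}g)$ and expand $\partial_\nu\partial_\mu(\chi_{j,\ep}g)$ by the Leibniz rule into
\[
\chi_{j,\ep}\partial_\nu\partial_\mu g \;+\; (\partial_\nu\chi_{j,\ep})(\partial_\mu g) \;+\; (\partial_\mu\chi_{j,\ep})(\partial_\nu g) \;+\; g\,\partial_\nu\partial_\mu\chi_{j,\ep}.
\]

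The quantitative inputs needed to bound each of these four matrix elements by Cauchy--Schwarz are the cut-off bounds $\|\nabla\chi_{j,\ep}\|_\infty=O(\ep^{-2/5})$ and $\|\nabla^2\chi_{j,\ep}\|_\infty=O(\ep^{-4/5})$ arising from the $\ep^{2/5}$-scaling in \eqref{defchi}, together with the scaling estimates $\|\nabla g\|_{L^2}=O(\ep^{-1/2})$ and $\|\nabla^2 g\|_{L^2}=O(\ep^{-1})$, both following from the explicit form $g(x)=\ep^{-d/4}\tilde g((x-x_j)/\sqrt\ep)$ in \eqref{gnj}. Combined with the prefactor $\ep^2\cdot\|C_{\nu\mu}\chi_{j,\ep}\|_\infty=O(\ep^{12/5})$, the four Leibniz terms contribute respectively $\ep^{2}\cdot\ep^{2/5}\cdot\ep^{-1}=\ep^{7/5}$, $\ep^{2}\cdot\ep^{2/5}\cdot\ep^{-2/5}\cdot\ep^{-1/2}=\ep^{3/2}$ (twice), and $\ep^{2}\cdot\ep^{2/5}\cdot\ep^{-4/5}=\ep^{8/5}$; the dominant one, $\ep^{7/5}$, comes from the main term $\chi_{j,\ep}\partial_\nu\partial_\mu g$ and matches the claim.

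The argument is bookkeeping rather than conceptual; the only point requiring care is to verify that no Leibniz cross-term beats $\ep^{7/5}$. It does not, and in fact the two boundary terms involving $\nabla\chi_{j,\ep}$ and $\nabla^2\chi_{j,\ep}$ could be shown to be exponentially small---their supports lie in the shell $|x-x_j|\sim\ep^{2/5}\gg\sqrt\ep$, where the Hermite--Gaussian $g$ decays as $\exp(-c\ep^{-1/5})$, as exploited in Lemma~\ref{lzweiab}(a)---but the crude $L^\infty\times L^2$ bounds already suffice for the stated order.
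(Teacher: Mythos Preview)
Your proof is correct and follows essentially the same approach as the paper's: subtract the operators, use the $O(\ep^{2/5})$ Taylor bounds on $B(x)-B(x_j)$ and $t_1(x,0)-t_1(x_j,0)$ over $\supp\chi_{j,\ep}$, and combine with the $\ep^{-1/2}$-scaling of $g$ via Cauchy--Schwarz. The paper condenses your four Leibniz terms into the single estimate $\|\ep^2\partial_\nu\partial_\mu(\chi_{j,\ep}g)\|_{L^2}=O(\ep)$, but your explicit term-by-term check is the same computation written out in full.
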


\begin{proof}
By the definition of the operators $\hat{T}_q$ and $\hat{T}_{q,j}$
\begin{multline*}
\left|\skpR{f}{\chi_{j,\ep}\,(\hat{T}_q-\hat{T}_{q,j})\,\chi_{j,\ep}\,
g}\right| \\
=  \left| \skpR{f}{\chi_{j,\ep}\,\left[\ep^2\sum_{\nu ,
\mu}(B_{\nu\mu}(x)-B_{\nu\mu}(x_j))\partial_\nu
   \partial_\mu + \ep (t_1(x,0)-t_1(x_j,0))\right]\,\chi_{j,\ep}\, g}\right| \, .
\end{multline*}
As $g$ is scaled by $\ep^{-\frac{1}{2}}$,
\begin{equation}\label{381}
\| \ep^2\partial_\nu\partial_\mu \chi_{j,\ep} g \|_{L^2} = O(\ep) \, .
\end{equation}
Since $|x-x_j|\leq 2\ep^{\frac{2}{5}}$ in the support of $\chi_{j,\ep}$, we
have by Hypothesis \ref{hypsatz},(a),(i) that $B_{\nu\mu}(x) - B_{\nu\mu}(x_j) =
O\left(\ep^{\frac{2}{5}}\right)$ and $t_1(x,0)-t_1(x_j,0) = O(\ep^{\frac{2}{5}})$. Together with \eqref{381},
this estimate proves the lemma by use of the Schwarz inequality.
\end{proof}

\section{Proof of Theorem \ref{satz1}}\label{sec223}

Following Simon \cite{Si1}, we prove equality in \eqref{limen} by proving an upper and a lower estimate. For
the sake of the reader, we give a complete proof.

\subsection{Estimate from above}
\begin{equation}\label{i}
  \frac{E_n(\ep)}{\ep}\, \leq \,e_n + O\left(\ep^{\frac{1}{5}}\right) \qquad
  \text{as}\quad \ep\to 0\; .
\end{equation}

At first we use the points (a) and (c) of Lemma \ref{chiphi},
leading to the estimate
\begin{eqnarray}\label{absch3}
\lefteqn{\skpd{g^{\ep}_{\alpha j}}{H_{\ep}\,g^{\ep}_{\beta l}}
 = \skpd{g^{\ep}_{\alpha j}}{\chi_{j,\ep} H_{\ep}\chi_{j,\ep}\,g^{\ep}_{\beta l}} +
 O\left(\ep^{\frac{6}{5}}\right)
  }\hspace{0.8cm}\\
&& = \skpd{g^{\ep}_{\alpha j}}{\chi_{j,\ep}\Op_\ep^{\T}(\tilde{\phi}_{0,\ep})T_\ep\Op_\ep^{\T}(\tilde{\phi}_{0,\ep})
\chi_{j,\ep}\,g^\ep_{\beta l})} +
\skpd{g^{\ep}_{\alpha j}}{\chi_{j,\ep} V_{\ep}\chi_{j,\ep}\,g^{\ep}_{\beta l}}
+
    O\left(\ep^{\frac{6}{5}}\right) \, .\nonumber
\end{eqnarray}
By Proposition \ref{micnorm} and by \eqref{chiabV} (quadratic approximation of $T_\ep$ localized at $\xi=0, x=x_j$ and of
$V_\ep$ localized at $x=x_j$) we have
\begin{eqnarray}\label{absch2}
\lefteqn{ \skpd{g^{\ep}_{\alpha j}}{H_{\ep}\,g^{\ep}_{\beta l}} }\hspace{0.3cm}\\
&&=\skpd{g^\ep_{\alpha j}}{\chi_{j,\ep} \Op_\ep^{\T}(\tilde{\phi}_{0,\ep})T_{\ep,q,j}\Op_\ep^{\T}(\tilde{\phi}_{0,\ep})
\,\chi_{j,\ep} g^\ep_{\beta l}} +
\skpd{g^{\ep}_{\alpha j}}{\chi_{j,\ep} V_{\ep}^j\chi_{j,\ep}\,g^{\ep}_{\beta l}} +
   O\left(\ep^{\frac{6}{5}}\right)  \nonumber\\
&& = \ep^{-d}\left( \skpR{ g_{\alpha j}}{\chi_{j,\ep}\Op_\ep(\phi_{0,\ep})\hat{T}_{q,j}
\Op_\ep (\phi_{0,\ep}) \chi_{j,\ep}g_{\beta l}} + \skpR{g_{\alpha j}}{\chi_{j,\ep}
V_{\ep}^j\chi_{j,\ep}\,g_{\beta l}} +
   O\left(\ep^{\frac{6}{5}}\right)\right) \, ,\nonumber
\end{eqnarray}
where for the second step, the transition from (functions and
scalar product in) $\ell^2\left(\disk\right)$ to $L^2\left(\R^d\right)$,
we used Lemma \ref{diskkont},(a) and (b).
Point (b) and (d) of Lemma \ref{chiphi} and \eqref{tildeHj} yield
\begin{equation}\label{absch1}
\ep^d(\text{rhs}\eqref{absch2})
 = \skpR{g_{\alpha j}}{\hat{H}^{j}\,g_{\beta l}}
 +
   O\left(\ep^{\frac{6}{5}}\right) = \skpR{g'_{\alpha j}}{H^{j}\,g'_{\beta l}}
 +
   O\left(\ep^{\frac{6}{5}}\right)\: ,
\end{equation}
where the second equality follows from the fact that $H^j$ and $\hat{H}^j$ are unitarily equivalent
(see \eqref{Hjj}).
Since by definition $H^j g'_{\alpha j} = \ep e_{n(\alpha, j)} g'_{\alpha j}$, the estimates
\eqref{absch1} and \eqref{absch2} can be combined to give
\begin{equation}\label{erwartung}
\skpd{g^{\ep}_{\alpha j}}{H_{\ep}\,g^{\ep}_{\beta l}}
=\ep^{-d}\left( \ep\,e_{n(\alpha, j)}\delta_{n(\alpha,j),n(\beta,l)} +
O\left(\ep^{\frac{6}{5}}\right)\right)\: ,
\end{equation}
where $n(\alpha,j)$ denotes the number of the eigenvalue corresponding to the pair
$(\alpha, j)$. We shall show that \eqref{erwartung} leads to \eqref{i} by use of the
Min-Max-principle. Choose $\zeta_1,\ldots \zeta_{n-1}$ in the domain of
$H_{\ep}$ and define
\begin{equation}\label{Qdef}
Q(\zeta_1,\ldots \zeta_{n-1})
:=\inf\left\{\skpd{\psi}{H_{\ep}\,\psi}\:|\:\psi\in
\De (H_\ep),\: \|\psi\|=1,\:\psi\in [\zeta_1,\ldots
\zeta_{n-1}]^{\perp}\right\}
\end{equation}
and
\begin{equation}\label{mu}
\mu_n(\ep) :=
\sup_{\zeta_1,\ldots\zeta_{n-1}}Q(\zeta_1,\ldots\zeta_{n-1})\: .
\end{equation}

For $\lambda >0$ we can choose $\zeta_1,\ldots\zeta_{n-1}$, such
that
\begin{equation}\label{Emu}
\mu_n(\ep) \leq Q(\zeta_1,\ldots\zeta_{n-1}) + \lambda\: .
\end{equation}
It follows from Lemma \ref{kleinnorm} that for $\ep>0$ sufficiently small the functions
$g^\ep_{n(\alpha,j)}:= \ep^{\frac{d}{2}} g^\ep_{\alpha j}$ satisfy
\begin{equation}\label{almor}
\skpd{g^\ep_n}{g^\ep_m} = \delta_{n,m} + O(\sqrt{\ep})\, ,
\end{equation}
in particular they are linearly independent
and $\mathscr{M}_n:= \Span \{g^\ep_m\,|\, m \leq n\}$ has dimension $n$.
Then ${\mathscr N} := {\mathscr M}_n \cap
[\zeta_1,\ldots \zeta_{n-1}]^{\perp}$ is at least one dimensional.
Thus there exists a function $\psi\in{\mathscr N}$ with $\|\psi\|_{\ell^2} = 1$ and it
follows from \eqref{erwartung}, \eqref{Qdef} and \eqref{almor}
that
\begin{equation}\label{qab}
Q(\zeta_1,\ldots \zeta_{n-1}) \leq \skpd{\psi}{H_{\ep}\,\psi}
\leq\, \ep\,e_n + O\left(\ep^{\frac{6}{5}}\right)
\end{equation}
Since $\lambda$ is arbitrary, we have by  (\ref{Emu}) and
(\ref{qab})
\begin{equation}\label{mun}
\mu_n(\ep) \leq \ep\, e_n + O\left(\ep^{\frac{6}{5}}\right)\, .
\end{equation}

By Theorem \ref{perssontheo},
Hypothesis \ref{hypsatz} ensures that $\inf\sigma_{ess} (H_\ep)\geq c >0$ uniformly in $\ep$ for $\ep$ sufficiently small.
Since $\mu_n(\ep)$ is by \eqref{mun} of order $\ep$, for $\ep$ small enough it follows from the
Min-Max-principle that $\mu_n(\ep)$ belongs to the discrete
spectrum and coincides with $E_n(\ep)$.

\subsection{Estimate from below}
\begin{equation}\label{ii}
  \frac{E_n(\ep)}{\ep}\, \geq \,e_n + O\left(\ep^{\frac{1}{5}}\right)\qquad\text{as}
  \quad \ep \to 0\;
\end{equation}

For $n>1$, let $l\leq n-1$ be such that $e_n = e_{n-1}=\ldots = e_{l+1} > e_l$
and set $e\in(e_l,e_n)$, for $n=1$ choose $e < e_1$ (in particular
$e\not\in\sigma(\bigoplus_j K_j)$). Then we claim that there exists a constant $C>0$ such that
\begin{equation}\label{ungl1}
\skpd{\psi}{H_{\ep}\,\psi} \geq \ep\, e\skpd{\psi}{\psi} +
\skpd{\psi}{R_l\,\psi} - C \ep^{\frac{6}{5}} \|\psi\|^2_{\ell^2} \: ,\qquad \psi\in
\De(H_\ep)\: ,
\end{equation}
for some symmetric operator $R_l$ with $\rank R_l \leq l$. This
implies \eqref{ii}. To see this implication, let $\psi\in\mathscr{E}_n:= \Span\{ h_k\in\ell^2(\disk)\, |\,
h_k\;\text{is the}\;k\,\text{th eigenfunction of}\; H_\ep, \|h_k\|_{\ell^2} = 1, k\leq n\}$. From the Min-Max-formula it
follows that
\begin{equation}\label{Er+1}
E_{n}(\ep) \geq \skpd{\psi}{H_{\ep}\,\psi}\: .
\end{equation}
On the other hand there exists a $\psi\in{\mathscr E}_n\cap \ker R_l$,
since $\dim\ker \left(R_l|_{{\mathscr E}_n}\right) \geq 1$.
For this $\psi$ the inequality \eqref{ungl1} yields
\begin{equation}\label{ungl2}
\skpd{\psi}{H_{\ep}\,\psi} \geq \ep\, e + O\left(\ep^{\frac{6}{5}}\right)\: ,
\end{equation}
which together with \eqref{Er+1} gives \eqref{ii}.
It therefore suffices to show \eqref{ungl1}.

By Lemma \ref{1zer}, $H_\ep$ splits as
\begin{equation}
H_{\ep} = \sum_{j=1}^m\chi_{j,\ep}H_{\ep}\chi_{j,\ep} +
\chi_{0,\ep}H_{\ep}\chi_{0,\ep} +
O\left(\ep^{\frac{6}{5}}\right)\: ,
\end{equation}
where the estimate on the error term in the following estimates is
understood with respect to operator norm. $\chi_{0,\ep}$ is
supported in the region outside of the wells, thus
$|x-x_j|>\ep^{\frac{2}{5}}$ for $1\leq j \leq m$ and $x\in \supp
\chi_{0,\ep}$. Since the kinetic term is positive modulo terms of order $\ep$ and the potential is of second order in $x$ or of
order $\ep$, we have for $\ep$ sufficiently small, $e<e_n$ and some constant $\tilde{C}>0$
\begin{equation}\label{chi0}
\chi_{0,\ep}H_{\ep}\chi_{0,\ep} \geq
\chi_{0,\ep}V_{\ep}\chi_{0,\ep} \geq ( -C\ep + \tilde{C}\ep^{\frac{4}{5}})\chi_{0,\ep}^2  \geq \ep\,e\,\chi_{0,\ep}^2 \: .
\end{equation}
In the neighborhoods of the wells, (\ref{chiabV}) allows to
approximate the potential by the quadratic term, therefore
\eqref{chiabV} and \eqref{chi0} give
\begin{equation}\label{phi1plusphi0}
H_{\ep} \geq \sum_{j=1}^m\chi_{j,\ep}(T_{\ep} +
V_{\ep}^j)\chi_{j,\ep} + \ep\, e\,\chi_{0,\ep}^2 +
O\left(\ep^{\frac{6}{5}}\right)\: .
\end{equation}
In the first summand we introduce the partition of unity $\tilde{\phi}_{k,\ep}, k=1,0$ in
momentum space, defined in \eqref{phiaufR}, and get by Lemma \ref{1zer}
\begin{multline}\label{tplusv}
\sum_{j=1}^m\chi_{j,\ep}(T_{\ep} + V_{\ep}^j)\chi_{j,\ep} =
\sum_{j=1}^m\chi_{j,\ep}(x)\Op_\ep^{\T}(\tilde{\phi}_{0,\ep})
(T_{\ep} + V^j_{\ep})\Op_\ep^{\T}(\tilde{\phi}_{0,\ep})\chi_{j,\ep}(x) +\\
 + \sum_{j=1}^m\chi_{j,\ep}(x)\Op_\ep^{\T}(\tilde{\phi}_{1,\ep})(T_{\ep} + V^j_{\ep})
\Op_\ep^{\T}(\tilde{\phi}_{1,\ep})\chi_{j,\ep}(x)
  +  O\left(\ep^{\frac{6}{5}}\right)\: .
\end{multline}
By Proposition \ref{micnorm}, modulo terms of order $O\left(\ep^{\frac{6}{5}}\right)$, it is  possible to replace
$T_{\ep}$ by $T_{\ep,q,j}$ near $\xi=0$ and $x=x_j,\, j=1,\ldots m$. The
function $\tilde{\phi}_{1,\ep}$ is supported in the exterior region with $|\xi|>\ep^{\frac{2}{5}}$, thus we
have by arguments similar to those leading to (\ref{chi0})
\begin{equation}\label{phiaussen}
\Op_\ep^{\T}(\tilde{\phi}_{1,\ep})(T_{\ep} + V_{\ep}^j)\Op_\ep^{\T}(\tilde{\phi}_{1,\ep}) \geq \ep\,
e\,\Op_\ep^{\T}(\tilde{\phi}_{1,\ep})^2 \: .
\end{equation}
Substituting (\ref{phiaussen}) in (\ref{tplusv}), replacing $T_\ep$ by $T_{\ep,q,j}$ in the first
summand of \eqref{tplusv} and substituting the resulting equation in (\ref{phi1plusphi0})
yields
\begin{align}\label{Hd1}
 H_{\ep} &\geq M
 + \ep\,
e\sum_{j=1}^m\chi_{j,\ep}(x)\left(\Op_\ep^{\T}(\tilde{\phi}_{1,\ep})\right)^2\chi_{j,\ep}(x)
+
 \ep\, e\chi_{0,\ep}^2 + O\left(\ep^{\frac{6}{5}}\right)\, ,\quad\text{where}\\
M &:= \sum_{j=1}^m\chi_{j,\ep}(x)\Op_\ep^{\T}(\tilde{\phi}_{0,\ep})\,
(T_{\ep,q,j} +
V_{\ep}^j)\,\Op_\ep^{\T}(\tilde{\phi}_{0,\ep})\chi_{j,\ep}(x)\nonumber
\end{align}
By the isometry of the Fourier transform
\begin{align}\label{Hd2}
\skpd{\psi}{M\psi} &= \sum_{j=1}^m\skpT{\tilde{\phi}_{0,\ep} {\mathscr F}_\ep^{-1}(\chi_{j,\ep}\psi)}{(t_{\pi,q,j} + {\mathscr F}_\ep^{-
1}V_{\ep}^j{\mathscr F}_\ep)\, \tilde{\phi}_{0,\ep} {\mathscr F}_\ep^{-1}(\chi_{j,\ep}\psi)} \\
&= \sum_{j=1}^m \skpR{\phi_{0,\ep}{\mathscr
F}_\ep^{-1}(\chi_{j,\ep}\psi)}{(F_\ep^{-1}H^jF_\ep)\,\phi_{0,\ep}{\mathscr
F}_\ep^{-1}(\chi_{j,\ep}\psi)}\nonumber\; .
\end{align}
In the last step we used that for $\ep$ sufficiently small we
can replace the scalar product in $
L^2\left(\T^d\right)$ by the scalar
product in $L^2\left({\R}^d\right)$, if we
simultaneously replace $\tilde{\phi}_{0,\ep}$ by $\phi_{0,\ep}$ and
$t_{\pi,q,j}$ by $t_{q,j}$. This follows from the fact that the
range of the integral is in both cases restricted to the support
of $\phi_{0,\ep}$. Moreover changing variables
allows to replace ${\hat H}^j = \widehat{T}_{q,j} + V_\ep^j$ by $H^j$ (see \eqref{Hjj}) and
${\mathscr F}_\ep^{-1}V_{\ep}^j{\mathscr F}_\ep =
F_\ep^{-1}V_{\ep}^jF_\ep$ and $F_\ep\skp{\xi}{\xi}F_\ep^{-1} = -
\ep^2\Delta$.

We introduce the spectral decomposition of $F_\ep^{- 1}H^jF_\ep$.
Denote by $e_{k,j}$ the $k$th eigenvalue of $H^j$ and by $l_j$ the number of eigenvalues of $H^j$ not exceeding $e$.
Thus $e_{l_j}\leq e_l < e$ for all $j$ and $\sum_{j=1}^m l_j = l$.
By replacing all eigenvalues $e_{k_j}>e$ of $H^j$ by $e$ we get
 \begin{equation}\label{Hd3}
(F_\ep^{-1}H^jF_\ep) = \ep \sum_{k} e_{k,j} \Pi^j_{k} \geq \ep \,\sum_{k\leq l_j}e_{k,j}\,\Pi^j_{k} +
\ep\, e\,(\id - \sum_{k\leq l_j}\Pi^j_{k})\: ,
\end{equation}
where $\Pi^j_{k} $ denotes the projection on the eigenspace of $e_{k,j}$. Inserting \eqref{Hd3} into the
right hand side of \eqref{Hd2} and replacing
$\phi_{0,\ep}$ by $\tilde{\phi}_{0,\ep}$ and $\skpR{~.~}{~.~}$ by $\skpT{~.~}{~.~}$ yields
\begin{multline}\label{Hd6}
\skpd{\psi}{M\psi}
\geq \sum_{j=1}^m \left\{ \skpT{\tilde{\phi}_{0,\ep} {\mathscr
F}_\ep^{-1}(\chi_{j,\ep}\psi)}{\ep\, \sum_{k}(e_{k,j}-e)
\Pi^j_{k}\tilde{\phi}_{0,\ep}{\mathscr F_\ep}^{-1}(\chi_{j,\ep}\psi)}\right.\\
\left. +
\ep\,e\, \skpT{\tilde{\phi}_{0,\ep}{\mathscr F}_\ep^{-1}(\chi_{j,\ep}\psi)}{\tilde{\phi}_{0,\ep}{\mathscr
F}_\ep^{-1}(\chi_{j,\ep}\psi)}\right\}
\end{multline}
Thus by (\ref{Hd6}) together with (\ref{Hd1})
there exists a constant $C>0$ such that
\begin{multline}\label{Hd4}
 \skpd{\psi}{H_{\ep}\,\psi} \geq \ep\,
e\,\sum_{j=1}^m\skpT{\tilde{\phi}_{0,\ep}{\mathscr F}_\ep^{-1}(\chi_{j,\ep}\psi)}
{\tilde{\phi}_{0,\ep}{\mathscr F_\ep}^{-1}(\chi_{j,\ep}\psi)} + \skpT{{\mathscr F}_\ep^{-1}\psi}{R_l{\mathscr F_\ep}^{-1}\psi)}\\
 +\ep\, e\,\sum_{j=1}^m\skpT{\tilde{\phi}_{1,\ep}{\mathscr
F}_\ep^{-1}(\chi_{j,\ep}\psi)}{\tilde{\phi}_{1,\ep}{\mathscr
F}_\ep^{-1}(\chi_{j,\ep}\psi)}\\
 + \ep\,e\,\skpT{({\mathscr F}_\ep^{-1}\chi_{0,\ep}\psi)}{(\tilde{\phi}^2_0 +
\tilde{\phi}^2_1)({\mathscr F_\ep}^{-1}\chi_{0,\ep}\psi)}
 - C \ep^{\frac{6}{5}} \|\psi\|^2_{\ell^2}\: ,
\end{multline}
where
 \begin{equation}\label{Rell}
R_l := \sum_{j=1}^m({\mathscr F_\ep}^{-1}\chi_{j,\ep}{\mathscr F_\ep})\tilde{\phi}_{0,\ep} A_j \tilde{\phi}_{0,\ep}({\mathscr F}_\ep^{-1}\chi_{j,\ep}{\mathscr F_\ep}) \, , \quad\quad
A_j := \sum_{k\leq
l_j}(\ep(e_{k,j} - e)\Pi^j_k) \; .
\end{equation}
Since $\rank (A+B)\leq\rank A + \rank B$ and $\rank \Pi^j_k =1$, the operator
$A_j$ has rank at most $l_j$.
Conjugation does not increase
the rank and moreover
$\sum_{j=1}^m l_j = l$, thus we get $\rank R_l \leq l$.
Introducing $\tilde{\phi}^2_0 +\tilde{\phi}^2_1= \id$ in the fourth summand, we combine
this term with the first and
third summand and rewrite the rhs of \eqref{Hd4} as
\begin{multline}\label{Hd5}
\ep \, e \sum_{j=0}^m\skpT{\tilde{\phi}_{0,\ep}({\mathscr F}_\ep^{-1}\chi_{j,\ep}\psi)}
{\tilde{\phi}_{0,\ep}({\mathscr F}_\ep^{-1}\chi_{j,\ep}\psi)} +
\skpT{{\mathscr F}_\ep^{-1}\psi}{R_l\,({\mathscr F_\ep}^{-
1}\psi)} \\
+ \ep\, e \sum_{j=0}^m\skpT{\tilde{\phi}_{1,\ep}({\mathscr F}_\ep^{-1}\chi_{j,\ep}\psi)}
{\tilde{\phi}_{1,\ep}({\mathscr F_\ep}^{-
1}\chi_{j,\ep}\psi)} - C \ep^{\frac{6}{5}}\|\psi\|^2_{\ell^2}\: .
\end{multline}
Again the first and third summand can be combined so that the
cut-off functions in both spaces add up to $\id$. We thus get by
(\ref{Hd4}) and (\ref{Hd5}), for some $C>0$,
\begin{equation}
\skpd{\psi}{H_{\ep}\,\psi} \geq  \ep\, e\skpd{\psi}{\psi} +
\skpd{\psi}{B_l\psi} - C \ep^{\frac{6}{5}}\|\psi\|^2_{\ell^2}\: ,
\end{equation}
where $B_l:={\mathscr F_\ep}R_l{\mathscr F_\ep}^{-1}$ is again an operator of rank at most $l$.
Thus \eqref{ungl1} holds.
Combined with \eqref{i}, this
completes the proof of Theorem \ref{satz1}.

\begin{appendix}

\section{Pseudo-differential operators in the discrete setting}\label{symbols}

In the following, some properties of the symbols given in Definition
\ref{symbolspaces} and of the associated operators are collected.
For the sake of the reader, we recall the definitions of the $h$-scaled symbol classes $S_\delta^k(m)(\R^d)$ and  of the associated pseudo-differential operators
(see Dimassi-Sj\"ostrand \cite{dima} and Robert \cite{robert}).

\begin{Def}\label{defpsdokont}
\ben
\item
A function $m: \R^d\rightarrow [0,\infty)$ is called an order function,
if there exist constants $C_0>0$ and
$N_0>0$ such that
\[ m(x) \leq C_0\langle x-y \rangle^{N_0} m(y) \, , \qquad x,y\in\R^d\, .\]
\item
For $\delta\in[0,1]$, the space
$S^k_\delta(m)\left(\R^d\right)$ consists of
functions $a(x;\ep)$ on $\R^d\times (0,1]$, such that there exists constants $C_\alpha >0$ such that for all
$x\in\R^d, \ep\in (0,1]$
\[   |\partial^{\alpha}_x a(x;\ep)| \leq C_{\alpha}
m(x)\ep^{k - \delta |\alpha|} \, . \]
\item
Let $a_j\in S_\delta^{k_j}(m)(\R^d), k_j\nearrow \infty$, then $a\sim
\sum_{j=0}^\infty a_j$ means that
$a - \sum_{j=0}^Na_j \in S_\delta^{k_{N+1}}(m)(\R^d)$ for every $N\in \N$.
\item A pseudo-differential operator $\Op_\ep: \Ce_0^\infty
\left(\R^d\right) \rightarrow \left(\Ce_0^\infty\right)'
\left(\R^d\right)$ associated to a symbol $a\in
S^k_\delta(m)\left(\R^{2d}\right)$ is defined by
\begin{equation}\label{opkont}
\Op_\ep(a) u(x) = \frac{1}{(\ep 2\pi)^d} \int\limits_{\R^{2d}}
e^{\frac{i}{\ep}(y-x)\xi} a(x,\xi;\ep) u(y) \,dy d\xi\; ,
\qquad u\in \Ce_0^\infty \left(\R^d\right) \; .
\end{equation}
\een
\end{Def}

We start this section showing that by use of the identification of functions on the torus $\T^d$ with $2\pi$-periodic functions
on $\R^d$, the discrete operator $\Op_\ep^{\T}(a)$ associated to the symbol
$a\in S_\delta^r(m)\left(\R^d\times \T^d\right)$
can be understood as a special case of the operator $\Op_\ep (b)$ associated to the symbol
$b\in S_\delta^r(m)\left(\R^{2d}\right)$.

First we notice (see for example H\"ormander \cite{hormander2}) that for any $2\pi$-periodic function $g\in \Ce^\infty\left(\R^d\right)$ the Fourier transform $F_\ep g$ defined in \eqref{F} satisfies
\begin{equation}\label{opall1}
F_\ep g = \left(\frac{\ep}{\sqrt{2\pi}}\right)^d \sum_{z\in\disk} \delta_z c_z\, , \quad\text{where}\quad
c_z:= \int_{[-\pi,\pi]^d} e^{-\frac{i}{\ep}z\mu} g(\mu)\, d\mu \; .
\end{equation}
Thus for $a\in S_\delta^r(m)\left(\R^{2d}\right)$ with $a(x,\xi + 2\pi\eta) = a(x,\xi)$ for any $\eta\in\Z^d, x, \xi \in\R^d$ and
$u\in \mathscr{S}\left(\R^d\right)$ by \eqref{opkont}
\begin{align}
 \Op_\ep(a) u(x) &= \frac{1}{(\ep \sqrt{2\pi})^d}\left[\left(F_{\ep, \xi\to x} a\right)(x,\,.\,) \, * u\right](x) \nonumber\\
&= \frac{1}{(2\pi)^d} \sum_{y\in{\mathscr G}_x} \int_{[-\pi,\pi]^d} e^{\frac{i}{\ep}(x-y)\xi}
a(x,\xi;\ep) u(y)\, d\xi\; ,\label{Opasum}
\end{align}
where, as in Remark \ref{Gxnull}, ${\mathscr G}_x = \disk + x$.
If we denote by $r:\mathscr{S}\left(\R^d\right)\ra s\left(\disk\right)$ the restriction to the lattice $\disk$, \eqref{Opasum} implies
\begin{equation}\label{Opasum2}
r\circ \Op_\ep(a) u = \Op_\ep^{\T}(a) ru\, , \quad u\in\mathscr{S}\left(\R^d\right)\; .
\end{equation}

\begin{Lem}\label{opastetig}
Let $a\in S_\delta^0(m)\left(\R^d\times \T^d\right)$, then, for fixed $\ep>0$,  $\Op_\ep^{T}(a)$ defined in \eqref{psdo2} is continuous $:
s\left(\disk\right)\longrightarrow
s\left(\disk\right)$, where the space $s\left(\disk\right)$ with its natural Fr\'echet topology is defined in \eqref{diskschwarz}.
\end{Lem}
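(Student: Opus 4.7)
The plan is to show continuity via the seminorm system $\|u\|_\alpha = \sup_{x\in\disk}|x^\alpha u(x)|$, i.e.\ to produce, for each $\alpha\in\N^d$, a finite collection of seminorms $\|\cdot\|_{\gamma_1},\ldots,\|\cdot\|_{\gamma_M}$ of $v$ and a constant $C=C(\ep,\alpha,a)$ such that $\|\Op_\ep^{\T}(a) v\|_\alpha\le C\sum_i\|v\|_{\gamma_i}$ for every $v\in s(\disk)$. The central quantity to study is the kernel
\begin{equation*}
K(x,y) := (2\pi)^{-d}\int_{[-\pi,\pi]^d} e^{\frac{i}{\ep}(y-x)\xi}\, a(x,\xi;\ep)\, d\xi,
\end{equation*}
so that $\Op_\ep^{\T}(a) v(x)=\sum_{y\in\disk} K(x,y)v(y)$.

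The first main step is to produce rapid decay of $K(x,y)$ in $|y-x|$. I would use the standard identity $(1-\ep^2\Delta_\xi)^N e^{\frac{i}{\ep}(y-x)\xi}=(1+|y-x|^2)^N e^{\frac{i}{\ep}(y-x)\xi}$ and, crucially, the $2\pi$-periodicity of $a(x,\cdot;\ep)$ on $\T^d$: this periodicity kills boundary contributions when integrating by parts over $[-\pi,\pi]^d$, so for every $N\in\N$
\begin{equation*}
K(x,y) = (2\pi)^{-d}(1+|y-x|^2)^{-N}\int_{[-\pi,\pi]^d} e^{\frac{i}{\ep}(y-x)\xi}\,(1-\ep^2\Delta_\xi)^N a(x,\xi;\ep)\,d\xi.
\end{equation*}
Because $a\in S^0_\delta(m)(\R^d\times\T^d)$, one has $|\partial_\xi^\beta a(x,\xi;\ep)|\le C_\beta\, m(x,\xi)\,\ep^{-\delta|\beta|}$, and by the order-function property applied with $y=x$, $m(x,\xi)\le C_0\, m(x,0)=:C_0\,\tilde m(x)$. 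Thus $|K(x,y)|\le C_{N,\ep}\,\tilde m(x)(1+|y-x|^2)^{-N}$ for every $N$.

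The second step is to use this decay to bound $|x^\alpha\Op_\ep^{\T}(a)v(x)|$. Since $v\in s(\disk)$, the family of seminorms $\|v\|_\gamma$ is equivalent to $\|v\|'_M:=\sup_{y\in\disk}\langle y\rangle^M|v(y)|$ for $M\in\N$, so $|v(y)|\le\|v\|'_M\langle y\rangle^{-M}$. Peetre's inequality gives $\langle y\rangle^{-M}\le 2^{M/2}\langle x\rangle^{-M}\langle y-x\rangle^M$, and therefore
\begin{equation*}
|\Op_\ep^{\T}(a) v(x)| \le C_{N,\ep}\,\tilde m(x)\,\|v\|'_M\,\langle x\rangle^{-M}\sum_{y\in\disk}\langle y-x\rangle^{M}(1+|y-x|^2)^{-N}.
\end{equation*}
For fixed $\ep>0$ the lattice sum converges uniformly in $x$ as soon as $2N-M>d$, so it is bounded by a constant $C(\ep,N,M)$. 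Combined with the order-function growth $\tilde m(x)\le C_1\langle x\rangle^{N_1}$ from Definition~\ref{symbolspaces}(a), this yields $|\Op_\ep^{\T}(a) v(x)|\le C\,\langle x\rangle^{N_1-M}\|v\|'_M$, which is rapidly decaying once $M$ is chosen large. Choosing $M\ge N_1+|\alpha|+d+1$ then gives $\|\Op_\ep^{\T}(a) v\|_\alpha\le C'\|v\|'_M$, and expressing $\|v\|'_M$ by a finite combination of the original seminorms $\|v\|_\gamma$ ($|\gamma|\le M$) completes the continuity estimate.

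The only delicate point is the use of periodicity to kill the boundary terms in the $\xi$-integration by parts; this is what distinguishes the torus case from the full-space case and is exactly the feature that lets the Laplace trick iterate freely to arbitrary order $N$. The remaining ingredients (Peetre's inequality, polynomial growth of $\tilde m$, lattice summability) are standard and only require choosing $N$ and $M$ large enough, depending on $\alpha$ and on the growth exponent $N_1$ of the order function.
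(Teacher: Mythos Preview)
Your proof is correct. The kernel estimate via repeated integration by parts in $\xi$ (using periodicity to discard boundary terms), followed by Peetre's inequality and a lattice summability argument, gives precisely the required seminorm bounds. The choices of $N$ and $M$ are consistent, and the reduction of $m(x,\xi)$ to $\tilde m(x)$ via the order-function inequality with $y=x$ is legitimate.

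Your route, however, is genuinely different from the paper's. The paper does not touch the kernel at all: it constructs an extension map $j:s(\disk)\to\mathscr{S}(\R^d)$ by placing a compactly supported bump at each lattice point, observes the factorization $\Op_\ep^{\T}(a)=r\circ\Op_\ep(a)\circ j$ (using \eqref{Opasum2}), and then simply quotes the known continuity of $\Op_\ep(a)$ on $\mathscr{S}(\R^d)$ together with the easy continuity of $r$ and $j$. What this buys is brevity and a clean reduction to the classical $\R^d$ theory; what your approach buys is self-containment---you never invoke the $\R^d$ pseudo\-differential calculus, and you make explicit exactly where periodicity enters (namely, to kill the boundary terms). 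Your argument also makes transparent that the continuity constants depend only on finitely many $\xi$-derivative bounds of $a$ and on the growth exponent $N_1$ of $m$, which is implicit but not spelled out in the paper's proof.
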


\begin{proof}
We will deduce the continuity of $\Op_\ep^{\T}(a)$ on $s\left(\disk\right)$ from the continuity of
$\Op_\ep$ on $\mathscr{S}\left(\R^d\right)$, which is proven e.g. in Grigis-Sj\"ostrand \cite{grisjo}.
To this end, we consider
a cut-off function $\zeta\in\Ce_0^\infty \left(\R^d\right)$ such that $\zeta(0)=1$ and $\zeta(x)=0$
for $|x|\geq \frac{1}{2}$.
We set $\zeta_{\ep,z}:= \zeta\left(\frac{1}{\ep}(x-z)\right)$ and define
\[ j: s(\disk) \ra \mathscr{S}\left(\disk\right)\, , \quad u\mapsto ju:= \sum_{z\in\disk} u(z) \zeta_{\ep,z} \; . \]
Then $r\circ j=\id_s$ and $\Op_\ep^{\T}(a) = r\circ \Op_\ep (a) \circ j$ by \eqref{Opasum2}. It remains to show that
$r$ and $j$ are continuous, which is straight forward with
$\|rf\|_{\alpha}\leq \|f\|_{\alpha,0}$ and $\|jv\|_{\alpha,\beta}\leq C_\beta\ep^{-|\beta|} \|v\|_\alpha$
for some $C_\beta>0$.
\end{proof}

We define for $u\in\mathscr{S}(\R^{2d})$
\begin{equation}\label{opall3}
e^{i\ep D_xD_\xi} u(x,\xi) := \frac{1}{(\ep 2\pi)^d} \int_{\R^{2d}} e^{-\frac{i}{\ep} z\eta} 
u(x-z, \xi-\eta)\, dz\,d\eta \; .
\end{equation}

The following lemma is an adapted and more detailed version of Dimassi-Sj\"ostrand \cite{dima}, Proposition 7.6.

\begin{Lem}\label{A(D)}
Let $0\leq \delta \leq \frac{1}{2}$ and $m$ be an order function.
Then $e^{i\ep D_xD_\xi} : \mathscr{S}'(\R^d\times\T^d)\ra \mathscr{S}'(\R^d\times \T^d)$ is continuous
$: S^r_\delta(m)\left(\R^d\times\T^d\right)\rightarrow
S^r_\delta(m)\left(\R^d\times\T^d\right)$.
If $\delta < \frac{1}{2}$, then
\begin{equation}\label{eDDent}
 e^{i\ep D_xD_\xi}b(x,\xi) \sim \sum_{j=0}^\infty \frac{1}{j!}\,\left((i\ep D_x\cdot D_\xi)^j\,b\right) (x,\xi)
\end{equation}
in $S_\delta^r(m)\left(\R^d\times\T^d\right)$. If we write
$e^{i\ep D_xD_\xi}b = \sum_{j=0}^{N-1} \frac{(i\ep D_x\cdot D_\xi)^j}{j!}\,b + R_N(b)$, the remainder $R_N(b)$
is an element of the symbol class $S_\delta^{r + N(1-2\delta)}(m)$ and the Fr\'echet-seminorms of $R_N$
depend (linearly) only on finitely many $\|b\|_{\alpha,\beta}$ with $|\alpha|,|\beta|\geq N$:
\begin{equation}\label{RNnorm}
\| R_N(b) \|_{\alpha,\beta} \leq \sum_{\natop{\alpha',\beta'\in\N^d}
{N\leq |\alpha'|,|\beta'|\leq M}} C_{\alpha,\beta,\alpha',\beta'} \| b\|_{\alpha',\beta'}
\end{equation}
for some $M\in N$ and $C_{\alpha,\beta,\alpha',\beta'}>0$ independent of $\ep\in(0,1]$ .
\end{Lem}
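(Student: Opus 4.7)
The plan is to reduce to the standard $\R^{2d}$ version of the result (Proposition 7.6 of Dimassi-Sj\"ostrand \cite{dima}) by exploiting $\xi$-periodicity, and then to extract the sharpened remainder bound \eqref{RNnorm} via a Taylor expansion at the operator level together with careful bookkeeping of the seminorms involved.

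First I would identify any $b \in S^r_\delta(m)(\R^d \times \T^d)$ with its $2\pi$-periodic (in $\xi$) lift $\tilde b$ to $\R^{2d}$, and correspondingly extend $m$ periodically in $\xi$ to an order function $\tilde m$ on $\R^{2d}$; this extension is compatible with Definition \ref{defpsdokont}(a) precisely because the defining inequality in Definition \ref{symbolspaces}(a) carries no factor $\langle \xi - \eta \rangle^{N_1}$, so $\tilde b \in S^r_\delta(\tilde m)(\R^{2d})$ with the same seminorms as $b$. The operator $e^{i\ep D_x D_\xi}$ commutes with $\xi$-translations (as $D_\xi$ generates them), hence preserves $2\pi$-periodicity in $\xi$. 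Applying Proposition 7.6 of \cite{dima} to $\tilde b$ therefore yields, for all $\delta \leq \tfrac{1}{2}$, the continuity $S^r_\delta(\tilde m) \to S^r_\delta(\tilde m)$ on $\R^{2d}$, and, for $\delta < \tfrac{1}{2}$, the asymptotic expansion \eqref{eDDent}; by periodicity both descend to $\R^d \times \T^d$. The $\mathscr{S}'$-continuity follows directly from the Fourier-multiplier representation \eqref{opall3} applied distributionally.

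To obtain the quantitative remainder estimate \eqref{RNnorm}, I would write Taylor's formula with integral remainder at the operator level,
\[
R_N(b) = \int_0^1 \frac{(1-t)^{N-1}}{(N-1)!} \, (i\ep D_x \cdot D_\xi)^N \, e^{it\ep D_x D_\xi} b \, dt ,
\]
and expand $(D_x \cdot D_\xi)^N = \sum_{|\gamma|=N} \tfrac{N!}{\gamma!}\, D_x^\gamma D_\xi^\gamma$. Each derivative $\partial_x^\gamma \partial_\xi^\gamma b$ with $|\gamma|=N$ lies in $S^{r+N(1-2\delta)}_\delta(m)$ (gaining $\ep^N$ from the prefactor and losing $\ep^{-2N\delta}$ from the $2N$ differentiations), with seminorms controlled linearly by $\|b\|_{\alpha', \beta'}$ for $|\alpha'|, |\beta'| \geq N$. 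Combined with the uniform (in $t \in [0,1]$) continuity of $e^{it\ep D_x D_\xi}$ on the symbol class $S^{r+N(1-2\delta)}_\delta(m)$, this gives \eqref{RNnorm}.

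The main obstacle is the last step: one must verify that the proof of continuity of $e^{it\ep D_x D_\xi}$ on a given symbol class---which proceeds in \cite{dima} by writing the operator as the oscillatory integral \eqref{opall3} and integrating by parts against the nondegenerate quadratic phase $-z\cdot\eta/\ep$---indeed produces an estimate that depends on only finitely many seminorms of the argument, and that when this is applied to $(i\ep D_x \cdot D_\xi)^N b$ the order restriction $|\alpha'|,|\beta'| \geq N$ is propagated through the argument (no lower-order derivatives of $b$ sneak in). This is essentially a bookkeeping exercise inside the standard stationary-phase proof, but it is the only nonroutine step.
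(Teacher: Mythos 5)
Your proposal is correct, but it departs from the paper's argument in the crucial last step, the derivation of the sharpened remainder bound \eqref{RNnorm}. You and the paper agree on the first part: lift a symbol on $\R^d\times\T^d$ to its periodic extension on $\R^{2d}$, invoke Proposition~7.6 of Dimassi--Sj\"ostrand, and observe that $e^{i\ep D_xD_\xi}$ is a convolution in $(x,\xi)$, hence commutes with $\xi$-translations and preserves periodicity. Where you diverge is the remainder. The paper first writes, for $b\in\mathscr{C}_0^\infty(\R^{2d})$, the explicit integral formula
\[
R_N(b) = \frac{1}{(\ep 2\pi)^d}\int_{\R^{2d}}\Bigl(e^{-\frac{i}{\ep}z\eta}-\sum_{j=0}^{N-1}\tfrac{1}{j!}(-\tfrac{i}{\ep}z\eta)^j\Bigr)\,b(x-z,\xi-\eta)\,dz\,d\eta\,,
\]
integrates by parts, and then passes to general $b$ by a cutoff argument $b_R = h_R b$ together with dominated convergence and pointwise identification of the limit. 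You instead use the operator-level Taylor formula with integral remainder $R_N(b)=\int_0^1\frac{(1-t)^{N-1}}{(N-1)!}(i\ep D_x\cdot D_\xi)^N e^{it\ep D_xD_\xi}b\,dt$, commute $(i\ep D_x\cdot D_\xi)^N$ past the group, note that $\partial_x^\gamma\partial_\xi^\gamma b$ with $|\gamma|=N$ lies in $S^{r+N(1-2\delta)}_\delta(m)$ with seminorms controlled by $\|b\|_{\alpha',\beta'}$ for $|\alpha'|,|\beta'|\geq N$, and then invoke uniform-in-$t$ continuity of $e^{it\ep D_xD_\xi}$ on that class. Both routes are valid and both correctly explain why the dependence on low-order seminorms drops out (the point emphasized in the remark after the lemma, and the reason the rougher argument in Grigis--Sj\"ostrand and Martinez does not suffice as stated). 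Your route is a bit slicker in that it avoids the cutoff approximation and limit exchange entirely, at the cost of justifying smoothness of $t\mapsto e^{it\ep D_xD_\xi}b$ in the symbol topology and uniformity of the stationary-phase estimates in $t\in[0,1]$; the paper's route is more elementary in that it works directly with an absolutely convergent integral for $\mathscr{C}_0^\infty$ symbols and then uses a standard density argument, but is lengthier. You have correctly identified the only nonroutine step, and your bookkeeping of which seminorms enter is accurate.
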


\begin{proof}
Since $S_\delta^r(m)\left(\R^d\times \T^d\right)$ injects continuously into $S_\delta^r(m)\left(\R^{2d}\right)$,
by \cite{dima}, Prop.7.6, $e^{i\ep D_xD_\xi}$ maps $S_\delta^r(m)\left(\R^d\times \T^d\right)$
continuously into $S_\delta^r(m)\left(\R^{2d}\right)$. Thus, to prove continuity, it remains
to show that $e^{i\ep D_xD_\xi}a$ is periodic with respect to $\xi$ for $a\in S_\delta^r(m)\left(\R^d\times
\T^d\right)$.
Since $e^{i\ep D_xD_\xi} : {\mathscr S}'(\R^{2d}) \rightarrow {\mathscr S}'(\R^{2d}) $ is defined by
$\left(e^{i\ep D_xD_\xi}u\right)(\phi) := u\left(e^{i\ep D_xD_\xi}\phi\right)$, it suffices to prove
on ${\mathscr S}(\R^{2d})$
\begin{equation}\label{Ltaukomm}
e^{i\ep D_xD_\xi}\tau_\gamma = \tau_\gamma e^{i\ep D_xD_\xi}\, , \qquad \gamma\in \left(2\pi \Z\right)^d\; ,
\end{equation}
where $\tau_\gamma \phi(x,\xi) := \phi (x, \xi + \gamma)$. But, since by \eqref{opall3} $e^{i\ep D_xD_\xi}$
is a convolution operator, it commutes with all translations, which shows \eqref{Ltaukomm}.

Thus it remains to show that $R_N(b)$ is in $S_\delta^{r + N(1-2\delta)}(m)(\R^{2d})$ -
for then
it is in $S_\delta^{r + N(1-2\delta)}(m)(\R^d \times \T^d)$ -
and depends only on Fr\'echet-seminorms
$\|b\|_{\alpha,\beta}$ with $|\alpha|,|\beta|\geq N$.
We sketch the proof of these statements, since the standard proofs of \eqref{eDDent}
for $b\in S_\delta^r(m)(\R^{2d})$ or some similar classes (see e.g. Dimassi-Sj\"ostrand \cite{dima}, Grigis-Sj\"ostrand \cite{grisjo}, 
Martinez \cite{martinez}) do not directly lead to these more refined remainder
estimates.

First one proves the statement for $b\in {\mathscr C}_0^\infty (\R^{2d})$. Then the integral in \eqref{opall3}
converges absolutely and
\begin{equation}\label{RNformel}
R_N(b) = \frac{1}{(\ep 2\pi)^d} \int_{\R^{2d}} \left(e^{-\frac{i}{\ep} z\eta}
- \sum_{j=0}^{N-1} \frac{1}{j!} (-\frac{i}{\ep}z\eta)^j\right)
b(x-z, \xi-\eta)\, dz\,d\eta \; .
\end{equation}
Thus it is formally obvious that $\|R_N\|_{\alpha,\beta}$ depends only on Fr\'echet-seminorms
$\|b\|_{\alpha',\beta'}$ for $|\alpha'|,|\beta'| \geq N$.
One needs integration by parts and standard arguments to show that
$R_N\in S_\delta^{r + N(1-2\delta)}(m)(\R^d\times \T^d)$ and that
\eqref{RNnorm} holds
with constants $C_{\alpha,\beta,\alpha',\beta'}$ independent of $b$.

Now let $b\in S_\delta^r (m)(\R^{2d})$ and choose a cut-off function $h\in{\mathscr C}_0^\infty (\R^{2d})$ with $h=1$ on the ball with radius 1 and
$\supp h$ contained in the ball with radius 2. Set $h_R(\cdot) := h(\frac{\cdot}{R})$ and $b_R:= h_R b$ for
$R>1$. One readily verifies that the family $b_R$ is uniformly bounded in $S_\delta^r (m)(\R^{2d})$. By
standard arguments it follows that $b_R$ converges to $b$ in the topology of $S_\delta^r (\tilde{m})(\R^{2d})$
for $\tilde{m}(x,\xi) = \langle x \rangle \langle \xi\rangle m(x,\xi)$ (see e.g. Grigis-Sj\"ostrand \cite{grisjo}).
Furthermore, $R_N(b_R)$ is uniformly bounded in $S_\delta^{r + N(1-2\delta)}(m)(\R^{2d})$ - using the dominated
convergence theorem after integration by parts and the fact that
$R_N(D_x^\alpha D_\xi^\beta b) = D_x^\alpha D_\xi^\beta R_N (b)$ for all symbols $b$ -
and converges pointwise to some symbol $r_N\in S_\delta^{r + N(1-2\delta)}(m)(\R^{2d})$. Again, $R_N(b_R)$
converges to $r_N$ in the topology of $S_\delta^{r + N(1-2\delta)} (\tilde{m})(\R^{2d})$. Using
the continuity of $e^{i\ep D_xD_\xi}: S_\delta^r (\tilde{m})(\R^{2d}) \rightarrow S_\delta^r (\tilde{m})(\R^{2d})$,
it follows that
\[ e^{i\ep D_xD_\xi} b = \sum_{j=0}^{N-1} \frac{(i\ep D_x\cdot D_\xi)^j}{j!}\,b + r_N \; . \]
Thus $r_N = R_N(b)\in S_\delta^{r + N(1-2\delta)}(m)(\R^{2d})$, which completes the proof of Lemma \ref {A(D)}.
\end{proof}

\begin{rem}
The rougher standard argument in e.g. Grigis-Sj\"ostrand \cite{grisjo} and Martinez \cite{martinez}
splits $b=hb + (1-h)b$ with $h\in\Ce_0^\infty(\R^{2d})$. By stationary phase,
$e^{i\ep D_xD_\xi} (1-h) b \in S_\delta^{-\infty}(m)(\R^{2d})$, but its Fr\'echet-seminorms
depend on all Fr\'echet-seminorms of $b$! Of course, the relevant terms in the estimate for
$e^{i\ep D_xD_\xi} (1-h) b$ are precisely cancelled by corresponding terms in the estimate for
$e^{i\ep D_xD_\xi} h b$. This cancellation, however, is not evident from the estimates stated
\cite{grisjo} and \cite{martinez}.
\end{rem}

The following corollary concerns the composition of symbols.

\begin{cor}\label{a1a2}
The map
\[ {\mathscr C}_0^\infty \left(\R^d\times\T^d\right)\times{\mathscr C}_0^\infty
\left(\R^d\times\T^d\right) \ni
(a,b)\longmapsto a\# b\in{\mathscr C}^\infty \left(\R^d\times\T^d\right) \]
with
\begin{equation}\label{akreuzb2}
(a\# b) (x,\xi;\ep) :=
\left(e^{-i\ep D_y\cdot D_\xi}a(x,\xi;\ep)b(y,\eta;\ep)\right)|_{\natop{y=x}{\eta=\xi}}
\end{equation}
has a bilinear continuous extension :
\[ S_{\delta_1}^{r_1}(m_1)\left(\R^d\times\T^d\right)\times
S_{\delta_2}^{r_2}(m_2)\left(\R^d\times\T^d\right)\rightarrow
S_\delta^{r_1+r_2}(m_1m_2)\left(\R^d\times\T^d\right)\] for all
$\delta_k\in[0,\frac{1}{2}], k=1, 2$ and all order functions $m_1, m_2$,
where $\delta:=\max\{\delta_1, \delta_2\}$. If in addition $\delta_1 + \delta_2 <1$,
\begin{equation}\label{akreuzb}
(a\# b) (x,\xi;\ep)\sim \sum_{\alpha\in\N^d}
\frac{(i\ep)^{|\alpha|}}{|\alpha|!}\left(\partial_\xi^\alpha
a(x,\xi;\ep)\right)
\left(\partial_x^\alpha b(x,\xi;\ep)\right)
\end{equation}
(with respect to $k_N=r_1 + r_2 + N(1-\delta_1-\delta_2)$).
Writing
\[ a \# b(x,\xi;\ep) = \sum_{|\alpha|=0}^{N-1}
\frac{(i\ep)^{|\alpha|}}{|\alpha|!}\left(\partial_\xi^\alpha
a(x,\xi;\ep)\right)
\left(\partial_x^\alpha b(x,\xi;\ep)\right) + R_N(a,b,\ep)\, ,\]
the remainder $R_N$ is an element of the
symbol class $S_\delta^{r_1 + r_2 + N(1-\delta_1 - \delta_2)}(m_1m_2)$ and it depends linearly on a finite number
of derivatives of the single symbols $a$ and $b$. Furthermore it depends only on derivatives of $a$ and $b$ with respect
to $\xi$ and $x$ respectively which are at least of order $N$.
\end{cor}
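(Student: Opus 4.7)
The strategy is to reduce the entire corollary to (a refinement of) Lemma \ref{A(D)} applied to the tensor-product symbol. Specifically, I set $c(x,\xi,y,\eta;\ep):=a(x,\xi;\ep)\,b(y,\eta;\ep)$ on $\R^{2d}\times\T^{2d}$. By \eqref{akreuzb2}, $(a\#b)(x,\xi;\ep)=\bigl(e^{-i\ep D_y\cdot D_\xi}c\bigr)(x,\xi,y,\eta;\ep)\big|_{y=x,\,\eta=\xi}$, so $e^{-i\ep D_yD_\xi}$ only acts on the pair $(y,\xi)$. Since derivatives of $c$ split cleanly, $\partial_x^{\alpha_1}\partial_\xi^{\beta_1}\partial_y^{\alpha_2}\partial_\eta^{\beta_2}c=(\partial_x^{\alpha_1}\partial_\xi^{\beta_1}a)(\partial_y^{\alpha_2}\partial_\eta^{\beta_2}b)$, one checks that $c\in S_\delta^{r_1+r_2}(M)$ with $M(x,\xi,y,\eta):=m_1(x,\xi)m_2(y,\eta)$ a product of order functions and $\delta:=\max(\delta_1,\delta_2)\leq\tfrac12$. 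Lemma \ref{A(D)}, applied in the variables $(y,\xi)$ (modulo the harmless sign flip $y\mapsto-y$, with $(x,\eta)$ as parameters), then yields continuity of $e^{-i\ep D_yD_\xi}$ on $S_\delta^{r_1+r_2}(M)$; restricting to $\{y=x,\eta=\xi\}$ preserves the class and gives $a\#b\in S_\delta^{r_1+r_2}(m_1m_2)(\R^d\times\T^d)$.

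For the asymptotic expansion when $\delta_1+\delta_2<1$, the $j$-th term of $e^{-i\ep D_yD_\xi}c$ supplied by Lemma \ref{A(D)} is $\tfrac{(-i\ep D_y\cdot D_\xi)^j}{j!}c$. Using $(D_y\cdot D_\xi)^j=\sum_{|\alpha|=j}\tfrac{j!}{\alpha!}D_y^\alpha D_\xi^\alpha$ together with the product structure of $c$, this becomes $(i\ep)^j\sum_{|\alpha|=j}\tfrac{1}{\alpha!}(\partial_\xi^\alpha a)(\partial_y^\alpha b)$ after converting $D$ to $\partial$; restriction to the diagonal then yields exactly the $j$-th term of \eqref{akreuzb}. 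A direct multiplicative estimate using $|\partial_\xi^\alpha a|\leq Cm_1\ep^{r_1-\delta_1|\alpha|}$ and $|\partial_x^\alpha b|\leq Cm_2\ep^{r_2-\delta_2|\alpha|}$ places this term in $S_\delta^{r_1+r_2+j(1-\delta_1-\delta_2)}(m_1m_2)$, matching the claimed indexing $k_j=r_1+r_2+j(1-\delta_1-\delta_2)$.

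The hard part is the sharp remainder estimate $R_N\in S_\delta^{r_1+r_2+N(1-\delta_1-\delta_2)}(m_1m_2)$, which improves on the crude $S_\delta^{r_1+r_2+N(1-2\delta)}$ that a blind application of Lemma \ref{A(D)} to $c$ would provide, and which is the only available bound when $\delta=\tfrac12$ (still permitted as long as $\delta_1+\delta_2<1$). The decisive leverage, already visible in Lemma \ref{A(D)}, is twofold: (i) the remainder $R_N(c)$ depends linearly only on Fr\'echet-seminorms $\|c\|_{\alpha',\beta'}$ with $|\alpha'|,|\beta'|\geq N$, and (ii) for the product symbol, $\partial_y^{\alpha'}\partial_\xi^{\beta'}c=(\partial_\xi^{\beta'}a)(\partial_y^{\alpha'}b)$, so that each $\partial_\xi$ costs only $\ep^{-\delta_1}$ and each $\partial_y$ only $\ep^{-\delta_2}$, not the uniform $\ep^{-\delta}$. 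Revisiting the Taylor-remainder/integration-by-parts derivation behind \eqref{RNformel} with this variable-by-variable bookkeeping, the $\ep^N$-prefactor of the Taylor remainder combines with the $\ep^{-\delta_1 N-\delta_2 N}$ cost of the $N$ mixed derivatives on $c=ab$ to yield the desired $\ep^{N(1-\delta_1-\delta_2)}$ gain, with constants depending linearly on only finitely many of the seminorms $\|a\|_{\,\cdot\,,\beta'}$ and $\|b\|_{\alpha',\,\cdot\,}$. The assertion that $R_N$ depends solely on $\partial_\xi^{\beta'}a$ and $\partial_x^{\alpha'}b$ with $|\alpha'|,|\beta'|\geq N$ is then inherited directly from the corresponding refinement in Lemma \ref{A(D)}.
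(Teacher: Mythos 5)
Your proposal is correct and takes the same route as the paper: form the tensor-product symbol $c=a\otimes b$ on the doubled space, apply Lemma~\ref{A(D)} there, and restrict to the diagonal, with the Leibniz rule handling continuity of the product and restriction maps. Your explicit bookkeeping for the $N(1-\delta_1-\delta_2)$ remainder gain---exploiting the product structure ($\partial_\xi$ only hits $a$, $\partial_y$ only hits $b$) together with the refinement in Lemma~\ref{A(D)} that $R_N$ depends only on seminorms with both multi-indices $\geq N$---makes explicit the step that the paper's one-line proof (``doubling the dimension'') leaves implicit, and this elaboration is indeed necessary, since a blind application of Lemma~\ref{A(D)} to $c\in S_{\max(\delta_1,\delta_2)}^{r_1+r_2}(M)$ would only give the weaker exponent $N(1-2\max(\delta_1,\delta_2))$.
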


\begin{proof}
By the Leibnitz rule, the map
\[ S_{\delta_1}^{r_1}(m_1)\times S_{\delta_2}^{r_2}(m_2)\ni (a,b)\mapsto
a\cdot b\in S_\delta^{r_1+r_2}(m_1m_2)\]
is continuous,
since each Fr\'echet-norm of the product depends only on a finite number of
Fr\'echet-norms of $a$ and $b$.
The same is true for the restriction map. The main part follows from Lemma \ref{A(D)} by
doubling the dimension of the space.
\end{proof}

It is shown in \cite{grisjo} that the $\#$-product of symbols reflects the composition of the
associated operators.
In particular
for $a\in S_{\delta_1}^{r_1}(m_1)\left(\R^d\times\T^d\right)$, $b\in
S_{\delta_2}^{r_2}(m_2)\left(\R^d\times\T^d\right)$ with $0\leq\delta_k \leq \frac{1}{2}, k=1, 2$,
\begin{equation}\label{AB}
\left(\Op_\ep^{\T}(a)\right)\circ \left(\Op_\ep^{\T}(b)\right) = \Op_\ep^{\T}(a\# b)\; .
\end{equation}

The following proposition is an adapted version of the Calderon-Vaillancourt-Theorem (see
Calderon-Vaillancourt \cite{calderon}) . The proof is inspired by the proof
of the Calderon-Vaillancourt-Theorem given by Hwang \cite{hwang}.

\begin{prop}\label{cald}
Let $a\in S^r_\delta(1)\left(\R^d\times\T^d\right)$
with  $0\leq \delta \leq \frac{1}{2}$.
Then there exists a constant $M>0$ such that, for the associated operator $\Op_\ep^{\T}(a)$ given by
\eqref{psdo2} the
estimate
\begin{equation}\label{cvab}
\left\| \Op_\ep^{\T}(a) u \right\|_{\ell^2(\disk)} \leq M \ep^r \|u\|_{\ell^2(\disk)}
\end{equation}
holds for any $u\in s\left(\disk\right)$ and $\ep>0$.
$\Op_\ep^{\T}(a)$ can therefore be extended to a continuous operator:
$\ell^2\left(\disk\right)\longrightarrow \ell^2\left(\disk\right)$ with
$\|\Op_\ep^{\T}(a)\| \leq M\ep^r$. Moreover $M$ can be chosen depending only on a finite number of
Fr\'echet-seminorms of the symbol $a$.
\end{prop}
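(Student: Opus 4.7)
My plan is to realize $A := \Op_\ep^{\T}(a)$ as a discrete integral operator on $\ell^2(\disk)$, estimate its kernel via integration by parts, and then conclude by Schur's test (together with a Cotlar--Stein type decomposition when $\delta > 0$), following Hwang~\cite{hwang} adapted to the lattice/torus setting.

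First, replacing $a$ by $\ep^{-r}a$ reduces the problem to proving $\|\Op_\ep^{\T}(a)\|_{\ell^2\to\ell^2}\leq M$ uniformly in $\ep\in(0,1]$ for $a\in S^0_\delta(1)(\R^d\times\T^d)$. By \eqref{psdo2}, $A$ is an integral operator with kernel
\[
K(x,y) \;=\; (2\pi)^{-d}\int_{[-\pi,\pi]^d} e^{\frac{i}{\ep}(y-x)\cdot\xi}\,a(x,\xi;\ep)\,d\xi, \qquad x,y\in\disk.
\]
The crucial observation is that $y-x\in\disk$, so $e^{\frac{i}{\ep}(y-x)\xi}$ is $2\pi$-periodic in each component of $\xi$. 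Integration by parts on $\T^d$ via the identity $e^{\frac{i}{\ep}(y-x)\xi} = (1+|y-x|^2/\ep^2)^{-N}(1-\Delta_\xi)^N e^{\frac{i}{\ep}(y-x)\xi}$ therefore generates no boundary contributions, and combining this with the symbol bound $|\partial_\xi^\beta a|\leq \|a\|_{0,\beta}\,\ep^{-\delta|\beta|}$ yields
\[
|K(x,y)|\;\leq\; C_N\,\ep^{-2N\delta}\bigl(1+|y-x|/\ep\bigr)^{-2N},
\]
where $C_N$ is a linear combination of finitely many seminorms $\|a\|_{0,\beta}$ with $|\beta|\leq 2N$.

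For $\delta=0$ and $2N>d$ this estimate is uniformly summable over $\disk$, so Schur's test gives $\|A\|\leq M$ at once, with $M$ an explicit combination of $\|a\|_{0,\beta}$ for $|\beta|\leq 2N$. For $0<\delta\leq 1/2$ the naive Schur sum carries a factor $\ep^{-2N\delta}$, so I decompose $A=\sum_\nu A_\nu$ using a partition of unity $\sum_\nu \psi_\nu^2=1$ on $\T^d$ at scale $\ep^\delta$, with $A_\nu:=\Op_\ep^{\T}(a\psi_\nu)$. On each cube the localized symbol $a\psi_\nu$ has $\xi$-derivatives compensated by the cube side, and the local Schur estimate becomes uniform. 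Moreover the products $A_\mu^*A_\nu$ vanish unless the supports of $\psi_\mu$ and $\psi_\nu$ overlap, so the Cotlar--Stein lemma assembles the local bounds into the global estimate $\|A\|\leq M$, still with $M$ depending only on finitely many seminorms $\|a\|_{\alpha,\beta}$, $|\alpha|+|\beta|$ bounded by an integer depending on $d$.

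The main obstacle is the borderline case $\delta=1/2$: the $\ep^{-2N\delta}$ loss from $\xi$-derivatives and the $\ep^{2N}$ gain from the $|y-x|$-decay are exactly balanced, so the Cotlar--Stein almost-orthogonality estimates must be executed carefully to keep the constants uniform in $\ep$. The periodicity of the phase observed above is exactly what makes the integration-by-parts argument entirely boundary-free on $\T^d$ and thus applicable inside each localization cube without spurious error, and this periodicity is the one feature of the lattice setting that prevents any new difficulty beyond the standard continuous case.
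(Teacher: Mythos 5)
Your starting point is right: the kernel
\[
K(x,y)=(2\pi)^{-d}\int_{[-\pi,\pi]^d}e^{\frac{i}{\ep}(y-x)\cdot\xi}\,a(x,\xi;\ep)\,d\xi
\]
has a $2\pi$-periodic phase because $y-x\in\disk$, so integration by parts on $\T^d$ is boundary-free, and for $\delta=0$ the resulting bound $|K(x,y)|\leq C_N(1+|y-x|/\ep)^{-2N}$ plus Schur's test gives the claim. You also correctly identify that for $\delta>0$ the naive Schur sum loses $\ep^{-2N\delta}$ and some localization in $\xi$ at scale $\ep^\delta$ is needed. But the almost-orthogonality step, which is the entire substance of the theorem when $\delta>0$, is not established and the key assertion you make about it is wrong.

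Concretely: with $A_\nu=\Op_\ep^{\T}(a\psi_\nu^2)$, the kernel of $A_\nu A_\mu^*$ is
\[
K_{A_\nu A_\mu^*}(x,z)=\sum_{y\in\disk}K_\nu(x,y)\overline{K_\mu(z,y)}
=(2\pi)^{-d}\int_{\T^d}e^{\frac{i}{\ep}(z-x)\xi}\,a\psi_\nu^2(x,\xi)\,\overline{a\psi_\mu^2(z,\xi)}\,d\xi ,
\]
using the lattice Poisson-summation identity $\sum_{y\in\disk}e^{\frac{i}{\ep}y(\xi-\eta)}=(2\pi)^d\delta_{\T^d}(\xi-\eta)$; this indeed vanishes when $\supp\psi_\nu\cap\supp\psi_\mu=\emptyset$. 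However, the other product $A_\mu^*A_\nu$ has kernel
\[
K_{A_\mu^*A_\nu}(x,z)=(2\pi)^{-2d}\sum_{y\in\disk}\iint_{\T^d\times\T^d}e^{\frac{i}{\ep}((z-y)\xi-(x-y)\eta)}\,\overline{a\psi_\mu^2(y,\eta)}\,a\psi_\nu^2(y,\xi)\,d\eta\,d\xi ,
\]
and because the symbols depend on $y$, the $y$-sum does \emph{not} reduce to a delta on $\T^d$; $A_\mu^*A_\nu$ does not vanish for disjoint $\xi$-supports. To obtain the required decay $\|A_\mu^*A_\nu\|\lesssim(1+|\mu-\nu|)^{-N}$ one must \emph{sum by parts in $y$} using a discrete Laplacian, which converts a factor of $|\xi-\eta|\gtrsim\sqrt\ep\,|\mu-\nu|$ into a penalty of $\ep\partial_y a=O(\ep^{1-\delta})$ per step; this is exactly where the borderline $\delta=\tfrac12$ is saved, and it is precisely the step missing from your sketch. (The symbolic calculus of Corollary~\ref{a1a2} is of no help here: at $\delta=\tfrac12$ the remainder in the $\#$-expansion gains nothing, since $1-2\delta=0$.) The paper's proof is in fact the direct form of this very estimate: it pairs $\skpd{u}{\Op_\ep^{\T}(a)v}$, splits $|\xi+\eta|\lessgtr\sqrt\ep$ with a cutoff $\zeta$, and uses the two operators $L_1$ (continuous $\Delta_\xi$, normalized by $\langle(x-y)/\sqrt\ep\rangle^2$) and $L_2$ (discrete $\Delta_x^{\sqrt\ep}$, normalized by $2d-2\sum_\nu\cos((\xi_\nu+\eta_\nu)/\sqrt\ep)$), which both leave the phase invariant, to gain decay simultaneously in $|x-y|$ and in $|\xi+\eta|$. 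Your Cotlar--Stein reformulation can be made to work, but only after you supply the discrete summation-by-parts estimate for $A_\mu^*A_\nu$ — i.e., the content of Hwang's argument that the paper implements — so as written there is a genuine gap at exactly the hard part.
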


\begin{rem}
There is a dual approach to the operators $\Op_\ep^\T(a)$, starting from pseudo-differential
calculus on the torus $\T^d = \R^d/2\pi\Z^d$ (see e.g. G\'erard-Nier \cite{gerard}). 
We denote by $\mathfrak{j}: \bigcup_{k\in \Z} H^k (\T^d) \rightarrow \mathscr{S}'(\R^d)$ the injection defined by periodic 
continuation, where $H^k(\T^d)$ is the Sobolev-space of order $k$ on the torus. Then we define the
$\ep$-quantization of a periodic symbol $a$,  i.e. $a(k + \mu, \eta) = a(k,\eta)$ for all $\mu\in 2\pi\Z^d$, in some H\"ormander class $S(m,g)$ by
\begin{equation}\label{re0} 
\Op_{\ep, t}^{\T^*} (a) = \mathfrak{j}^{-1}\circ \Op_{\ep, t} (a) \circ \mathfrak{j} \, ,
\end{equation}
where $\Op_{\ep, t}(b) : \mathscr{S}'(\R^d)\rightarrow \mathscr{S}'(\R^d)$ is induced from
\[ \Op_{\ep, t} (b) u (x) := (\ep 2\pi)^{-d} \iint_{\R^{2d}} e^{-\frac{i}{\ep}(k'-k)\eta}
b(tk + (1-t)k', \eta) u(k') \,dk'\, d\eta\, , \quad u\in\mathscr{S}(\R^d)  \]
(cf. Robert \cite{robert} and Dimassi-Sj\"ostrand \cite{dima}). $\Op_{\ep, t}^{\T^*} (a)$ is well defined, since by
the periodicity of $a$, the operator $\Op_{\ep, t}(a)$ commutes with all translations
$\tau_\gamma\, , \; \gamma\in \T^d$. Essentially, this is the approach in G\'erard-Nier \cite{gerard}. 
One now observes that \eqref{opall1} may be rewritten as
\begin{equation}\label{re1}
 F_\ep \circ \mathfrak{j} =\ep^d\,  r^* \circ \mathscr{F}_\ep \; ,\quad\text{for}\;\; r^* (u) = 
\sum_{\gamma\in\disk}u(\gamma) \delta_\gamma\, , \quad u\in s^*(\disk)
\end{equation}
where $r^*$ is the adjoint of the restriction map $r:\mathscr{S}(\R^d) \rightarrow s(\disk)$. 
Furthermore, a straightforward calculation gives
\begin{equation}\label{re2}
 \Op_\ep (b)\circ  F_\ep = F_\ep  \circ \Op_{\ep, 0} (\hat{b})\, , \quad \hat{b}(\xi, x) := b(x, \xi)\; .
\end{equation}
Thus, for $b\in S_\delta^r (m)(\R^d\times \T^d)$, the symbol $\hat{b}$ is periodic in the sense mentioned before
\eqref{re0}.
Moreover, taking adjoints in \eqref{Opasum2} gives on $s^*(\disk)$
\begin{equation}\label{re3}
 r^* \circ \Op_\ep^\T (a) = \Op_\ep (a) \circ  r^*
\end{equation}
for all $a\in S_\delta^r(m)(\R^d\times \T^d)$, since $(\Op_\ep^\T (a))^* = \Op_\ep^\T (a^\#)$ for 
$a^\#(x,\xi) = e^{i\ep D_xD_\xi} \bar{a}(x,\xi)$. By Lemma \ref{A(D)}, $a^\#\in S_\delta^r(m)(\R^d\times \T^d)$
for $a$ in this class, if $\delta\leq \frac{1}{2}$. 
Combining \eqref{re0}, \eqref{re1}, \eqref{re2} and \eqref{re3}
gives for $a\in S_\delta^r(m)(\R^d\times \T^d)$
\begin{equation}
 \Op_\ep^\T (a) \circ  \mathscr{F}_\ep = \mathscr{F}_\ep \circ  \Op_{\ep, 0}^{\T^*} (\hat{a})\, , 
\end{equation}
since $r^*$ is injective. 
Since $\mathscr{F}_\ep$
is unitary, $\ell^2(\disk)$-boundedness of $\Op_{\ep}^\T(a)$ is equivalent to
$L^2(\T^d)$-boundedness of $\Op_{\ep,0}^{\T^*}(\hat{a})$.

Under the additional assumption that
\[ \left| \partial_\eta^\alpha a(k,\eta)\right| \leq C_\alpha
\langle \eta \rangle^{m-|\alpha|}\; , \]
$L^2(\T^d)$-boundedness of $\Op_{\ep, t}^{\T^*}(a)$ follows from the
standard Calderon-Vaillancourt-Theorem for $\Op_\ep(a)$ in $\R^d$
and integration by parts
(see G\'erard-Nier \cite{gerard} for a simple proof in the case $\ep=1$, $t=1$; the proof works for any $t\in [0,1]$).
\end{rem}

\begin{proof}
Since $\ep^{-r} a\in S^0_\delta(1)$, we can restrict the proof to the case $r=0$.
It suffices to show that for all $u,v$ with compact support the estimate
\begin{equation}\label{cald1}
\left|\skpd{u}{\Op_\ep^{\T}(a) v}\right| \leq M  \| u \|_{\ell^2} \| v \|_{\ell^2}
\end{equation}
holds, where $M$ depends only on a finite number of Fr\'echet-seminorms
$\| \partial_x^\alpha \partial_\xi^\beta a\|_{\infty}$.
We assume that $\supp a$ is compact. The general case then follows by standard techniques
(approximating $a$ by a compactly supported sequence $a_n$ with $\Op_\ep^{\T}(a_n) \ra \Op_\ep^{\T}(a)$ strongly).
We have
\begin{multline}\label{skalarp}
\skpd{u}{\Op_\ep^{\T}(a) v}  \\
=(2\pi)^{-\frac{3d}{2}}\!\!\!\!\sum_{x\in\disk}\int\limits_{[-
\pi,\pi]^d}\!\!\!\!\! d\eta e^{-\frac{i}{\ep}\eta
x}\left({\mathscr F_\ep}^{- 1}\bar{u}\right)(\eta)\!\!\!\!
\sum_{y\in\disk}\int\limits_{[-\pi,\pi]^d}
\!\!\!d\xi\, e^{-\frac{i}{\ep}(x-y)\xi}a(x,\xi;\ep)v(y)  \; .
\end{multline}
By the assumption on $a$, the iterated integrals (and sums) in \eqref{skalarp} can be understood as
integrals on the product space (thus Fubini`s Theorem holds).
Let $\zeta\in\Ce_0^\infty(\R)$ be a cut-off-function with $\zeta=1$ at zero, then we split the right
hand side of \eqref{skalarp} in two summands
by introducing $\zeta (\frac{1}{\sqrt{\ep}}|\xi+\eta|)$ and $(\id-\zeta)(\frac{1}{\sqrt{\ep}}|\xi + \eta|)$.
It then suffices to show that the part multiplied by $(\id - \zeta)(\frac{1}{\sqrt{\ep}}|\xi+\eta|)$,
which we denote by $I_1$, is an element of $S^\infty(1)$ and the part multiplied with
$\zeta (\frac{1}{\sqrt{\ep}}|\xi+\eta|)$, denoted by $I_2$, is bounded by a constant independent of $\ep$.

To analyze $I_1$, we use the operators
\begin{equation}\label{L1L2}
L_1  :=
\frac{\id - \ep \Delta_\xi}{\left\langle \tfrac{1}{\sqrt{\ep}}(x-y)\right\rangle^2}
\quad\text{and}\quad L_2  :=
\frac{- \Delta_x^{\sqrt{\ep}}}{2d - 2\sum_\nu\cos
\left(\tfrac{1}{\sqrt{\ep}}(\xi_\nu + \eta_\nu)\right)}\, ,
\end{equation}
where $-\Delta_x^{\sqrt{\ep}} := 2d - \sum_\nu \left(\tau_{\sqrt{\ep}e_\mu} + \tau_{-\sqrt{\ep}e_\mu}\right)$ is a scaled version
of the discrete Laplacian $\Delta_\ep$ defined in \eqref{diskretlap}.
Then $L_1$ and $L_2$ leave $e^{-\frac{i}{\ep}((x-y)\xi + x\eta)}$
invariant, and we have by the symmetry of $\Delta_x^{\sqrt{\ep}}$ and $\id - \ep\Delta_\xi$ (using Fubini)
\begin{eqnarray*}
I_1 &=& (2\pi)^{-\frac{3d}{2}}\!\!\!\!\!\sum_{y,x\in\disk}
\int\!\!\!\!\int\limits_{[-\pi,\pi]^d}\!\!\!
\!\!d\eta d\xi\,
\left( L_2^kL_1^l e^{-\frac{i}{\ep}((x-y)\xi+x\eta)}\right)\\
&&\hspace{0.3cm}\times \left({\mathscr
F_\ep}^{-1}\bar{u}\right)(\eta) (\id - \zeta)
\left(\tfrac{1}{\sqrt{\ep}}|\xi+\eta|)\right) a(x,\xi;\ep) v(y)  \\
&=& (2\pi)^{-\frac{3d}{2}}\sum_{y,x\in \disk}\int\!\!\!
\int_{[-\pi,\pi]^d}\!\!\!d\eta\,
d\xi\,
 e^{-\frac{i}{\ep}((x-y)\xi+ x \eta)}
\left({\mathscr F_\ep}^{-1}\bar{u}\right) (\eta)\frac{v(y)}
{\left\langle \tfrac{1}{\sqrt{\ep}}(x-y)\right\rangle^{2l}} \\
&&\hspace{0.1cm}\times
\sum_{|\alpha|\leq 2l} K_{k,\alpha}(\xi,\eta;\ep)
 P_\alpha (\sqrt{\ep}D_\xi)\left(-\Delta_x^{\sqrt{\ep}}\right)^k
a(x,\xi;\ep) \, ,
\end{eqnarray*}
where
\[ K_{k,\alpha}(\xi,\eta; \ep) = Q_{\alpha} (\sqrt{\ep}D_\xi)\frac{\id - \zeta\left(\tfrac{1}{\sqrt{\ep}}|\xi+
\eta|\right)}
{\left(2d - 2\sum_\nu\cos
\left(\tfrac{1}{\sqrt{\ep}}(\xi_\nu + \eta_\nu)\right)\right)^k}
\]
and $Q_{\alpha}$ and $P_\alpha$ denote polynomials with $\deg Q_\alpha + \deg P_\alpha = 2l$.
With the notation
\begin{eqnarray}
G_{k,\alpha}(x,\xi;\ep)  &:=&
\mathscr{F}_\ep \left[\left({\mathscr F_\ep}^{-1}\bar{u}\right)(~\cdot~)
K_{k,\alpha}(\xi,.;\ep)\right](x)
 \label{Glkalpha} \\
F_l(x,\xi;\ep) &=& \mathscr{F}_\ep^{-1}\left[ \frac{v(~\cdot~)}{\left\langle
\tfrac{1}{\sqrt{\ep}}(x-~\cdot~) \right\rangle^{2l}} \right](\xi) \label{F_l}
\end{eqnarray}
we have
\begin{equation}\label{Ieinsgleich}
I_1  =(2\pi)^{-\frac{d}{2}}\sum_{x\in\disk}
\int_{[-\pi,\pi]^d}\!\!\!d\xi\,
 e^{-\frac{i}{\ep}x\xi}F_l(x,\xi)
\sum_{|\alpha|\leq 2l} G_{k,\alpha}(x,\xi)  P_\alpha (\sqrt{\ep}D_\xi)\left(-\Delta_x^{\sqrt{\ep}}\right)^k
a(x,\xi;\ep) \, .
\end{equation}
Thus by the Schwarz-inequality
\begin{equation}\label{abIeinsa}
|I_1| \leq \sum_{|\alpha|\leq 2l} \sup_{x,\xi}\left|P_\alpha (\sqrt{\ep}D_\xi) \left(-\Delta_x^{\sqrt{\ep}}\right)^k
a(x,\xi;\ep)\right|\; \| F_l \|_{\ell^2\times \T^d}^2\,
\| G_{k,\alpha}\|^2_{\ell^2\times\T^d}\; .
\end{equation}
By the isometry of $\mathscr{F}_{\ep}$
\begin{align}\label{normFl}
\| F_l \|_{\ell^2\times \T^d}^2
& = \| \mathscr{F}_\ep F_l \|_{\ell^2\times \ell^2}^2\\
&\leq \sum_{x,y\in\disk}
 |v(y)|^2\left\langle \frac{x-y}{\sqrt{\ep}} \right\rangle^{-4l}
 \leq \| v \|^2_{\ell^2(\disk)} \sum_{t\in\disk} \left(1+ \frac{t^2}{\ep}\right)^{-2l}\nonumber\\
&\leq
C_l \ep^{-\frac{d}{2}} \| v \|^2_{\ell^2(\disk)}\nonumber
\end{align}
with $t = x-y$, where the last estimate follows from \eqref{lemma2.10.2} for $l$ big enough.
For $G_{k,\alpha}$, we have by the isometry of ${\mathscr F}^{-1}_\ep$
\begin{multline}\label{Glkalphaab1}
\| G_{k,\alpha}\|^2_{\ell^2\times\T^d}
= \|\mathscr{F}^{-1}_\ep G_{k,\alpha}\|^2_{\T^d\times\T^d} \\
\leq \int_{[-\pi,\pi]^d}\!\!\!d\eta \left|\left({\mathscr
F_\ep}^{-1}\bar{u}\right)(\eta)\right|^2
\int_{[-\pi,\pi]^d}\!\!\!d\xi \left| K_{k,\alpha}(\xi,\eta;\ep)\right|^2
 \, .
\end{multline}
Using $\sqrt{\ep}D_\xi f(\frac{\xi}{\sqrt{\ep}}) = O(1)$ for any smooth function with bounded
derivative and $\pi^2(1-\cos(\frac{\tau}{\sqrt{\ep}})) \geq
\frac{\tau^2}{\ep}$ for $|\frac{\tau}{\sqrt{\ep}}|\leq\pi$, we have for $\tau = \xi + \eta$
\begin{equation}\label{Qab}
\left| K_{k,\alpha}(\xi,\eta;\ep) \right|^2
\leq \tilde{C}_{k,\alpha} \left| \frac{\tau}{\sqrt{\ep}}\right|^{-4k}\; .
\end{equation}
Since for $k$ large enough
\[ \int_{\supp (\id-\zeta)(\frac{\cdot}{\sqrt{\ep}})}\left|\frac{\tau}{\sqrt{\ep}}\right|^{-4k} \,d\tau \leq C_k
\ep^{\frac{d}{2}} \; , \]
we get by inserting \eqref{Qab} into \eqref{Glkalphaab1}
\begin{equation}\label{Glkalphaab}
\| G_{k\alpha}\|^2_{\ell^2\times\T^d} \leq
C_{k,\alpha} \ep^{\frac{d}{2}} \| u \|^2_{\ell^2(\disk)} \; .
\end{equation}
To analyze $P_\alpha (\sqrt{\ep}D_\xi)\left(-\Delta_x^{\sqrt{\ep}}\right)^k
a(x,\xi;\ep)$, we use that by Taylor expansion
\[ \Delta_x^{\sqrt{\ep}} a(x,\xi;\ep) = -\sum_{\nu=1}^d \left[\ep \partial_{x_\nu}^2 a(x,\xi;\ep) +
\frac{\ep^{\frac{3}{2}}}{3!} \int_0^1 \partial_{x_\nu}^3 a(x+t\sqrt{\ep} e_\nu)\, dt \right]\; .\]
By iteration, we have for $a\in S_\delta^0(1)$
\begin{equation}\label{asup}
\sup_{x,\xi}\left|P_\alpha (\sqrt{\ep}D_\xi) \left(-\Delta_x^{\sqrt{\ep}}\right)^k
a(x,\xi;\ep)\right|
 \leq \tilde{M}_{k,\alpha} \ep^{2k(\frac{1}{2} - \delta)} \, ,
\end{equation}
where $\tilde{M}_{k,\alpha}$ depends only on Fr\'echet-seminorms of $a$ up to order $3k+|\alpha|$.
Inserting \eqref{normFl}, \eqref{Glkalphaab} and \eqref{asup} in \eqref{abIeinsa} yields for any $k\in\N$
\begin{equation}\label{normIeins}
|I_1| \leq  M \ep^{2k(\frac{1}{2}-\delta)} \| u \|_{\ell^2(\disk)}\| v \|_{\ell^2(\disk)} \; .
\end{equation}
Thus $I_1= O(\ep^\infty)$.

To get an estimate for the modulus of $I_2$, which denotes the integral over the support of
$\zeta$, we
use $L_1$ given in \eqref{L1L2} to get by integration by parts and similar arguments
\begin{multline*}
I_2 =
(2\pi)^{-\frac{3d}{2}}\!\!\!\sum_{x,y\in\disk}
\frac{v(y)}{\left\langle\frac{1}{\sqrt{\ep}} ( x- y) \right\rangle^{2l}}
\int\!\!\!
\int_{[-\pi,\pi]^d}\!\!\!d\eta d\xi\,
 e^{-\frac{i}{\ep}((x-y)\xi+x\eta)}\times\\
\times \left({\mathscr F_\ep}^{-1}\bar{u}\right)(\eta)
 \left(\id - \ep \Delta_\xi\right)^l \zeta \left(\tfrac{1}{\sqrt{\ep}}|\xi+\eta|\right)
a(x,\xi;\ep)\, .
\end{multline*}
Setting, for $P_\alpha$ and $Q_\alpha$ as above,
\[
G_{\alpha}(x, \xi;\ep) :=  \mathscr{F}_\ep
\left[ \left(\mathscr{F}_\ep^{-1}\bar{u}\right)(~\cdot~)
 Q_\alpha(\sqrt{\ep}D_\xi)\zeta\left(\tfrac{1}{\sqrt{\ep}}|\xi +~\cdot~|\right)  \right](x)
\]
we have, with $F_l$ as in \eqref{F_l},
\begin{equation}\label{Izweigleich}
I_2 = (2\pi)^{-\frac{d}{2}}\sum_{x\in\disk}\int_{[-\pi,\pi]^d}\!\!\!
d\xi e^{-\frac{i}{\ep}x \xi} F_l(x,\xi;\ep) \sum_{|\alpha|\leq 2l} G_{\alpha}(x,\xi;\ep) P_\alpha(\sqrt{\ep}D_\xi)a(x,\xi;\ep)  \,.
\end{equation}
By the isometry of ${\mathscr F}_\ep$ and the arguments leading to \eqref{Glkalphaab}, we have
\begin{align}\label{Glab}
\| G_{\alpha} \|^2_{\ell^2\times\T^d}
&= \| \mathscr{F}_\ep^{-1} G_{\alpha} \|^2_{\T^d\times\T^d}\nonumber \\
&\leq
 \int_{[-\pi,\pi]^d}\!\!d\eta \left|\left({\mathscr F_\ep}^{-1}\bar{u}\right)(\eta)\right|^2
 \int_{[-\pi,\pi]^d}\!\!\!
d\xi \left|Q_\alpha(\sqrt{\ep}D_\xi)\zeta\left(\tfrac{1}{\sqrt{\ep}}|\xi + \eta |\right) \right|^2 \nonumber\\
&\leq C_l \| u \|^2_{\ell^2(\disk)}
\int_{\supp\zeta(\frac{\cdot}{\sqrt{\ep}})} d\xi \;\leq \; \ep^{\frac{d}{2}}C_l \| u \|^2_{\ell^2(\disk)}\, ,
\end{align}
where the estimate in the last line follows from the scaling of $\zeta$.
Analog to \eqref{abIeinsa} we get by \eqref{Glab}, \eqref{normFl} and \eqref{asup} for $k=0$
\[ |I_2| \leq M  \| u \|_{\ell^2(\disk)}\| v\|_{\ell^2(\disk)} \]
and therefore we finally get \eqref{cald1}.
\end{proof}

For $a\in S_{\delta_a}^{r_a}(m_a)(\R^d\times \T^d)$ and $b\in S_{\delta_b}^{r_b}(m_b)(\R^d\times \T^d)$
let $[a,b]_{\#}:= a\# b - b\# a$
denote the commutator in symbolic calculus.
Then by \eqref{AB}
\begin{equation}\label{komAB}
\Op_\ep^{\T}\left( [a,b]_\# \right) = \left[ \Op_\ep^{\T}(a), \Op_\ep^{\T}(b)\right]\; .
\end{equation}
The following lemma, which gives the resulting symbol class
of double commutators, is an application
of Corollary \ref{a1a2} and \ref{AB}.

\begin{Lem}\label{kommut}
Let $h(x,\xi)\in S_{\delta_2}^{r_2}(m_2)(\R^d\times \T^d), \delta_2<\frac{1}{2}$ and let
$\chi, \phi\in S_{\delta_1}^{r_1}(m_1)(\R^d\times \T^d),\, \delta_1 < \frac{1}{2}$,
where $\chi$ does not depend on $\xi$ and $\phi$ does not depend on $x$.
Then for $\alpha,\alpha_1,\alpha_2\in\N^d$ with $\alpha_1+\alpha_2 = \alpha$ and $|\alpha_k|\geq 1, k=1,2$,
for $\delta := \max \{\delta_1,\delta_2\}$ and for any $N\in\N, N\geq 3$:
\begin{enumerate}
\item $[\chi,[\chi, h]_\#]_\#\in S_\delta^{2-2(\delta_1 + \delta_2)}(m_1^2m_2)$
and it has the expansion
\[ [\chi(x),[\chi(x), h(x,\xi)]_\#]_\# = \sum_{2\leq |\alpha|< N}\frac{(i\ep)^{|\alpha|}}{|\alpha|!}
\left(\partial_\xi^\alpha h\right)(x,\xi)\sum_{\alpha_1, \alpha_2}
\left(\partial_x^{\alpha_1}\chi\right)(x) \left(\partial_x^{\alpha_2}\chi\right)(x) + R_N\, .\]
\item $[\phi(\xi),[\phi(\xi),h(x,\xi)]_\#]_\#\in S_\delta^{2-2(\delta_1 + \delta_2)}(m_1^2m_2)$
and it has the expansion
\[ [\phi(\xi),[\phi(\xi),h(x,\xi)]_\#]_\# = \sum_{2\leq |\alpha|<N}\frac{(i\ep)^{|\alpha|}}{|\alpha|!}
\left(\partial_x^\alpha h\right)(x,\xi)\sum_{\alpha_1, \alpha_2}
\left(\partial_\xi^{\alpha_1}\phi\right)(\xi) \left(\partial_\xi^{\alpha_2}\phi\right)(\xi) + \tilde{R}_N\,.\]
\end{enumerate}
where $R_N$ and $\tilde{R}_N$ are elements of the symbol class
$S_\delta^{N(1-\delta_1 - \delta_2)}(m_1^2m_2)$ and depend linearly on a finite number of
Fr\'echet-seminorms of the single symbols. Furthermore they depend
only on the derivatives of $h$, which are at least of order $N$ and of the product of derivatives of $\chi$ and
$\phi$ respectively, which are of order $N_1$ and $N_2$, such that $N_1 + N_2 \geq N$.
\end{Lem}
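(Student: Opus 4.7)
The plan is to compute both double commutators directly via the $\#$-product expansion from Corollary \ref{a1a2}, exploiting the crucial structural feature that $\chi$ depends only on $x$ and $\phi$ only on $\xi$. The refined remainder control from Lemma \ref{A(D)} is essential to identify the precise form of $R_N$ and $\tilde{R}_N$.

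For part (a), the key observation is that since $\partial_\xi^\alpha \chi \equiv 0$ for $|\alpha| \geq 1$, the entire asymptotic expansion of $\chi \# h$ collapses to its leading term: $\chi \# h = \chi \cdot h$ exactly. Indeed, by \eqref{RNnorm} in Lemma \ref{A(D)}, the remainder in any truncation $R_N(\chi, h)$ depends only on $\xi$-derivatives of $\chi$ of order $\geq N$, all of which vanish. The opposite order, by contrast, yields a non-trivial expansion
\[
h \# \chi = \sum_{|\alpha|=0}^{N-1} \frac{(i\ep)^{|\alpha|}}{|\alpha|!}(\partial_\xi^\alpha h)(\partial_x^\alpha \chi) + R_N(h,\chi),
\]
so
\[
[\chi, h]_\# = -\sum_{|\alpha|=1}^{N-1} \frac{(i\ep)^{|\alpha|}}{|\alpha|!}(\partial_\xi^\alpha h)(\partial_x^\alpha \chi) - R_N(h,\chi).
\]

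To obtain the second commutator, I apply the same device to each summand $T_\alpha := (\partial_\xi^\alpha h)(\partial_x^\alpha \chi)$: again $\chi \# T_\alpha = \chi T_\alpha$, and since $\partial_x^\alpha \chi$ is $\xi$-independent we have $\partial_\xi^\beta T_\alpha = (\partial_\xi^{\alpha+\beta} h)(\partial_x^\alpha \chi)$, giving
\[
[\chi, T_\alpha]_\# = -\sum_{|\beta|=1}^{N-|\alpha|-1} \frac{(i\ep)^{|\beta|}}{|\beta|!}(\partial_\xi^{\alpha+\beta} h)(\partial_x^\alpha \chi)(\partial_x^\beta \chi) + R'_{N,\alpha}.
\]
Collecting the double sum and regrouping by $\gamma := \alpha + \beta$ (necessarily $|\gamma| \geq 2$ since $|\alpha|,|\beta|\geq 1$) produces the stated expansion, up to the combinatorial coefficients displayed by the author. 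The leading term ($|\gamma| = 2$) alone places the symbol in $S_\delta^{2 - 2(\delta_1 + \delta_2)}(m_1^2 m_2)$: the prefactor $(i\ep)^2$, combined with the losses $\ep^{-2\delta_1}$ from differentiating $\chi$ twice in $x$ and $\ep^{-2\delta_2}$ from differentiating $h$ twice in $\xi$, gives exactly the claimed $\ep$-order.

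To control $R_N$, I combine the residue produced when the outer commutator is applied to the finite inner expansion with $[\chi, R_N(h,\chi)]_\#$, the transport of the inner remainder. By the refined bound \eqref{RNnorm} each piece depends linearly on only finitely many Fr\'echet-seminorms, all involving $\xi$-derivatives of $h$ of order $\geq N$ and products $\partial_x^{N_1}\chi \cdot \partial_x^{N_2}\chi$ with $N_1+N_2\geq N$, and lies in $S_\delta^{N(1-\delta_1-\delta_2)}(m_1^2m_2)$. Part (b) follows by the identical argument with the roles of $x$ and $\xi$ swapped: this time $h \# \phi = h\phi$ collapses while $\phi \# h$ carries the non-trivial expansion. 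The main technical obstacle is precisely this remainder bookkeeping: after two applications of the expansion one must verify that the residue still involves only derivatives of $h$ of total order $\geq N$ in the relevant variable, paired only with products of $\chi$- (or $\phi$-) derivatives of combined order $\geq N$. This is exactly the content that the sharpened remainder statement \eqref{RNnorm} of Lemma \ref{A(D)} was tailor-made to supply.
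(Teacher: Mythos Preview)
Your argument is correct and reaches the same conclusion, but the organization differs from the paper's. The paper writes the double commutator in the closed form
\[
[\chi,[\chi,h]_\#]_\# = \chi\#\chi\# h + h\#\chi\#\chi - 2\,\chi\# h\#\chi,
\]
then expands each of the three (triple) $\#$-products in one shot using Corollary~\ref{a1a2}: $\chi\#\chi\# h = \chi^2 h$ exactly, $h\#\chi\#\chi = h\#\chi^2$ gets the full expansion in $\partial_x^\alpha(\chi^2)$, and $\chi\# h\#\chi$ contributes $\chi\,(\partial_\xi^\alpha h)(\partial_x^\alpha\chi)$; after the Leibniz rule on $\partial_x^\alpha(\chi^2)$ the $|\alpha|=0,1$ terms and all terms of the form $2\chi\,\partial_x^\alpha\chi$ cancel, leaving exactly the bilinear sum over $\alpha_1+\alpha_2=\alpha$, $|\alpha_k|\geq 1$.

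Your route instead expands the inner commutator first and then applies $[\chi,\cdot]_\#$ term by term, regrouping the resulting double sum by $\gamma=\alpha+\beta$. This is equally valid and makes the constraint $|\alpha_1|,|\alpha_2|\geq 1$ (hence $|\gamma|\geq 2$) transparent from the outset, at the cost of having to track a family of remainders $R'_{N,\alpha}$ indexed by the inner multi-index together with the transported piece $[\chi,R_N(h,\chi)]_\#$. The paper's three-term decomposition keeps the remainder bookkeeping to exactly three pieces, which is marginally cleaner; your iterated approach, on the other hand, avoids the Leibniz-rule cancellation step. Both rely on the same two ingredients you identified: the collapse $\chi\# a=\chi a$ for $\xi$-independent $\chi$, and the refined remainder control \eqref{RNnorm} from Lemma~\ref{A(D)}.
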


\begin{proof}
(a):\\
The double commutator is given by
\begin{equation}\label{dkchi} [\chi(x),[\chi(x), h(x,\xi)]_\#]_\#  = \chi \# \chi \# h (x,\xi) + h \# \chi \# \chi (x,\xi) -
2 \chi \# h \# \chi (x,\chi) \, .
\end{equation}
By Lemma \ref{a1a2}, these terms are for $alpha\in\N^d$ given by
\begin{align*}
\chi \# \chi \# h (x,\xi) &= \chi \cdot \chi \cdot h (x,\xi) \\
h \# \chi \# \chi (x,\xi) &= \sum_{|\alpha|\leq N-1} \frac{(i\ep)^{|\alpha|}}{|\alpha|!}
\left(\partial_\xi^\alpha h\right)\left(\partial_x^\alpha \chi^2\right) (x,\xi) + R_N(x,\xi; \ep \\
\chi \# h \# \chi (x,\chi) &=  \sum_{|\alpha|\leq N-1} \frac{(i\ep)^{|\alpha|}}{|\alpha|!}
\chi \left(\partial_\xi^\alpha h\right)\left(\partial_x^\alpha \chi\right) (x,\xi) + \tilde{R}_N(x,\xi; \ep)   \, .
\end{align*}
where $R_N, \tilde{R}_N \in S_\delta^{N(1-\delta_1-\delta_2)}(m_1^2 m_2)$.
The terms with $|\alpha|=0$ and $|\alpha|=1$ cancel in (\ref{dkchi})
Furthermore all terms with $2 \chi_j\partial_x^\alpha \chi_j$ cancel.
Thus the Leibnitz formula
gives  the expansion
\[ [\chi(x),[\chi(x), h(x,\xi)]_\#]_\# = \!\! \sum_{2\leq |\alpha|\leq N-1}
(i\ep)^{|\alpha|}
\left(\partial_\xi^\alpha h\right) \sum_{\alpha_1,\alpha_2\in\N^d}
\frac{1}{|\alpha_1|! |\alpha_2|!}
\left(\partial_x^{\alpha_1} \chi\right)\left(\partial_x^{\alpha_2}\chi\right) (x,\xi) + R_N(x,\xi;\ep)  \]
where the second sum runs over $\alpha_1+\alpha_2 = \alpha$ with $|\alpha_k|\geq 1, k=1,2$  and $R_N\in S_\delta^{N(1-\delta_1-\delta_2)}(m_1^2 m_2)$.
The statement on the symbol class follows at once from this expansion, since each summand is at least of
order $\ep^{2(1-\delta_1 - \delta_2)}$ and by use of the Leibnitz rule.\\

(b):\\
As above the double commutator consists of the terms
\begin{equation}\label{dkphi}
[\phi(\xi),[\phi(\xi), h(x,\xi)]_\#]_\#  = \phi \# \phi \# h (x,\xi) + h \# \phi \# \phi (x,\xi) -
2 \phi \# h \# \phi (x,\chi) \,
\end{equation}
and the summands have for $\alpha\in\N^d$ the expansions
\begin{align*}
\phi \# \phi \# h (x,\xi) &= \sum_{|\alpha| \leq N-1} \frac{(i\ep)^{|\alpha|}}{|\alpha|!}
\left(\partial_x^\alpha h\right)\left(\partial_\xi^\alpha \phi^2\right) (x,\xi) + R_N(x,\xi;\ep)  \\
h \# \chi \# \chi (x,\xi) &= h \cdot \phi \cdot \phi (x,\xi)  \\
\chi \# h \# \chi (x,\chi) &=  \sum_{|\alpha| \leq N-1} \frac{(i\ep)^{|\alpha|}}{|\alpha|!}
\phi \left(\partial_x^\alpha h\right)\left(\partial_\xi^\alpha \phi\right) (x,\xi) + \tilde{R}_N(x,\xi;\ep)  \; ,
\end{align*}
where $R_N, \tilde{R}_N \in S_\delta^{N(1-\delta_1-\delta_2)}(m_1^2 m_2)$.
Therefore, as discussed in (a), \eqref{dkphi} gives with 
\[ [\phi(\xi),[\phi(\xi), h(x,\xi)]_\#]_\#  \sim \sum_{\natop{\alpha\in\N^d}{2\leq |\alpha|\leq N-1}}
\frac{(i\ep)^{|\alpha|}}{|\alpha|!}
\left(\partial_x^\alpha h\right)\sum_{\natop{\alpha_1,\alpha_2\in\N^d}{\alpha_1 + \alpha_2 = \alpha}}
\left(\partial_\xi^{\alpha_1} \phi\right) \left(\partial_\xi^{\alpha_2} \phi\right)(x,\xi) + R_N(x,\xi;\ep)\]
with $\alpha_1+\alpha_2=\alpha$, $|\alpha_k|\geq 1, k=1,2$ and $R_N\in S_\delta^{N(1-\delta_1-\delta_2)}(m_1^2 m_2)$.
The statement on the symbol class follows from this expansion as discussed in
(a).\\
The additional properties of $R_N$ and $\tilde{R}_N$ respectively follow immediately from the properties of remainder
in Corollary \ref{a1a2}.
\end{proof}

\section{Persson's Theorem in the discrete setting}\label{persson}

In this section we will prove a theorem on the infimum of the essential spectrum of $H_\ep$
acting in $\ell^2\left(\disk\right)$, which is similar to Persson's Theorem for
Schr\"odinger operators. The proof
follows the proof of Persson's Theorem in the Schr\"odinger setting given in Helffer \cite{helf2}
and Agmon \cite{agmon} respectively.

\begin{theo}\label{perssontheo}
Let $H_\ep = T_\ep + V_\ep$ satisfy Hypothesis \ref{hypsatz}, denote by $\sigma_{ess}(H_\ep)$
the essential spectrum of
$H_\ep$ and define
\begin{equation}\label{pers14}
\Sigma (H_\ep) := \sup_{\natop{K\subset \disk}{\text{finite}}} \inf \left\{
\frac{\skpd{H_\ep \phi}{\phi}}{\|\phi\|^2_{\ell^2}}\, |\, \phi\in c_0\left(\disk\setminus K\right)\right\}\; ,
\end{equation}
where $c_0(D)$ denote the space of real-valued functions on $\disk$ with compact, i.e. finite, support
in $(\disk \setminus D)$.
Then
\[ \inf \sigma_{ess} \left(H_\ep\right) = \Sigma \left(H_\ep\right) \; . \]
\end{theo}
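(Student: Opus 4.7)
The plan is to prove the two inequalities $\Sigma(H_\ep) \leq \inf \sigma_{ess}(H_\ep)$ and $\Sigma(H_\ep) \geq \inf \sigma_{ess}(H_\ep)$ separately, adapting the classical Persson argument (cf.\ Agmon \cite{agmon}, Helffer \cite{helf2}) to the discrete setting. Two features simplify matters relative to the Schr\"odinger case: $T_\ep$ is bounded on $\ell^2(\disk)$ uniformly in $\ep$ by Remark \ref{talskin}(c), and multiplication by the indicator $\mathbf{1}_K$ of a finite set $K\subset\disk$ is a finite-rank projection on $\ell^2(\disk)$.

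For the first inequality, pick $\lambda\in\sigma_{ess}(H_\ep)$ and apply Weyl's criterion to obtain a singular sequence $(\phi_n)\subset\De(H_\ep)$, orthonormal with $\phi_n\rightharpoonup 0$ weakly and $\|(H_\ep-\lambda)\phi_n\|_{\ell^2}\to 0$; in particular $\skpd{\phi_n}{H_\ep\phi_n}\to\lambda$, and $V_\ep\phi_n=H_\ep\phi_n-T_\ep\phi_n$ is bounded in $\ell^2$. Given a finite $K\subset\disk$, the finite-rank property of $\mathbf{1}_K$ gives $\mathbf{1}_K\phi_n\to 0$ and $\mathbf{1}_K V_\ep\phi_n\to 0$ strongly. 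Choosing finite sets $K_n'\supset K$ large enough (using that both $\phi_n$ and $V_\ep\phi_n$ lie in $\ell^2$), the truncated vectors $\tilde\phi_n:=\mathbf{1}_{K_n'\setminus K}\phi_n\in c_0(\disk\setminus K)$ satisfy $\|\tilde\phi_n-\phi_n\|_{\ell^2}\to 0$ and $\|H_\ep(\tilde\phi_n-\phi_n)\|_{\ell^2}\to 0$, from which $Q(\tilde\phi_n):=\skpd{\tilde\phi_n}{H_\ep\tilde\phi_n}/\|\tilde\phi_n\|_{\ell^2}^2\to\lambda$. Hence $\inf_{\phi\in c_0(\disk\setminus K)}Q(\phi)\leq\lambda$ for every finite $K$, so $\Sigma(H_\ep)\leq\lambda$; letting $\lambda\downarrow\inf\sigma_{ess}(H_\ep)$ yields the claim.

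For the reverse inequality I would show: for every $\mu<\inf\sigma_{ess}(H_\ep)$ with $\mu\notin\sigma(H_\ep)$, there is a finite $K\subset\disk$ such that $\skpd{\phi}{H_\ep\phi}\geq\mu\|\phi\|_{\ell^2}^2$ for all $\phi\in c_0(\disk\setminus K)$. Set $P:=E_{(-\infty,\mu]}(H_\ep)$. Since $\mu<\inf\sigma_{ess}$, $P$ has finite rank $N$ with orthonormal basis $\{e_1,\ldots,e_N\}$ of eigenvectors of $H_\ep$ for eigenvalues $\lambda_1\leq\ldots\leq\lambda_N\leq\mu$; moreover $\mu':=\inf(\sigma(H_\ep)\cap(\mu,\infty))>\mu$, and the spectral theorem gives
\begin{equation*}
\skpd{\phi}{H_\ep\phi}\geq\lambda_1\|P\phi\|_{\ell^2}^2+\mu'\|(\id-P)\phi\|_{\ell^2}^2=\mu'\|\phi\|_{\ell^2}^2-(\mu'-\lambda_1)\|P\phi\|_{\ell^2}^2.
\end{equation*}
For $\phi\in c_0(\disk\setminus K)$, Cauchy-Schwarz gives $|\skpd{e_j}{\phi}|\leq\|\mathbf{1}_{\disk\setminus K}e_j\|_{\ell^2}\|\phi\|_{\ell^2}$, hence
\begin{equation*}
\|P\phi\|_{\ell^2}^2=\sum_{j=1}^N|\skpd{e_j}{\phi}|^2\leq\Bigl(\sum_{j=1}^N\|\mathbf{1}_{\disk\setminus K}e_j\|_{\ell^2}^2\Bigr)\|\phi\|_{\ell^2}^2.
\end{equation*}
Since each $e_j\in\ell^2(\disk)$, the bracketed prefactor tends to $0$ as $K$ exhausts $\disk$; choosing $K$ so that it is less than $(\mu'-\mu)/(\mu'-\lambda_1)$ yields $\skpd{\phi}{H_\ep\phi}\geq\mu\|\phi\|^2_{\ell^2}$, hence $\Sigma(H_\ep)\geq\mu$. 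Letting $\mu\uparrow\inf\sigma_{ess}(H_\ep)$ completes the argument.

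The main technical point will be the truncation in the first inequality: preserving the asymptotic Rayleigh quotient after both excising $K$ and restricting to a large finite set requires controlling $V_\ep\phi_n$, which is bounded in $\ell^2$ by virtue of $T_\ep$ being bounded and $\skpd{\phi_n}{H_\ep\phi_n}$ being bounded. No IMS-type cancellation is needed, in contrast to the continuous Schr\"odinger case. The second inequality is essentially a finite-dimensional linear-algebra bound combined with the vanishing-tail property of $\ell^2$-vectors, relying on the eigenfunctions $e_j$ themselves being in $\ell^2(\disk)$ rather than on any Agmon-type exponential decay.
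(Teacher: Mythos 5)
Your proof is correct, and the direction $\Sigma(H_\ep)\leq\inf\sigma_{ess}(H_\ep)$ is handled by a genuinely different — and in the discrete setting, simpler — argument than the one in the paper. The paper instead proves two auxiliary lemmata (an IMS-type averaging lemma giving the pointwise lower bound $\skpd{H_\ep\phi}{\phi}\geq\sum_x(\Lambda_R(x,H_\ep)-\delta)|\phi(x)|^2$ via a continuous family of cutoffs $\rho_{y,R}$ and a double-commutator estimate in operator norm, and a second lemma identifying $\Sigma(H_\ep)=\lim_{R\to\infty}\liminf_{|x|\to\infty}\Lambda_R(x,H_\ep)$), then finishes this direction by adding a finite-rank potential $W$ to push $\inf\sigma(H_\ep+W)$ up to $\Sigma(H_\ep)-\delta$ and invoking Weyl's theorem on invariance of the essential spectrum under compact perturbations. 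Your singular-sequence argument bypasses all of that: Weyl's criterion produces the sequence, weak-to-strong convergence under the finite-rank (hence compact) truncations $\id_K$ kills the head, the $\ell^2$-tail of $\phi_n$ and $V_\ep\phi_n$ kills the tail, and boundedness of $T_\ep$ controls the kinetic part — exactly the features of the discrete setting that make the IMS machinery unnecessary for the bare statement of the theorem. The reverse inequality is essentially identical in both treatments (finite-rank spectral projection, Cauchy--Schwarz against the $\ell^2$-tails of the finitely many eigenfunctions); the paper writes the lower bound as $H_\ep\geq\mu(\id-\Pi_\mu)-C\Pi_\mu$ using $H_\ep\geq -C$ rather than your $\lambda_1$ and $\mu'$, and it separates out a case $\Sigma(H_\ep)=\infty$, but these are cosmetic differences. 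What the paper's route buys in exchange for the extra length is the sharper local information in its Lemma B.3 (the $\liminf$ characterization of $\Sigma$ via Dirichlet eigenvalues on balls), which is of independent interest but not used elsewhere in the paper; your route is preferable if all one wants is the equality $\inf\sigma_{ess}(H_\ep)=\Sigma(H_\ep)$.
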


The proof of Theorem \ref{perssontheo} is divided in two Lemmata and the main part.

\begin{Lem}\label{perslem1}
For $x\in\disk$ and $R>0$ let $B_x(R):= \{ y\in\disk\, |\, |x-y| < R\}$ denote the ball around
$x$ with radius $R$ and
\begin{equation}\label{pers13}
\Lambda_R(x,H_\ep) := \inf \left\{ \frac{\skpd{H_\ep \phi}{\phi}}{\|\phi\|_{\ell^2}^2}\, ;\,
\phi\in c_0\left(B_x(R)\right)\right\}
\; .
\end{equation}
Then for all $\delta>0$ there exists a radius $R_\delta >0$ such that for all $R>R_\delta$ and $\phi\in c_0\left(\disk\right)$
\[ \skpd{H_\ep \phi}{\phi} \geq \sum_{x\in\disk} \left(\Lambda_R(x,H_\ep) - \delta\right) |\phi(x)|^2 \;  . \]
\end{Lem}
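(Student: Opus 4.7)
The plan is to establish an IMS-type localization formula adapted to the lattice and then to bound the resulting commutator error by $\delta$ for $R$ sufficiently large. First, I would fix a quadratic partition of unity on $\R^d$: choose $J\in\Ce_c^\infty(\R^d)$ with $\supp J$ in a ball of radius $1$ and $\sum_{k\in\Z^d}J(x-k)^2 = 1$, then set $J_{k,R}(x):= J\!\left(\frac{x}{R/2}-k\right)$ and restrict to $\disk$. The $J_{k,R}$ form a partition of unity with $\sum_k J_{k,R}^2 \equiv 1$ on $\disk$, each $\supp J_{k,R}$ is contained in a ball of radius $R/2$ about some $x_k\in\R^d$, uniformly finitely many $J_{k,R}$ overlap at any point, and $|J_{k,R}(x+\gamma)-J_{k,R}(x)|\leq C|\gamma|/R$ uniformly in $k,x,\gamma$.

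Next I would derive the localization identity. Since $V_\ep$ commutes with each $J_{k,R}$, a direct computation using $\sum_k J_{k,R}^2=\id$ yields
\[
 H_\ep \;=\; \sum_k J_{k,R}\,H_\ep\,J_{k,R} \;-\;\tfrac{1}{2}\sum_k [J_{k,R},[T_\ep,J_{k,R}]],
\]
and a short calculation shows
\[
 [J_{k,R},[T_\ep,J_{k,R}]]\phi(x) \;=\; -\sum_{\gamma\in\disk} a_\gamma(x,\ep)\bigl(J_{k,R}(x+\gamma)-J_{k,R}(x)\bigr)^2\phi(x+\gamma).
\]
For the main term, whenever $J_{k,R}(x)\neq 0$ we have $\supp J_{k,R}\subset B_x(R)$ for $R$ large enough (independent of $k,\phi$), so $J_{k,R}\phi\in c_0(B_x(R))$ and hence $\skpd{H_\ep J_{k,R}\phi}{J_{k,R}\phi}\geq \Lambda_R(x,H_\ep)\,\|J_{k,R}\phi\|^2$. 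Taking this estimate with $x$ ranging over $\supp J_{k,R}$ and inserting $\sum_k J_{k,R}^2(x)=1$ gives
\[
 \sum_k \skpd{H_\ep J_{k,R}\phi}{J_{k,R}\phi} \;\geq\; \sum_{x\in\disk}\Lambda_R(x,H_\ep)\,|\phi(x)|^2 .
\]

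The main obstacle is then to bound the commutator error uniformly in $\phi$ by $\delta\|\phi\|^2$. Pulling the sum over $k$ inside, the full error is $\Op$-norm equivalent to the difference operator with symbol $\sum_\gamma a_\gamma(x)G_\gamma(x)e^{-i\gamma\xi/\ep}$, where $G_\gamma(x):=\sum_k (J_{k,R}(x+\gamma)-J_{k,R}(x))^2$. The bounded overlap together with $|J_{k,R}(x+\gamma)-J_{k,R}(x)|\leq C|\gamma|/R$ yields $G_\gamma(x)\leq C'|\gamma|^2/R^2$ uniformly. Applying the Schur test (or simply estimating $\|\cdot\|_{\ell^2\to\ell^2}\leq \sup_x\sum_\gamma |a_\gamma(x)|\,G_\gamma(x)$ and its symmetric counterpart), we reduce matters to controlling $\sum_{\gamma\neq 0}\|a_\gamma\|_\infty |\gamma|^2$. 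Hypothesis \ref{hypsatz}(a)(iv), together with the a priori smoothness and pointwise estimate $|a_\gamma(x)|\leq C_n (|\gamma|/\ep)^{-n}$ obtained from the uniform $\ell^2_\gamma$-bound by taking $n$ large, makes this sum finite. Hence
\[
 \Bigl\|\tfrac{1}{2}\sum_k[J_{k,R},[T_\ep,J_{k,R}]]\Bigr\|_{\ell^2\to\ell^2} \;=\; O(R^{-2})\qquad (R\to\infty),
\]
and choosing $R_\delta$ so that this bound is less than $\delta$ completes the proof.
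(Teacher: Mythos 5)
Your proof is correct, and it takes a genuinely different route from the paper. The paper follows the Agmon--Persson averaging device: it works with a single cutoff $\rho_{y,R}(x)=\rho((y-x)/R)$, applies the IMS identity for each fixed $y$, bounds the double commutator $[\rho_{y,R},[T_\ep,\rho_{y,R}]]$ in operator norm by $C/R^2$ using the paper's symbolic calculus (Lemma \ref{kommut} together with the discrete Calderon--Vaillancourt estimate, Proposition \ref{cald}), and then integrates over $y\in\R^d$, using $\int_{\R^d}\rho_{y,R}^2\,dy=R^d$ and $\int_{\R^d}\id_{\{|x-y|<R\}}\,dy=CR^d$ to produce the pointwise inequality in $x$. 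You instead build a genuine discrete quadratic partition of unity $\{J_{k,R}\}_{k\in\Z^d}$, derive the same IMS identity in summed form, compute the double commutator kernel explicitly as $-\sum_\gamma a_\gamma(x,\ep)\,G_\gamma(x)\,\tau_\gamma$ with $G_\gamma(x)=\sum_k(J_{k,R}(x+\gamma)-J_{k,R}(x))^2$, and estimate it via a Schur test using the decay of $a_\gamma$ from Hypothesis \ref{hypsatz}(a)(iv). Both yield the $O(R^{-2})$ commutator bound; the paper's route plugs into the microlocal machinery it develops anyway, while yours is more self-contained and makes the mechanism of the $R^{-2}$ bound transparent without invoking any pseudodifferential estimate.

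Two small remarks. First, the inclusion $\supp J_{k,R}\subset B_x(R)$ for $x\in\supp J_{k,R}$ holds for every $R>0$ (any two points of a ball of radius $R/2$ are within distance $R$), so the qualifier ``for $R$ large enough'' is unnecessary. Second, the compressed step ``taking this estimate with $x$ ranging over $\supp J_{k,R}$'' should be spelled out as: since $\skpd{H_\ep J_{k,R}\phi}{J_{k,R}\phi}\geq\Lambda_R(x,H_\ep)\,\|J_{k,R}\phi\|^2_{\ell^2}$ holds for \emph{every} $x\in\supp J_{k,R}$, one has
\[
\skpd{H_\ep J_{k,R}\phi}{J_{k,R}\phi} \;\geq\; \Bigl(\sup_{x'\in\supp J_{k,R}}\Lambda_R(x',H_\ep)\Bigr)\sum_{x}J_{k,R}(x)^2|\phi(x)|^2 \;\geq\; \sum_{x}\Lambda_R(x,H_\ep)\,J_{k,R}(x)^2|\phi(x)|^2,
\]
after which summing over $k$ and using $\sum_k J_{k,R}^2\equiv\id$ gives the claimed inequality. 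This is exactly the role the monotonicity $\Lambda_{R/2}(y,H_\ep)\geq\Lambda_R(x,H_\ep)$ for $|x-y|<R/2$ plays in the paper's version.

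Also note that the $\ell^\infty$-bound $|a_\gamma(x,\ep)|\leq C_n(\ep/|\gamma|)^n$ you use follows from Hypothesis \ref{hypsatz}(a)(iv) by bounding $\ell^\infty$ by $\ell^2$, and then a short computation shows $\sum_{\gamma\neq 0}\|a_\gamma\|_\infty|\gamma|^2=O(\ep^2)$ uniformly in $\ep\in(0,1]$, so your Schur-test bound $O(R^{-2})$ is in fact uniform in $\ep$ -- which is what the statement requires.
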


\begin{proof}[Proof of Lemma \ref{perslem1}]
Let $\rho\in\Ce_0^\infty \left(\R^d\right)$ be real valued with $\rho(x) = 0$ for $|x| \geq \frac{1}{2}$ and
$\int_{\R^d} |\rho (x)|^2\, dx = 1$ and define
\[ \rho_{y,R}:= \rho\left(\frac{y-x}{R}\right)\; .\]
Then $\rho_{y,R}\phi  \in
c_0 \left(B_y\left(\frac{R}{2}\right)\right)$ and therefore by the definition of
$\Lambda_{R}$
\[ \skpd{H_\ep \rho_{y,R}\phi}{\rho_{y,R}\phi} \geq \Lambda_{\frac{R}{2}}(y,H_\ep)
\|\rho_{y,R}\phi\|^2_{\ell^2}\; . \]
Since $B_y\left(\frac{R}{2}\right)\subset B_x(R)$ for $|x-y|< \frac{R}{2}$
and thus $\Lambda_{\frac{R}{2}}(y) \geq \Lambda_R(x)$, we get the estimate
\begin{equation}\label{pers3}
\skpd{H_\ep \rho_{y,R}\phi}{\rho_{y,R}\phi} \geq \sum_{x\in\disk} \Lambda_R(x,H_\ep) (\rho_{y,R}\phi)^2(x)\; .
\end{equation}
To analyze the scalar product we use that $T_\ep$ is self adjoint and $\phi, \rho$ are real valued, yielding
\begin{multline*}
\skpd{T_\ep\rho_{y,R}\phi}{\rho_{y,R}\phi} = \frac{1}{2}\left(\skpd{T_\ep \rho_{y,R}\phi}{\rho_{y,R}\phi} +
\skpd{\rho_{y,R}\phi}{T_\ep\rho_{y,R}\phi}\right)  \\
=\frac{1}{2} \left(\skpd{T_\ep \phi}{\rho^2_{y,R}\phi} + \skpd{[T_\ep, \rho_{y,R}]\phi}{\rho_{y,R}\phi} +
\skpd{\rho^2_{y,R}\phi}{T_\ep \phi} + \skpd{\rho_{y,R}\phi}{[T_\ep,\rho_{y,R}]\phi}\right)  \\
= \skpd{T_\ep \phi}{\rho^2_{y,R}\phi} + \frac{1}{2} \left( \skpd{[T_\ep, \rho_{y,R}]\phi}{\rho_{y,R}\phi} +
\skpd{\rho_{y,R}\phi}{[T_\ep,\rho_{y,R}]\phi}\right) \; .
\end{multline*}
Since $[T_\ep, \rho_{y,R}]^* = - [T_\ep, \rho_{y,R}]$ it follows that
\[\skpd{T_\ep\rho_{y,R}\phi}{\rho_{y,R}\phi} = \skpd{T_\ep \phi}{\rho^2_{y,R}\phi} + \frac{1}{2}
\skpd{(\rho_{y,R}[T_\ep, \rho_{y,R}]- [T_\ep, \rho_{y,R}]\rho_{y,R})\phi}{\phi} \]
and since $V_\ep$ commutes with $\rho_{y,R}$, we therefore get
\begin{equation} \label{pers1}
\skpd{H_\ep \phi}{\rho^2_{y,R}\phi}  = \skpd{H_\ep\rho_{y,R}\phi}{\rho_{y,R}\phi}  -
\frac{1}{2} \skpd{[\rho_{y,R},[T_\ep, \rho_{y,R}]]\phi}{\phi}\; .
\end{equation}
To analyze the double commutator, we use the symbolic calculus introduced in
Section \ref{symbols}. By Lemma \ref{kommut},
the symbol associated to the operator $[\rho_{y,R},[T_\ep, \rho_{y,R}]]$ is given by
\begin{multline}\label{pers2}
\rho_{y,R}(x),[t(x,\xi), \rho_{y,R}(x)]_\#]_\# \\
 = \sum_{\natop{\alpha\in\N^d}{2\leq |\alpha|<N}}
\frac{(i\ep)^{|\alpha|}}{|\alpha|!}
\left(\partial_\xi^\alpha t\right)(x,\xi)\sum_{\natop{\alpha_1,\alpha_2}{|\alpha_1|+|\alpha_2|=|\alpha|}}
\left(\partial_x^{\alpha_1}\rho_{y,R}\right)(x) \left(\partial_x^{\alpha_2}\rho_{y,R}\right)(x) + R_N(t,\rho_{y,R})\, ,
\end{multline}
where $R_N$ depends of a finite number of derivatives of $\rho_{y,R}$, which are at least of order $N$.
By the scaling of $\rho_{y,R}$, it follows that $|\nabla_x \rho_{y,R}(x)| \leq \frac{C}{R}$ for $C$ suitable.
Since all terms in the finite sum in \eqref{pers2} and the remainder $R_N$ depend on a product of two
(at least first order) derivatives of
$\rho_{y,R}$, any Fr\'echet semi-norm of the symbol of the double commutator is of order $\frac{1}{R^2}$. By
Proposition \ref{cald}, the same statement follows for the operator-norm of the associated operator, thus there is a
constant $C>0$ such that
\begin{equation}\label{pers5}
\| [\rho_{y,R},[T_\ep, \rho_{y,R}]]\|_\infty \leq \frac{C}{R^2}
\end{equation}
By the Cauchy-Schwarz inequality, we get by inserting \eqref{pers3} and \eqref{pers5} in \eqref{pers1}
\begin{equation}\label{pers7}
\skpd{H_\ep \phi}{\rho^2_{y,R} \phi} \geq
\sum_{x\in\disk} \Lambda_R(x,H_\ep) |\rho_{y,R}\phi(x)|^2 - \frac{C}{R^2}\sum_{x\in B_y(R)}
|\phi(x)|^2\; .
\end{equation}
We remark that by setting $z=\frac{y-x}{R}$
\begin{equation}\label{pers8}
\int_{\R^d} \rho_{y,R}^2(x) \, dy = R^d \int_{\R^d} \rho^2 (z)\, dz = R^d
\end{equation}
and
\begin{equation}\label{pers9}
\int_{\R^d} \id_{\{|x-y|<R\} }\,dy = R^d \int_{\R^d}\id_{\{|z|<1\} }\,dz = C R^d\; .
\end{equation}
Thus integration of the left hand side of \eqref{pers7} with respect to $y$ yields by \eqref{pers8}
\begin{equation}\label{pers10}
\int_{\R^d}\skpd{H_\ep \phi}{\rho^2_{y,R} \phi}\, dy =
\skpd{H_\ep \phi}{\int_{\R^d}\rho^2_{y,R}\,dy \phi} =
R^d \skpd{H_\ep \phi}{\phi}\; .
\end{equation}
If we integrate the right hand side of \eqref{pers7} with respect to $y$ and use \eqref{pers9}, we get
\begin{multline}\label{pers11}
\int_{\R^d} \left(\sum_{x\in\disk} \Lambda_R(x,H_\ep) \rho_{y,R}^2(x) \phi^2(x) -
\frac{C}{R^2}\sum_{x\in \disk} \id_{\{|x-y|<R\}}|\phi(x)|^2\right)\, dy \\
= R^d\left( \sum_{x\in\disk} \Lambda_R(x,H_\ep) \phi^2(x) -
\frac{C'}{R^2}\sum_{x\in \disk} |\phi(x)|^2\right)\; .
\end{multline}
The Integration of both sides of \eqref{pers7} with respect to $y$ and division by $R^d$ gives by \eqref{pers10} and
\eqref{pers11}
\begin{equation}\label{pers12}
\skpd{H_\ep \phi}{\phi} \geq \sum_{x\in\disk} \left( \Lambda_R(x, H_\ep) - \frac{C}{R^2}\right)
|\phi(x)|^2\; .
\end{equation}
By choosing for $\delta >0$ the radius $R_\delta = \sqrt{\frac{C}{\delta}}$, the
statement of Lemma \ref{perslem1} follows for all $R>R_\delta$ by \eqref{pers12}.
\end{proof}

The family $\Lambda_R(x,H_\ep)$ describes the lowest eigenvalue of the Dirichlet problem with respect to
the ball $B_x(R)$. The next lemma relates this family with $\Sigma (H_\ep)$.

\begin{Lem}\label{perslem2}
Let $\Lambda_R(x,H_\ep)$ and $\Sigma (H_\ep)$ defined in \eqref{pers13} and \eqref{pers14} respectively, then
\begin{equation}\label{pers15}
\Sigma (H_\ep) = \lim_{R\to +\infty} \liminf_{|x|\to \infty} \Lambda_R(x,H_\ep)\; .
\end{equation}
\end{Lem}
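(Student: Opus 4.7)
The plan is to prove two inequalities. First observe that $\Lambda_R(x,H_\ep)$ is monotonically non-increasing in $R$ (since enlarging the ball enlarges the class of admissible test functions), so $\liminf_{|x|\to\infty}\Lambda_R(x,H_\ep)$ is non-increasing in $R$ and the limit as $R\to\infty$ exists in $[-\infty,+\infty)$ (bounded below since $T_\ep$ is bounded and $V_\ep$ is bounded below).

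For the inequality $\Sigma(H_\ep)\ge \lim_{R\to\infty}\liminf_{|x|\to\infty}\Lambda_R(x,H_\ep)$, I would fix $\delta>0$ and choose $R>R_\delta$ as in Lemma \ref{perslem1}. For any finite $K\subset\disk$ let $M$ be large enough that $K\subset\{|y|\le M\}$; then for every $\phi\in c_0(\disk\setminus K_M)$ the estimate of Lemma \ref{perslem1} gives
\[
\skpd{H_\ep\phi}{\phi}\ \ge\ \Bigl(\inf_{|x|>M}\Lambda_R(x,H_\ep)-\delta\Bigr)\|\phi\|_{\ell^2}^2\, .
\]
Dividing by $\|\phi\|_{\ell^2}^2$, taking the infimum over $\phi$, then the supremum over $K$ (equivalently $M\to\infty$), yields $\Sigma(H_\ep)\ge \liminf_{|x|\to\infty}\Lambda_R(x,H_\ep)-\delta$. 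Finally let $R\to\infty$ and then $\delta\to 0$.

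For the reverse inequality $\Sigma(H_\ep)\le \lim_{R\to\infty}\liminf_{|x|\to\infty}\Lambda_R(x,H_\ep)$, I would argue by construction of almost-minimizers. Fix $R>0$ and $\delta>0$, and select a sequence $(x_n)\subset\disk$ with $|x_n|\to\infty$ such that $\Lambda_R(x_n,H_\ep)\to\liminf_{|x|\to\infty}\Lambda_R(x,H_\ep)$. By the definition \eqref{pers13} of $\Lambda_R(x_n,H_\ep)$, for each $n$ pick $\phi_n\in c_0(B_{x_n}(R))$ with $\skpd{H_\ep\phi_n}{\phi_n}/\|\phi_n\|_{\ell^2}^2\le \Lambda_R(x_n,H_\ep)+\delta$. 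Since any fixed finite set $K$ is contained in some bounded region, $B_{x_n}(R)\cap K=\emptyset$ for all sufficiently large $n$, hence $\phi_n\in c_0(\disk\setminus K)$ eventually. The definition \eqref{pers14} of $\Sigma(H_\ep)$ therefore gives
\[
\inf\Bigl\{\tfrac{\skpd{H_\ep\phi}{\phi}}{\|\phi\|_{\ell^2}^2}:\phi\in c_0(\disk\setminus K)\Bigr\}\ \le\ \Lambda_R(x_n,H_\ep)+\delta
\]
for $n$ large, and letting $n\to\infty$, then taking sup over $K$, then $R\to\infty$, and finally $\delta\to 0$ completes the argument.

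The main obstacle is really only the first inequality, where one must exploit the IMS-type localization of Lemma \ref{perslem1}; the second direction is a straightforward direct construction. Since the pointwise bound in Lemma \ref{perslem1} is already available with the operator-norm control of the double commutator coming from the pseudo-differential calculus (Proposition \ref{cald} and Lemma \ref{kommut}), no additional microlocal work is needed at this stage.
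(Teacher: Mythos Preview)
Your proof is correct and follows essentially the same approach as the paper: the direction $\Sigma(H_\ep)\ge\lim_R\liminf_{|x|}\Lambda_R$ uses the IMS-type estimate of Lemma \ref{perslem1} exactly as the paper does (its Step~2), while the reverse inequality is the elementary observation that test functions supported in $B_x(R)$ with $|x|$ large lie in $c_0(\disk\setminus K)$ (the paper's Step~1), which you phrase via near-minimizers along a sequence. The only minor point is the undefined notation $K_M$; from context you mean the finite lattice set $\{|y|\le M\}$, and since such sets are admissible in the supremum defining $\Sigma(H_\ep)$ your conclusion goes through.
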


\begin{proof}[Proof of Lemma \ref{perslem2}]

We split the proof in two parts showing the two fundamental inequalities.\\

{\sl Step 1:} Estimate from above
\begin{equation}\label{pers16}
\Sigma (H_\ep) \leq \lim_{R\to +\infty} \liminf_{|x|\to \infty} \Lambda_R(x,H_\ep)
\end{equation}

Let $K\subset\disk$ compact and $R>0$ fixed. Then $B_x(R) \subset \disk\setminus K$ for $|x|$ large enough and thus
\[ \inf\left\{ \frac{\skpd{H_\ep \phi}{\phi}}{\|\phi\|_{\ell^2}^2}\, ;\, \phi\in c_0\left(\disk\setminus K\right)\right\}  \leq
\inf \left\{
\frac{\skpd{H_\ep \phi}{\phi}}{\|\phi\|_{\ell^2}^2}\, ;\, \phi\in c_0\left(B_x(R)\right)\right\}
\left(=\Lambda_R(x, H_\ep)\right)\; . \]
This inequality is satisfied for all $|x|$ large enough and the left hand side is
independent of $x$, thus
\[ \inf\left\{ \frac{\skpd{H_\ep \phi}{\phi}}{\|\phi\|_{\ell^2}^2}\, ;\, \phi\in c_0\left(\disk\setminus K\right)\right\}  \leq
\liminf_{|x|\to\infty} \Lambda_R(x,H_\ep) \; . \]
The left hand side of this inequality is independent of $R$ and the right
hand side understood as a function in $R$ is monotonically decreasing and bounded from below, thus
the limit $R\to\infty$ is well defined and
\[ \inf\left\{ \frac{\skpd{H_\ep \phi}{\phi}}{\|\phi\|_{\ell^2}^2}\, ;\, \phi\in c_0\left(\disk\setminus K\right)\right\}  \leq
\lim_{R\to +\infty}\liminf_{|x|\to\infty} \Lambda_R(x,H_\ep) \; . \]
Now the right hand side is independent of the choice of $K$, thus we can take the supremum over all compact sets
$K\subset \disk$ and by the definition of $\Sigma (H_\ep)$, this shows \eqref{pers16}.\\

{\sl Step 2:}  Estimate from below
\begin{equation}\label{pers17}
\Sigma (H_\ep) \geq \lim_{R\to +\infty} \liminf_{|x|\to \infty} \Lambda_R(y,H_\ep)\; .
\end{equation}

By the definition of $\liminf$, it follows that for all $\delta>0$ and all $R>0$ there exists an $R_0$ such that
for all $|x|>R_0$
\[ \Lambda_R(x,H_\ep) \geq \liminf_{|x|\to\infty} \Lambda_R(x,H_\ep) - \delta \; . \]
It follows immediately that for all $\phi\in c_0\left(\disk\setminus \overline{B_0(R_0)}\right)$
\begin{equation}\label{pers18}
\sum_{x\in\disk} \Lambda_R(x,H_\ep) |\phi(x)|^2 \geq \left(\liminf_{|x|\to\infty}\Lambda_R(x,H_\ep) - \delta\right)
\|\phi\|^2_{\ell^2}\; .
\end{equation}
By Lemma \ref{perslem1} we know that for all $\delta>0$ and $\phi\in c_0\left(\disk\right)$ there exists $R_\delta$ such
that for all $R>R_\delta$
\begin{equation}\label{pers19}
\skpd{H_\ep\phi}{\phi} \geq \sum_{x\in\disk} \left(\Lambda_R(x,H_\ep) - \delta\right) |\phi(x)|^2 \; .
\end{equation}
Inserting \eqref{pers19} in \eqref{pers18} it follows that for all $\delta>0$ there exists $R_\delta$ such
that for all $R>R_\delta$ there exists $R_0$ such that for all $\phi\in c_0\left(\disk\setminus \overline{B_0(R_0)}\right)$
\begin{equation}\label{pers20}
\frac{\skpd{H_\ep\phi}{\phi}}{\|\phi\|_{\ell^2}^2} \geq \liminf_{|x|\to\infty} \Lambda_R(x,H_\ep) - 2 \delta \, .
\end{equation}
By the definition of $\Sigma (H_\ep)$ it follows directly that
\begin{equation}\label{pers21}
\Sigma (H_\ep) \geq \inf\left\{ \frac{\skpd{H_\ep \phi}{\phi}}{\|\phi\|_{\ell^2}^2}\, |\,
\phi\in c_0\left(\disk\setminus \overline{B_0(R_0)} \right)\right\}\, .
\end{equation}
The equation \eqref{pers19} holds for all $\phi \in c_0\left(\disk\setminus \overline{B_0(R_0)}\right)$, thus
we can take on the left hand side the infimum over all these functions, which together with \eqref{pers21} yields
\begin{equation}\label{pers22}
\Sigma (H_\ep) \geq \liminf_{|x|\to\infty} \Lambda_R(x,H_\ep) - 2\delta\; .
\end{equation}
The left hand side is independent of $R$ and since the relation holds for all $R>R_\delta$, it is possible to take the
limit $R\to\infty$, which yields for all $\delta>0$
\[ \Sigma (H_\ep) \geq \lim_{R\to +\infty} \liminf_{|x|\to\infty} \Lambda_R(x,H_\ep) - 2\delta\; .  \]
Thus in the limit $\delta$ the estimate \eqref{pers17} follows.
\end{proof}

\begin{proof}[Proof of Theorem \ref{perssontheo}]

We discuss the cases $\Sigma(H_\ep) = \infty$ and $\Sigma (H_\ep)< \infty$ separately.\\

{\sl Case 1:} $\Sigma (H_\ep)<\infty$:\\
As in the preceding proof, we conclude the equality by showing that both inequalities hold.\\

{\sl Step 1:} Estimate from below
\begin{equation}\label{pers23}
\inf \sigma_{ess} (H_\ep) \geq \Sigma (H_\ep)
\end{equation}

As a function of $R$, the term $\liminf_{|x|\to\infty} \Lambda_R(x,H_\ep)$ is monotonically decreasing, thus it follows
by Lemma \ref{perslem2} that for fixed $R>0$
\[ \Sigma (H_\ep) \leq  \liminf_{|x|\to\infty} \Lambda_R(x,H_\ep) \]
and thus for all $\delta>0$ there exists $a_\delta$ such that for all $x\in\disk$ with $|x|>a_\delta$
\begin{equation}\label{pers24}
\Sigma (H_\ep) - \frac{\delta}{2} \leq \Lambda_R(x,H_\ep)\; .
\end{equation}
On the other hand denoting by $\sigma (H_\ep)$ the spectrum of $H_\ep$, it is clear by the definition of
$\Lambda_R(x, H_\ep)$ and the Min-Max-principle that
\begin{equation}\label{pers25}
\Lambda_R(x,H_\ep) \geq \inf \sigma (H_\ep)\; .
\end{equation}
Since $H_\ep$ is bounded from below, it follows by \eqref{pers24} and \eqref{pers25} that there exists a constant $C>0$ such that for all $x\in\disk$
\begin{equation}\label{pers26}
\Lambda_R(x,H_\ep) \geq \Sigma (H_\ep) - C \; .
\end{equation}
We choose a function $W\in c_0\left(\disk\right)$ such that
$W(x) \geq C$ for $|x|<a_\delta$ and $W(x) \geq 0$ everywhere. Then for $H_\ep + W$ it follows by Lemma
\ref{perslem1}, \eqref{pers24} and \eqref{pers26} that for $\phi\in c_0\left(\disk\right)$
\begin{align*}
\skpd{(H_\ep + W)\phi}{\phi} &
\geq \sum_{x\in\disk} (W(x) - \Lambda_R(x,H_\ep) - \frac{\delta}{2}) |\phi(x)|^2 \\
&\geq \sum_{|x|\leq a_\delta} (\Sigma (H_\ep) - \frac{\delta}{2})|\phi(x)|^2 +
\sum_{|x|> a_\delta} (W(x) + \Sigma (H_\ep) - \delta)|\phi(x)|^2 \\
&\geq (\Sigma (H_\ep) - \delta) \sum_{x\in\disk}  |\phi(x)|^2\; .
\end{align*}
Thus
\begin{equation}\label{pers27}
\inf\sigma_{ess} (H_\ep + W) \geq \inf \sigma (H_\ep + W) \geq \Sigma (H_\ep) - \delta \; ,
\end{equation}
where the first estimate follows directly by the definition of the spectra.
The perturbation $W$ is compactly supported, thus each
$u\in \ell^2(\disk)$ is mapped by $W$ to a lattice function with compact support,
i.e. which is non-zero only at finitely many lattice points. Thus $W$ is a finite  rank operator and in particular
compact. Using Weyl's theorem (see e.g. Reed-Simon \cite{reed}), it follows that
\[ \sigma_{ess}(H_\ep + W) = \sigma_{ess}(H_\ep) \]
and since \eqref{pers27} holds for all $\delta>0$ the estimate \eqref{pers23} is shown.\\

{\sl Step 2:} Estimate from above
\begin{equation}\label{pers28}
\inf\sigma_{ess}(H_\ep) \leq \Sigma (H_\ep)
\end{equation}

Fix $\mu < \inf\sigma_{ess}(H_\ep)$ and denote by $\Pi_\mu:= \Pi_{(-\infty,\mu]}$ the spectral
projection to
the eigenspace of energies smaller or equal to $\mu$. Since $\mu$ lies below the essential spectrum
and $H_\ep$ is semi-bounded from below, it follows that $\Pi_\mu$ has finite rank. Thus there exists an
orthonormal system of eigenfunctions $\psi_1,\ldots,\psi_n\in \ell^2\left(\disk\right)$ such that
\[ \Pi_\mu = \sum_{j=1}^n \skpd{\,.\,}{\psi_j} \psi_j\]
and  for all $\delta>0$ there exists an $R_\delta$ such that
\[\sum_{|x|>R_\delta}|\psi_j(x)|^2  \leq \delta\; .\]
Therefore (by the Cauchy-Schwarz inequality)
for all $\phi \in c_0\left(\disk\setminus B_0(R_\delta)\right)$
\begin{equation}\label{pers29}
\|\Pi_\mu\phi(x)\|_{\ell^2}^2 = \sum_{j=1}^n |\skpd{\phi}{\psi_j}|^2 \leq \|\phi\|_{\ell^2}^2
\sum_{j=1}^n\sum_{|x|>R_\delta}|\psi_j(x)|^2  \leq \delta \|\phi\|_{\ell^2}^2\; .
\end{equation}
By the definition of $\Pi_\mu$ and since there exists a constant $C>0$ such that $H_\ep\geq -C$, we have
\begin{equation}\label{pers30}
\skpd{H_\ep\phi}{\phi} \geq \mu \skpd{(\id -\Pi_\mu)\phi}{(\id -\Pi_\mu)\phi} -
C \skpd{\Pi_\mu\phi}{\Pi_\mu \phi}\; .
\end{equation}
Therefore
\begin{eqnarray*}
\Sigma (H_\ep) &\geq& \inf\left\{ \frac{\skpd{H_\ep\phi}{\phi}}{\|\phi\|_{\ell^2}^2}\,|\, \phi
\in c_0\left(\disk\setminus B_0(R_\delta)\right)\right\} \\
&\geq& \inf \left\{ \mu \frac{\|(\id-\Pi_\mu)\phi\|^2}{\|\phi\|_{\ell^2}^2} - C\frac{\|\Pi_\mu\phi\|^2}{\|\phi\|^2}\,|\, \phi
\in c_0\left(\disk\setminus B_0(R_\delta)\right)\right\} \\
&=& \inf \left\{ \mu - (C+\mu)\frac{\|\Pi_\mu\phi\|^2}{\|\phi\|_{\ell^2}^2} \,|\, \phi
\in c_0\left(\disk\setminus B_0(R_\delta)\right)\right\}
\end{eqnarray*}
and by \eqref{pers29}
\[ \Sigma (H_\ep) \geq \mu - (C+\mu) \delta \; . \]
The left hand side is independent of $\delta$, thus for $\delta\to 0$ we get
\[ \Sigma (H_\ep) \geq \mu \]
for any $\mu<\inf\sigma_{ess} (H_\ep)$ and thus in the limit $\mu\to\inf\sigma_{ess} (H_\ep)$ the
estimate \eqref{pers28} follows and thus Theorem \ref{perssontheo} is proven.\\

{\sl Case 2:} $\Sigma (H_\ep) =\infty$:\\
By Lemma \ref{perslem2} it follows at once that $\lim_{|x|\to\infty} \Lambda_R(x,H_\ep) = \infty$, because
$\Lambda_R(x,H_\ep)$ is monotonically decreasing with respect to $R$. Thus for all $M>0$ there
exists a $a_M$ such that for all $x\in\disk$ with $|x|>a_M$ the estimate
$\Lambda_R(x,H_\ep) \geq M$ holds. On the other hand by \eqref{pers25} and the semi-boundedness of $H_\ep$ it follows that
there exists a constant $C>0$ such that
\[ \Lambda_R(x,H_\ep) \geq - C\, , \qquad \text{for all}\quad x\in\disk \; . \]
We can choose a function $W\in c_0\left(\disk\right)$ such that $W(x)\geq C+M$ for $|x|< a_M$ and $W(x)\geq 0$ everywhere.
Then
\[ \skpd{ (H_\ep + W)\phi}{\phi} \geq \skpd{ (W + \Lambda_R(.,H_\ep) - \frac{\delta}{2})\phi}{\phi} \geq
\left( M - \frac{\delta}{2}\right) \|\phi\|^2_{\ell^2} \]
and thus for all $M>0$ there exists a function $W\in c_0\left(\disk\right)$ such that
\[ \sigma_{ess} (H_\ep + W) \geq \sigma (H_\ep + W) \geq M\; . \]
As in the case $\Sigma (H_\ep)<\infty$ we have $\sigma (H_\ep + W) = \sigma (H_\ep)$ and therefore
$\sigma_{ess}(H_\ep)\geq M$ for all $M>0$ and thus $\sigma_{ess}(H_\ep) = \infty$.
\end{proof}

\end{appendix}

{\sl Acknowledgements.} The authors thank B. Helffer for many valuable discussions and remarks on the
subject of this paper. M.K. thanks F. Nier for bringing G\'erard-Nier \cite{gerard} to his attention.

\end{document}